\documentclass[12pt,letterpaper]{article}

%
%
\usepackage{amsmath,amssymb,amsfonts,amsthm,bbm}
\usepackage{mathtools}   

\usepackage{enumitem}   

\usepackage{authblk} 

\usepackage[titletoc,title]{appendix}  


\usepackage{kpfonts}
\usepackage[T1]{fontenc}

\usepackage[table,xcdraw,dvipsnames]{xcolor}   
\usepackage{hyperref}
\newcommand\myshade{85}
\colorlet{mylinkcolor}{YellowOrange}
\colorlet{mycitecolor}{Aquamarine}
\colorlet{myurlcolor}{violet}

\hypersetup{
  linkcolor  = mylinkcolor!\myshade!black,
  citecolor  = mycitecolor!\myshade!black,
  urlcolor   = myurlcolor!\myshade!black,
  colorlinks = true,
}

\usepackage{caption}
\usepackage{subcaption}

\usepackage{multirow}
\usepackage{graphicx}

\usepackage{algorithm}
\usepackage{algpseudocode}

%
%
%
%
%
%

\renewcommand{\hat}{\widehat}
\renewcommand{\tilde}{\widetilde}

%
%
\newcommand{\bfm}[1]{\ensuremath{\boldsymbol{#1}}} 

\def\bzero{\bfm 0}

\def\bbone{\mathbbm{1}} 

\def\ba{\bfm a}   \def\bA{\bfm A}  
   \def\bB{\bfm B}  
   \def\bC{\bfm C}  
   \def\bD{\bfm D}  
\def\be{\bfm e}   \def\bE{\bfm E}  \def\EE{\mathbb{E}}
\def\bff{\bfm f}  \def\bF{\bfm F}  
   \def\bG{\bfm G}  
   \def\bH{\bfm H}  
   \def\bI{\bfm I}

   \def\bM{\bfm M}  \def\MM{\mathbb{M}}

   \def\bP{\bfm P}  
   \def\bQ{\bfm Q}  
   \def\bR{\bfm R}  \def\RR{\mathbb{R}}
\def\bs{\bfm s}     
     
\def\bu{\bfm u}   \def\bU{\bfm U}  
\def\bv{\bfm v}   \def\bV{\bfm V}  
\def\bw{\bfm w}     
\def\bx{\bfm x}   \def\bX{\bfm X}  
\def\by{\bfm y}   \def\bY{\bfm Y}  
\def\bz{\bfm z}   \def\bZ{\bfm Z}

\def\calE{{\cal  E}} 
\def\calF{{\cal  F}}

\def\calN{{\cal  N}}

%
%
\newcommand{\bfsym}[1]{\ensuremath{\boldsymbol{#1}}}

 \def\balpha{\bfsym \alpha}
 \def\bbeta{\bfsym \beta}
 \def\bgamma{\bfsym \gamma}             \def\bGamma{\bfsym \Gamma}
 \def\bdelta{\bfsym {\delta}}           \def\bDelta {\bfsym {\Delta}}
 \def\bfeta{\bfsym {\eta}}              
 \def\bmu{\bfsym {\mu}}                 
 
 \def\btheta{\bfsym {\theta}}           \def\bTheta {\bfsym {\Theta}}
 \def\beps{\bfsym \varepsilon}          
 \def\bepsilon{\bfsym \varepsilon}
              \def\bSigma{\bfsym \Sigma}
 \def\blambda {\bfsym {\lambda}}        \def\bLambda {\bfsym {\Lambda}}

          \def\bPhi{\bfsym {\Phi}}

\def\bLam {\bLambda}
\def\one {\mathbbm{1}}
 %
 %

  \def\mE{\mathcal{E}}
  \def\ve{\varepsilon}
  \def\mF{\mathcal{F}}

\def\mC{\mathcal C}
\def\mV{\mathcal V}

 %
 %
\providecommand{\abs}[1]{\left\lvert#1\right\rvert}
\providecommand{\norm}[1]{\left\lVert#1\right\rVert}

\providecommand{\paran}[1]{\left( #1 \right)}
\providecommand{\brackets}[1]{\left[ #1 \right]}
\providecommand{\braces}[1]{\left\{ #1 \right\}}

\providecommand{\defeq}{\triangleq}

\usepackage{mathtools}
\DeclarePairedDelimiterX{\infdivx}[2]{(}{)}{%
  #1 \; \delimsize\| \; #2%
}

%
%

\DeclareMathOperator{\cov}{cov}

\DeclareMathOperator{\diag}{diag}
\newcommand{\E}[1]{{\mathbb{E}} \left[ #1 \right]}

\DeclareMathOperator{\Var}{Var}
\DeclareMathOperator{\var}{var}

\DeclareMathOperator{\Tr}{Tr}

%
%

\newtheorem{definition}{Definition}[section]
\newtheorem{assumption}[definition]{Assumption}
\newtheorem{lemma}[definition]{Lemma}

\newtheorem{theorem}[definition]{Theorem}
\newtheorem{corollary}[definition]{Corollary}

\theoremstyle{definition}

%
%
\newcommand{\bigO}[1]{ \mathcal{O} \left( #1 \right) }

\newcommand{\smlo}[1]{{\rm o} \left( #1 \right) }

\newcommand{\Op}[1]{{\mathcal{O}_p} \left( #1 \right) }
\newcommand{\op}[1]{{\rm o_p} \left( #1 \right) }

%
%

\definecolor{royalpurple}{rgb}{0.47, 0.32, 0.66}

 %
 %

\def\beq{\begin{equation}}
\def\eeq{\end{equation}}

\def\bet{\begin{theorem}}
\def\eet{\end{theorem}}

\def\bel{\begin{lemma}}
\def\eel{\end{lemma}}

\def\eps{\varepsilon}
\def\blam{\blambda}

\def\vec{\mbox{vec}}
\def\wt{\widetilde}

\def\wh{\widehat}

\def\zero{\mathbf{0}}
\def\tr{\mbox{tr}}


\usepackage[framemethod=TikZ]{mdframed} 

\usepackage{fancybox}

%
%
\usepackage[top=1in, bottom=1in, left=1in, right=1in]{geometry}


\usepackage{setspace}
\setstretch{1.35}

\usepackage{etoolbox}
\BeforeBeginEnvironment{equation*}{\begin{singlespace}\vspace*{-\baselineskip}}
	\AfterEndEnvironment{equation*}{\end{singlespace}\noindent\ignorespaces}
\BeforeBeginEnvironment{equation}{\begin{singlespace}\vspace*{-\baselineskip}}
	\AfterEndEnvironment{equation}{\end{singlespace}\noindent\ignorespaces}
\BeforeBeginEnvironment{align}{\begin{singlespace}\vspace*{-\baselineskip}}
	\AfterEndEnvironment{align}{\end{singlespace}\noindent\ignorespaces}
\BeforeBeginEnvironment{align*}{\begin{singlespace}\vspace*{-\baselineskip}}
	\AfterEndEnvironment{align*}{\end{singlespace}\noindent\ignorespaces}
\BeforeBeginEnvironment{eqnarray}{\begin{singlespace}\vspace*{-\baselineskip}}
	\AfterEndEnvironment{eqnarray}{\end{singlespace}\noindent\ignorespaces}
\BeforeBeginEnvironment{eqnarray*}{\begin{singlespace}\vspace*{-\baselineskip}}
	\AfterEndEnvironment{eqnarray*}{\end{singlespace}\noindent\ignorespaces}

%
%
\usepackage[authoryear]{natbib}  
\newcommand{\mybibsty}{chicago}
\newcommand{\mybib}{main}

\usepackage{graphicx}
\graphicspath{ {figs/} }

%
%

%
%
\begin{document}
%
%

\newcommand{\blind}{0}

\if0\blind
{
	\title{\bf Community Network Auto-Regression for High-Dimensional Time Series}
	\author[1]{Elynn Y. Chen \thanks{Elynn Chen's research is supported in part by NSF Grant DMS-1803241.}}
	\author[2]{Jianqing Fan \thanks{Jianqing Fan's research research is supported in part by NSF Grants DMS-1662139 and DMS-DMS-1712591 and NIH grants 5R01-GM072611-12. }}
	\author[3]{Xuening Zhu \thanks{Xuening Zhu is the corresponding author. Her research is supported in part by the National Natural Science Foundation of China (nos. 11901105, 71991472, U1811461), the Shanghai Sailing Program for Youth Science and Technology Excellence (19YF1402700), and the Fudan-Xinzailing Joint Research Centre for Big Data, School of Data Science, Fudan University.
Email: \url{xueningzhu@fudan.edu.cn}.}}
    \affil[1]{University of California, Berkeley}
	\affil[2]{Princeton University}
    \affil[3]{Fudan University}
	\date{\today}
	\maketitle
} \fi

\if1\blind
{
	\bigskip
	\bigskip
	\bigskip
	\title{\bf ...}
	\date{\vspace{-5ex}}
	\maketitle
	\medskip
} \fi

\begin{abstract}
\singlespacing
Modeling responses on the nodes of a large-scale network is an important task that arises commonly in practice.
This paper proposes a community network vector autoregressive (CNAR) model, which utilizes the network structure to characterize the dependence and intra-community homogeneity of the high dimensional time series.
The CNAR model greatly increases the flexibility and generality of the network vector autoregressive \cite[NAR]{zhu2017network} model by allowing heterogeneous network effects across different network communities.
In addition, the non-community-related latent factors are included to account for unknown cross-sectional dependence.
The number of network communities can diverge as the network expands, which leads to estimating a diverging number of model parameters.
We obtain a set of stationary conditions and develop an efficient two-step weighted least-squares estimator.
The consistency and asymptotic normality properties of the estimators are established.
The theoretical results show that the two-step estimator improves the one-step estimator by an order of magnitude when the error admits a factor structure.
The advantages of the CNAR model are further illustrated on a variety of synthetic and real datasets.

\vspace{2ex}

\noindent {\bf KEY WORDS: } Network autoregression; Community structure; Common latent factors; High dimensional time series; VAR model.

\end{abstract}

%
%

\section{Introduction}

Consider a large scale network with $N$ nodes, which are indexed from $i=1,\cdots, N$.
To characterize the network relationship among the nodes,
we use an adjacency matrix $\bA = (a_{ij}) \in \RR^{N\times N}$,
where $a_{ij} = 1$ if the $i$-th node is connected the $j$-th node,
otherwise $a_{ij} = 0$.
For the $i$-th node at time $t$, we collect a continuous type response $y_{it}\in\RR$.
At time $t$, all the responses $y_{it}$ constitute
a high-dimensional vector
$\by_t = (y_{1t}, \cdots, y_{Nt})^\top\in\RR^N$.
We aim to model the temporal dynamics of $\by_t$.  A simple VAR(1) model has $N^2$ autoregressive coefficient matrix and the community structure can be used to reduce the dimensionality.

A popular method of modeling the dynamics of $\by_t$ with underlying network structure is the network autoregression (NAR, \cite{zhu2017network}) model:
\begin{equation} \label{eqn:nar}
\by_t = \beta_1 \tilde\bA \by_{t-1} + \beta_2 \by_{t-1}  + \bZ_t \bgamma + \bepsilon_t,
\end{equation}
where $\tilde\bA = (\tilde a_{ij})$ is the row-normalized adjacency matrix,
with $\tilde a_{ij} = a_{ij}/n_i$ and $n_i = \sum_{j}a_{ij}$,
$\bZ_t\in \RR^p$ is a vector of auxiliary covariates, and
$\beta_1, \beta_2\in\RR$ and $\bgamma\in\RR^p$ are unknown parameters.
The model directly embeds the network adjacency matrix $\bA$ and provides easy interpretations.
Particularly, the network effect $\beta_1$ reflects the influence of connected nodes through their averages at time $t-1$, and the momentum effect $\beta_2$ quantifies the autoregressive effects from the same node.
The NAR model and its variants have been applied to a wide range of fields,
such as social behavior studies \citep{sojourner2013identification,liu2017peer,zhu2018multivariate},
financial risk management \citep{hardle2016tenet,zou2017covariance},
spatial data modeling \citep{lee2009spatial,shi2017spatial},
among others.

Despite its simple form and easy interpretation, the NAR model \eqref{eqn:nar} has only two parameters and may suffer from the risk of model misspecification.
To address this issue, some flexible extensions to the NAR model have been considered in the literature.
For instance,  \cite{dou2016generalized} and
\cite{zhu2019portal}
implement node-specific network effect $\beta_1$ to characterize different inferential powers of different nodes;
\cite{sun2016functional,wang2016nonparametric,sun2018estimation} investigate the nonlinear and nonparametric extensions;
and \cite{sojourner2013identification,liu2017peer,zhu2018multivariate} consider the multivariate responses.
However, previous researches do not fully address two important issues that arise commonly in real applications, namely, heterogeneous network effect and unknown cross-sectional dependence.

First, to characterize the heterogeneous network effect, we consider a community structure among network nodes.
In literature, the community structure is commonly modelled in social networks \citep{rohe2011spectral,zhao2012consistency}.
Individuals in different communities tend to behave quite differently.
Statistically, although the general dependence structure for responses is widely studied,
however, the network structure information is not fully used.
Figure \ref{adj_mat} shows that, in the Chinese A share stock market, the stocks held by the same shareholders may share denser relationships than other stocks.
This motivates us to incorporate the network structure in modeling the dynamics of the individuals.
Second, we consider a dynamic factor structure to allow for a more flexible form of cross-sectional dependence structure
other than those that can be explained by the network structure or observable covariates.
Notably, factor structure \citep{anderson1956statistical} arises commonly in a wide range of applications including economics \citep{stock2011dynamic,bai2008large} and biology \citep{desai2012cross}.
See Chapters 10 and 11 of \cite{fan2020statistical} for the applications of factor models and the references therein.
Previous literature related to NAR makes use of only the adjacency matrix $\bA$, making it difficult to incorporate heterogeneity directly in the model.
In addition, the growing dimensionality $N$ makes it more challenging to deal with the non-diagonal noise covariance matrix $\bSigma_\eps$.

\begin{figure}[htpb!]
    \centering
    \includegraphics[width=0.6\textwidth]{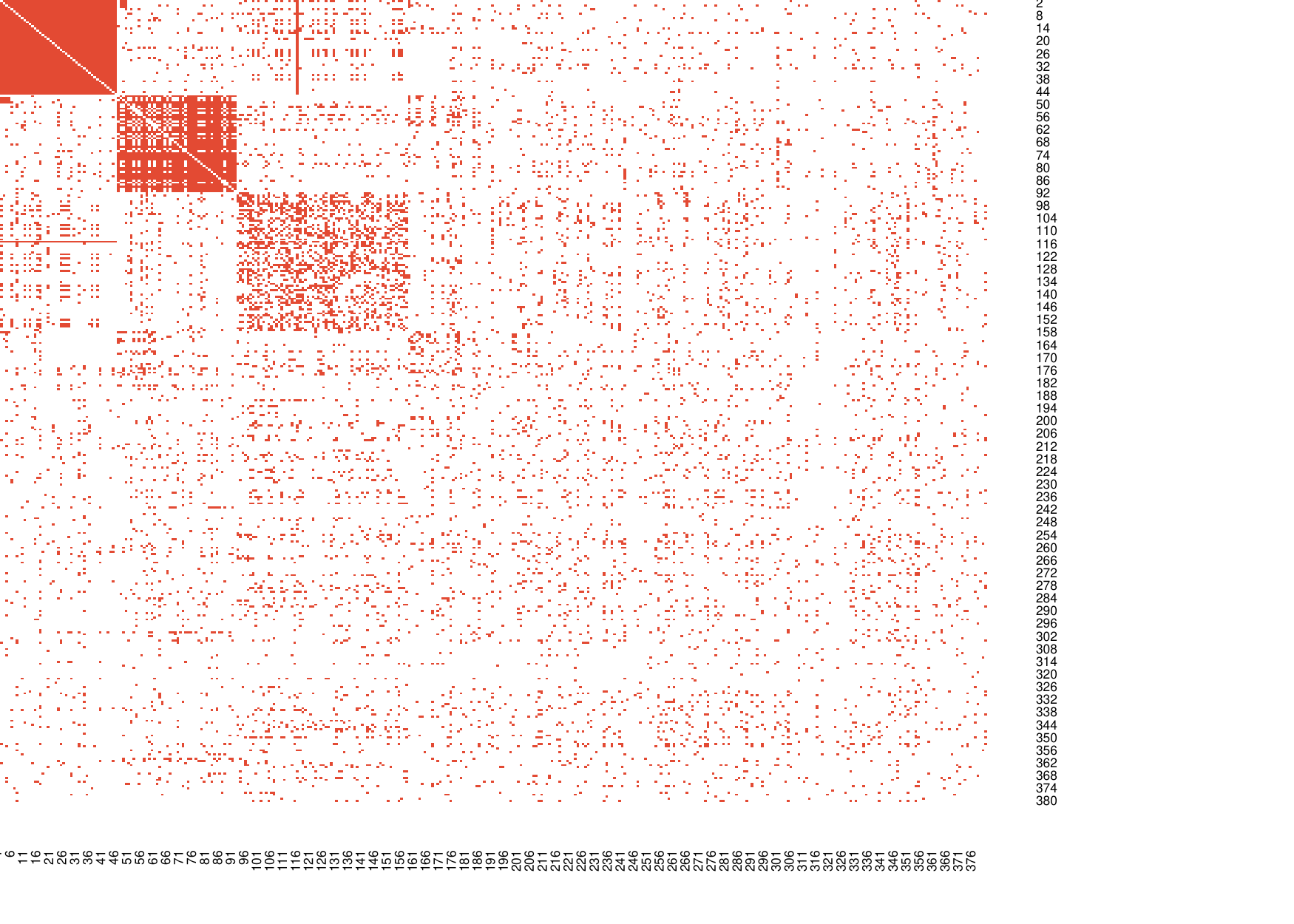}
    \caption{\small Visualization of the adjacency matrix among stocks in Chinese A share market with top 500 market values, where the $(i,j)$th edge is red when the stocks $i$ and $j$ have a common top shareholder.
    }\label{adj_mat}
\end{figure}

In this article, we propose a novel community-augmented network autoregression (CNAR) model that allows for both the heterogeneous network effects across network (i.e., the network coefficient $\beta_1$ are different across different communities) and the unknown dependences in responses $\by_t$.
Specifically, we exploit the network structure deeper through the Stochastic Block Model (SBM), which is widely employed as a canonical model to study clustering and community detection \citep{lei2015consistency,abbe2017entrywise} .
Heterogeneous network effects are enabled by using different network effects for nodes within different communities.
The community structure can be known or unknown as priori.
When it is unknown, we devise an equivalent reformulation of our original CNAR model to circumvent an unnecessary clustering step in our setting, which often incurs extra errors when the ultimate goal is instead to recover the community membership matrix.

At the same time, we allow for a latent cross-sectional dependence of $\by_t$ by modeling the noise $\beps_t$ with a linear factor structure.
Under such setting, the $\ell_2$ norm of the error covariance matrix $\norm{\bSigma_{\eps}}^2 \asymp \Op{N}$ in the genuine presence of the prevasive factors, which renders the ordinary least square estimators converge only at the rate $\Op{1/\sqrt{T}}$ despite of $NT$ observations.
We further propose an efficient two-step estimation procedure and discover surprisingly that it improves the convergence rate to $\Op{1/\sqrt{NT}}$.
The convergence rates and the asymptotic normality of the proposed estimators are presented under the generality that the number of communities $K$ is also growing with $N$.
Furthermore, the convergence of the high dimensional covariance matrix is also established under our modeling framework.
Extensive Monte Carlo simulations is carried out to evaluate the finite sample performance of estimators.
Numerical performances of CNAR 1st-step, 2nd-step estimators and NAR estimators under different simulation models
are compared and evaluated.
Lastly, for illustration propose, we analyze the stock return data from Chinese A stock market.
Empirical analysis on both synthetic and real data set confirms the advantage of the proposed modeling approach.

\subsection{Notation and organization}

Let lowercase letter $x$, boldface letter $\bx$, and  capital letter $\bX$ represent scalar, vector, and matrix, respectively.
We let $C, c, C_0, c_0, \ldots$ denote generic constants, where the uppercase and lowercase letters represent large and small constants, respectively.
The actual values of these generic constants may vary in specific scenarios.
For any matrix $\bX$,  we use $\bx_{i \cdot}$, $\bx_{\cdot j}$, and $x_{ij}$ to denote its $i$-th row, $j$-th column, and $ij$-th entry, respectively.
All vectors are column vectors and row vectors are written as $\bx^\top$ for any vector $\bx$.
Notation $\bbone$ represents a vector or a matrix of all ones of proper size.
In addition, $\be_k$ denotes a unit vector with the $k$th element being 1 and others being 0.

Let $\sigma_i(\bX)$ be the $i$-th largest singular value of $\bX$.
For an arbitrary square matrix $\bX\in\RR^{n\times n}$, let $\lambda_i(\bX)$ be the $i$th eigenvalue of $\bX$ satisfying $|\lambda_1(\bX)|\ge |\lambda_2(\bX)|\ge
\cdots \ge |\lambda_n(\bX)|$.
Specifically, denote $\rho(\bX) = |\lambda_{1}(\bX)|$ as the spectral radius for a square matrix $\bX$. For a symmetric $\bX$, we use $\lambda_{\min}(\bX)$ and $\lambda_{\max}(\bX)$ to denote its smallest and largest eigenvalues {for convenience.}
The following matrix norms are used: maximum norm $\norm{\bX}_{max} \defeq \underset{ij}{\max} \; \lvert x_{ij} \rvert$, $\ell_1$-norm $\norm{\bX}_1 \defeq \underset{j}{\max} \sum_{i} \lvert x_{ij} \rvert$, $\ell_{\infty}$-norm $\norm{\bX}_{\infty} \defeq \underset{i}{\max} \; \sum_{j} \lvert x_{ij} \rvert$,
$\ell_2$-norm $\norm{\bX}$ or $\norm{\bX}_2 \defeq \sigma_1(\bX)$, $(2,\ell_\infty)$-norm $\norm{\bX}_{(2,\ell_\infty)} \defeq \underset{\norm{\ba}_2=1}{\max} \norm{\bX\ba}_\infty = \underset{i}{\max} \norm{\bx_{i \cdot}}_2$,
Frobenius norm $\|\bX\|_F = \Tr(\bX^\top\bX)$,
$\Sigma$-norm $\|\bX\|_\Sigma = N^{-1/2}\|\Sigma^{-1/2}\bX\Sigma^{-1/2}\|_F$.
Here we use $\Tr \paran{\bX}$ to denote the trace of a square matrix $\bX$.


The rest of this paper is organized as follows.
In Section \ref{sec:cnar}, we introduce the model and notations.
In Section \ref{sec:est}, a two step estimation method is presented.
In Section \ref{sec:theory}, we develop the theoretical properties of the estimation.
In Section \ref{sec:simu}, we study the finite sample performance of our estimation via simulation and
Section \ref{sec:appl} provides an empirical study with a stock return dataset.
Section \ref{sec:summ} concludes.
All proofs and technique lemmas are relegate to the supplementary appendix; see Appendix \ref{sec:proof},
\ref{sec:poet-of-hat-eps}.


\section{Network Autoregression with a Community Structure}\label{sec:cnar}

\subsection{Model and notations}\label{sec:model}

We present the model formulation as follows.
First, we assume that the network adjacency matrix $\bA$
is generated by a Stochastic Block Model (SBM) with $N$ nodes and $K$ communities \footnote{This assumption is made largely for a clear model interpretation and theoretical derivation. As shown by its reformulation \eqref{eqn:cnar1} and the empirical examples, the CNAR performances well under more general network structures.}.
The SBM is parameterized by a pair of matrices $\paran{\bTheta, \bQ}$, where $\bTheta \in \MM^{N \times K}$ is the \textit{membership matrix} and $\bQ \in \RR^{K \times K}$ is a symmetric \textit{connectivity matrix}.
For each node $i$, let $k_i \in [K]$ be its community label, such that the $i$-th row of $\bTheta$ is 1 in column $k_i$ and $0$ elsewhere.
Given $\paran{\bTheta, \bQ}$, the adjacency matrix $\bA = \paran{a_{ij}}_{i,j \in [N]}$ is generated as
\begin{equation*}
   a_{ij} = \left \{
   \begin{array}{lll}
   \text{independent } Bernoulli \paran{q_{k_ik_j}}, & \text{if } i < j, \\
   a_{ji}, & \text{if } i > j,\\
   0, & \text{if } i=j.
   \end{array} \right .
\end{equation*}
To model heterogeneity across nodes in the autoregressive patterns,
it is natural to take the above community structure into consideration.
We propose the following community augmented network autoregression (CNAR) model,
\begin{equation} \label{eqn:cnar}
\by_t = \bTheta\bB\bTheta^\top \by_{t-1} + \beta_2 \by_{t-1}
+  \bZ_{t-1}\bgamma +  \bepsilon_t.
\end{equation}
where
$\bB\in\RR^{K\times K}$,
$\beta_2\in\RR$, and
$\bgamma\in\RR^{p}$
are unknown parameters.

The CNAR model (\ref{eqn:cnar}) characterizes the response $\by_t$ by a linear form of the following four components.
The first one is the network component, which characterizes the network effects among different communities.
It shows how the community total (equivalently, average) responses at time $t-1$ impact on the response $\by_t$.
The corresponding parameter $\bB\in\RR^{K\times K}$ is then referred to as {\it community effect.}
The $ij$-th element of $\bB$ represents the effect of $j$-th community to the $i$-th community.
It removes the restrictions of the component $\beta_1 \tilde\bA \by_{t-1}$ in the NAR model \eqref{eqn:nar} to much broader linear dependence (see Remark 1 below) and uses the true community profiles rather than the realized links, which are random under the stochastic block model.
In addition, the asymmetric community effect is allowed because $\bB$ is not restricted to a symmetric matrix.

Next, the {\it momentum effect} $\beta_2$ quantifies how the local node is driven by its historical behavior.
This can be generalized to a diagonal matrix, allowing different nodes to have different momentum parameters.
Lastly, $\bZ_{t-1}\bgamma$ includes covariates before time $t$, which is assumed to be independent with $\{\bepsilon_t\}$.
We further characterize the unknown cross-sectional dependence by assuming a latent factor structure on the noise term:
    \beq \label{eqn:fac_model}
    \bepsilon_t = \bLambda \bff_t + \be_t,
    \eeq
where $\bff_{t}\in\RR^M$ is the unknown latent factors, $\bLambda\in\RR^{N\times M}$ is the corresponding loading matrix, and the innovation term $\be_t\in\RR^N$ follows a multivariate normal distribution $N(\zero,\sigma^2\bI_N)$.
Under such setting, the network nodes are correlated with each other not only through the network relationship, but also through the common driven factors.
A strong factor will have impact on most network nodes through the corresponding factor loadings.
As a result, the factor structure in \eqref{eqn:fac_model} characterizes the cross-sectional dependence at a macro level.
In the following, we discuss the relationship of the proposed model to the existing literature.

{\bf Remark 1.} (Relation to the NAR model).
Note that the unknown parameter matrix $\bB$ specifies the autoregression network effects of $K$ communities.
If $\bTheta \bB \bTheta^\top = \beta_1\wt\bA$, then the CNAR model (\ref{eqn:cnar}) will reduce to the NAR model \eqref{eqn:nar}.
In general,  the above relationship does not hold.
From the aspect of model designing, the CNAR model is built based on the population parameter of $\E{\bA}$,
while the NAR model is constructed conditional on a realized adjacency matrix $\bA$.


{\bf Remark 2.} (Relation to the vector-autoregression and factor models).   CNAR model is a parsimoneous augmented vector autoregression (VAR) model with autoregression coefficient matrix modeled as  $\bTheta\bB\bTheta^\top \by_{t-1} + \beta_2 \bI_N$, augmented by the exogeneous variable $\bZ_{t-1}$ and cross-sectional dependent errors.
{The CNAR model (\ref{eqn:cnar}) could also be comprehended as a special form of factor models.}
Let $\bfeta_t = \bTheta^\top \by_{t}\in\RR^K$
be a projection of $\by_t$ using the membership projection
$\bTheta$, which is the vector of the total response in each community Then we have
\beq
\by_t = \wt\bB\bfeta_{t-1} + \bLambda\bff_{t} +\beta_2\by_{t-1}+\bZ_{t-1}\bgamma + \be_t,\nonumber
\eeq
where  $\wt\bB = \bTheta\bB$.
When the true $\bTheta$ or a consistent estimator $\hat\bTheta$ is available, $\bfeta_{t-1}$ can be treated as a known common factor, which embeds the community structure explicitly in the definition.
As a complement, $\bff_{t}$ consists of unknown factors characterizing
other sources of cross-sectional dependence.

{\bf Remark 3.} (Relation to the {community-leveled} vector autoregression).
By multiplying $\bTheta^\top$ in both sides of the CNAR model, we have
\beq
\bfeta_t =\bD_1\bB\bfeta_{t-1} +\beta_2\bfeta_{t-1}+\bTheta^\top\bZ_{t-1}\bgamma+\bTheta^\top(\bLambda \bff_{t-1} +\be_t),\nonumber
\eeq
where $\bD_1 = \bTheta^\top \bTheta$ is a diagonal matrix, consisting the number of members in each community.
Note that $\bfeta_t$ is a $K$-dimensional vector.
As a result, the CNAR model can be transformed to a {community-leveled} vector autoregression model, where the autoregression coefficients are specified by $\bD_1\bB$ and $\beta_2$.

\subsection{A reformulation of the CNAR model} \label{sec:reform}

Different from the NAR model \eqref{eqn:nar}, the CNAR model \eqref{eqn:cnar} does not directly use the network adjacency matrix $\bA$ or the row-normalized version $\wt \bA$.
Instead, it incorporates the network community structure on a population level.
In practice, we only observe a sampled adjacency matrix $\bA$ and the membership matrix $\bTheta$ is unknown.
Estimation of the network autoregressive coefficients $\bB$ necessitates an estimate of $\bTheta$ at first.
A common method to estimate $\bTheta$ in the SBM is through spectral clustering.
This method typically involves an additional step of clustering, for example by $k$-means, after conducting the eigenvalue decomposition of $\bA$.
In the following, we show that this extra clustering step can be circumvented through an alternative formulation of the CNAR model \eqref{eqn:cnar}.

Recall that, in a SBM, the heuristic of spectral clustering is to relate the eigenvectors of $\bA$ to those of $\bP\,\defeq\,\bTheta \bQ\bTheta^\top$ using the fact that $\E{\bA}=\bP-\diag\paran{\bP}$.
Let $\bP = \bU \bD \bU^\top$ be the eigen-decomposition of $\bP$ with $\bU^\top\bU = \bI_K$ and $\bD \in \RR^{K\times K}$ is a diagonal matrix. It is easy to see that $\bU$ and $\bTheta$ share the same column spaces.
Namely, there exists a rotation matrix $\bR\in\RR^{K\times K}$, such that
\beq
\bTheta = \bU\bR.\nonumber
\eeq
Using this fact, we let $\bB_1 = \bR\bB\bR^\top$ and reparameterize the CNAR model \eqref{eqn:cnar} as
\begin{align} \label{eqn:cnar1}
\by_t 
= \bU \bB_1 \bU^\top\by_{t-1} + \beta_2 \by_{t-1}
+ \bZ_{t-1}\bgamma + \bLambda \bff_{t-1} + \bepsilon_t.
\end{align}
The transformed network effects $\bB_1$ is equivalent to $\bB$ up to a rotation.
But the network autoregressive coefficient stays the same, that is $\bU \bB_1 \bU^\top = \bTheta \bB \bTheta^\top$.
Matrix $\bU$ is referred to as the spectral representation of $\bTheta$.
It can be estimated with higher accuracy than the membership matrix $\bTheta$, without resorting to the $k$-means algorithm to determine $\bTheta$.
Consequently, we estimate the pair $\paran{\bU, \bB_1}$ instead of $\paran{\bTheta, \bB}$.

\subsection{Model stationarity}

In this section, we investigate the stationarity of the time series $\{\by_t\}$.
Denote by $\bG = \bU\bB_1\bU^\top+\beta_2\bI_N$ and $\wt \bepsilon_t = \bZ_{t-1}\bgamma + \bepsilon_{t}$, we rewrite the CNAR model \eqref{eqn:cnar1} as
\beq
\by_t = \bG\by_{t-1} + \wt \bepsilon_t.\nonumber
\eeq
Thus, the CNAR model can be seen as a special case of the vector autoregression (VAR) model with autoregressive matrix given by $\bG$.
In our situation, the autoregressive matrix $\bG$ is structured by embedding
the community structure and low-rank common factor configuration.
We establish the stationarity condition for the CNAR model (\ref{eqn:cnar}) as follows.

\bet\label{stat1}
Assume that $E\|\bz_{t, i\cdot}\|_1<\infty$, $E\|\bff_t\|_1<\infty$, $\|\bLambda\|_\infty < \infty$, and  $\wt \bepsilon_t$ is strictly stationary.
If $\rho(\bB_1) + \abs{\beta_2} <1$,
then there exists a unique strictly stationary solution with finite first order moment to the CNAR model (\ref{eqn:cnar}). The solution takes the form,
$
\by_t =  \sum_{j = 0}^\infty \bG^j\wt\bepsilon_{t-j}.
$
\eet
Theorem \ref{stat1} establishes a sufficient condition (i.e., $\rho(\bB_1) + |\beta_2| <1$) for the strict stationarity of the model (\ref{eqn:cnar1}).
The proof is given in Appendix \ref{stat_proof}.

\section{Estimation} \label{sec:est}

According to our analysis in Section \ref{sec:reform}, Model~\eqref{eqn:cnar} is equivalent to Model~\eqref{eqn:cnar1}.
It suffices to estimate the spectral representation $\bU$ instead of the membership matrix $\bTheta$.
To perform the estimation, we firstly estimate $\hat\bU$ by the first $K$ eigenvectors of the adjacency matrix $\bA$ corresponding to the $K$ largest absolute eigenvalues.
The spectral representation is a major step in spectral clustering used in community detection for SBM \citep{lei2015consistency,zhao2012consistency,rohe2011spectral}.
As a result, a knowledge of $\hat\bU$ is sufficient for our purpose of estimation and prediction.
For completeness, we include a brief summary of the estimation properties of $\wh \bU$ in Lemma \ref{thm:SBM-eigvec-L2-bound} based on \cite{lei2015consistency}.
Next, we proceed to estimate the autoregression coefficients
by a two-step estimator.
In the first step, we plug-in the estimated $\hat\bU$ matrix, estimate the unknown autoregression coefficients and the regression residual $\hat\beps_t$ by the least squares method.
In the second step, we estimate the covariance of the noise term $\beps_t$ by using the POET method \citep{fan2013large} to the residual $\hat\beps_t$, and then use the estimated covariance matrix to further improve the estimation efficiency.
The details of the last two steps are presented in the next two sections.

\subsection{First step estimation of auto-regression coefficients}

First the CNAR Model~\eqref{eqn:cnar} could be re-written as
\beq
\by_t = \paran{(\by_{t-1}^\top \bU)\otimes \bU}\vec(\bB_1)
+ \beta_2 \by_{t-1}
+  \bZ_{t-1}\bgamma
+ \beps_t.  \nonumber
\eeq
Write $\bX_{t-1} = ((\by_{t-1}^\top \bU)\otimes \bU, \by_{t-1}, \bZ_{t-1})\in\RR^{N\times (K^2+p+1)}$ and
let $\btheta = (\vec(\bB_1)^\top, \beta_2, \bgamma^\top)^\top \in \RR^{K^2+p+1}$.
Then we have the following relationship,
\beq \label{eqn:ls-form}
\by_t = \bX_{t-1}\btheta + \beps_t.
\eeq
By using a standard least squares estimation,
$\btheta$ can be estimated by
\beq
\wt\btheta = \big(\sum_{t}\bX_{t-1}^\top\bX_{t-1}\big)^{-1}
\big(\sum_{t}\bX_{t-1}^\top\by_{t}\big).\nonumber
\eeq
Note that we assume $\beps_t$ is uncorrelated over $1\le t\le T$ but with cross-sectional dependence.
That implies that the above least squares estimation will yield a consistent estimation of $\btheta$.
In practice, the true value of $\bU$ is unknown.
Instead, we plug-in corresponding estimators into $\bX_{t-1}$ and obtain
$\wh \bX_{t-1} = ((\by_{t-1}^\top \wh\bU)\otimes \wh\bU, \by_{t-1}, \bZ_{t-1})\in\RR^{N\times (K^2+p+1)}$.
Accordingly, we obtain the first step estimator as
\beq \label{lse}
\wh\btheta^{(1)} = \big(\sum_{t}\wh\bX_{t-1}^\top\wh\bX_{t-1}\big)^{-1}
\big(\sum_{t}\wh \bX_{t-1}^\top\by_{t}\big).
\eeq

\subsection{Second-step weighted least squares}
\label{sec:2nd}

Once the first step estimator $\wh\btheta^{(1)}$ is given, we are then able to plug-in the estimator
to obtain the residuals as follows:
\begin{equation}\label{eqn:resid1}
\wh\beps_t = \by_t - \wh\bX_{t-1}\wh\btheta^{(1)}.
\end{equation}
Note that the true noise $\beps_t$ can be decomposed into a low dimensional factor structure $\bLambda\bff_t$ and an independent
error $\be_t$ in \eqref{eqn:fac_model}.
Therefore the covariance matrix $\bSigma_\eps$  of $\beps_t$ can be expressed as
\begin{align*}
\bSigma_\eps = \bLambda\bSigma_f\bLambda^\top + \bSigma_e,
\end{align*}
where $\Sigma_f\in\RR^{M\times M}$ and $\bSigma_e\in\RR^{N\times N}$ are the covariance matrices of the latent common factor and the idiosyncratic component, respectively.

Estimating a low-rank plus sparse covariance matrix $\bSigma_\eps$ with either observed or unobserved factors has been considered in \cite{fan2008high,fan2011high,fan2013large} and \cite{fan2019learning}.
However, the task here is slightly different from the standard problem:
while the $\braces{\beps_t}_{t=1}^T$ is known in previous studies, direct observation of $\braces{\beps_t}_{t=1}^T$ is absent in our setting.
We need to use the estimated $\braces{\hat\beps_t}_{t=1}^T$ which are the residuals from the first step regression of the CNAR model.
Moreover, we consider a time series setting where the number of communities $K$, and, as a result,  the number of coefficients $K^2+p+1$ are allowed to grow with $N$.

Specifically, we apply a simplified version of the POET method to the first-step CNAR residual $\hat\beps_t$.
$\bff_t$ and $\bLambda$ are both unknown and are estimated by PCA as in \cite{fan2013large}.
Let $\hat\bLambda$ and $\hat\bff_t$ be the estimated loading matrix and factor,
$\hat\be_t = \hat\beps_t - \hat\bLambda\hat\bff_t$ be the estimated idiosyncratic component.
The covariance of $\braces{\hat\beps_t}_{t=1}^T$ is given by
\begin{equation} \label{eqn:cov_eps}
\hat\bSigma_\ve = \hat\bLam\hat\bSigma_f\hat\bLam^\top + \hat\bSigma_e,
\end{equation}
where $\hat\bSigma_f = (1/T) \sum_{t=1}^T \hat \bff_t \hat \bff_t^\top$ and $\hat\bSigma_e = \diag\paran{(1/T) \sum_{t=1}^T\hat\be_t\hat\be_t^\top}$.
Using $\hat\bSigma_\eps^{-1}$, we construct a weighted least squares type objective function as follows,
\begin{align}
Q(\btheta) = \sum_t(\by_t - \hat\bX_{t-1}\btheta)^\top\wh\bSigma_\ve^{-1}(\by_t - \hat\bX_{t-1}\btheta), \nonumber
\end{align}
which yields our second step estimator $\wh\btheta^{(2)}$ as
\beq
\wh\btheta^{(2)} = \Big(\sum_t \hat\bX_{t-1}^\top\wh\bSigma_\ve^{-1}\hat\bX_{t-1}\Big)^{-1}\Big(\sum_t \hat\bX_{t-1}^\top\wh\bSigma_\ve^{-1}\by_{t}\Big).\label{theta_est2}
\eeq
One may wonder if \eqref{theta_est2} is computationally expensive since it requires to compute the inverse of
an $N\times N$-dimensional matrix $\wh\bSigma_\ve$.
This issue can be easily solved since $\bSigma_\ve$ takes a special decomposition as in \eqref{eqn:cov_eps}.
Applying the Sherman-Morrison-Woodbury formula and using the identification condition that $\hat\Sigma_f^{-1}=\bI_K$ we could obtain
\begin{align}
\hat\bSigma_\eps^{-1} = \hat\bSigma_e^{-1}
- \hat\bSigma_e^{-1}\hat\bLam
\paran{\bI_K+\hat\bLam^\top\hat\bSigma_e^{-1}\hat\bLam}^{-1}
\hat\bLam^\top\hat\bSigma_e^{-1}.
\end{align}
The inverse of a sparse matrix $\hat\bSigma_e$ can be computed efficiently since it is diagonal.

\section{Theoretical Properties} \label{sec:theory}

We present the theoretical properties of the CNAR model estimation in this section.
First, the asymptotic properties of the first step estimator are presented in Section
\ref{sec:first}.
Next, the properties of the covariance estimation as well as the second step estimation are stated in Section \ref{sec:second}.

\subsection{Asymptotic properties of the first-step estimation}\label{sec:first}

To investigate the estimation properties of the first step estimator, we first focus on the case that the number of communities $K$ is fixed.
As a result, the number of coefficients to be estimated is fixed.
In the following theorem we establish the asymptotic result of the first step estimator.

\newpage

\bet\label{thm:first_step_estimator}
{\sc (First Step Estimator for Fixed $K$)}
Assume $\sigma_1(\bB_1)+|\beta_2|<1$,
$\sigma_1(\bSigma_e)\le c_e$, and $K$ is fixed,
where $c_e$ is a positive constant.
In addition, assume
    Assumption B.1
    and $\bU^\top \bLambda \ne \zero$.
Define $\balpha_{NT} = \diag\{\sqrt T\bI_{K^2}, \sqrt T, \sqrt{NT}\bI_{p}\}\in \RR^{(K^2+p+1)\times (K^2+p+1)}$.
Then the following conclusion holds.\\
(a) {\sc (Oracle Estimator)}.
Let $\bGamma_y(0) = \cov(\by_t)$
and $\bSigma_\eps = \cov(\beps_t)$. 
We have
$\balpha_{NT}\bSigma_x(\wt\btheta - \btheta)\rightarrow_d N(\zero, \bSigma_{2\eps})$,
where
\begin{align}
\bSigma_{x} =
\left(
\begin{array}{ccc}
\bSigma_u
& \bSigma_{uy}
& \zero \\
& \kappa_{y}
& \zero \\
&
& \bSigma_Z
\end{array}
\right),
~~~
\bSigma_{2\eps} =
\left(
\begin{array}{ccc}
\bSigma_{ue}
& \bSigma_{uye}
& \zero \\
& \kappa_{ye}
&\zero \\
&
& \kappa_e\bSigma_Z
\end{array}
\right)\label{Sig_x_2e}
\end{align}
with
$\bSigma_u = \lim_{N\rightarrow\infty}N^{-1}\bU^\top\bGamma_y(0)\bU$,
$\kappa_y = \lim_{N\rightarrow\infty}N^{-1}\tr\{\bGamma_y(0)\}$,
$\bSigma_{uy} = \lim_{N\rightarrow\infty}N^{-1}\vec\{\bU^\top\bGamma_y(0)\bU\}$,
$\bSigma_{ue} = \lim_{N\rightarrow\infty}N^{-2}\{\bU^\top\bGamma_y(0)\bU\}
\otimes (\bU^\top\bSigma_\ve \bU)$,
$\kappa_{ye} = \lim_{N\rightarrow\infty}N^{-2}\tr\{\bGamma_y(0)\bSigma_\eps\}$,
$\bSigma_{uye} = \lim_{N\rightarrow\infty}N^{-2}$ $\vec (\bU^\top \bSigma_{\ve}\bGamma_y(0)\bU)$,
$\kappa_{e} = \lim_{N\rightarrow\infty}N^{-1}\tr\{\bSigma_\eps\}$.\\
(b) {\sc (First Step Estimator).}
Define $\gamma_N$ as smallest absolute nonzero eigenvalue of $\bTheta \bB\bTheta^\top$.
Assume $\gamma_N\gg \sqrt{\log N}$ and $T\gg (\log N/\gamma_N)^2\bigvee 1$ .
We have $\balpha_{NT}\bSigma_x(\wh \btheta^{(1)} - \btheta)\rightarrow_d N(\zero, \bSigma_{2\eps})$.
\eet
The proof of Theorem \ref{thm:first_step_estimator} is given in Appendix \ref{append:first_step_estimator}.
Related to the results, we have two comments.
First, it implies that,
both $\wh\bB_1$ and $\wh \beta_2$ are $\sqrt T$-consistent while
$\wh\bgamma$ is $\sqrt{NT}$-consistent.
The slower convergence rates of $\wh \bB_1$ and $\wh \beta_2$ are mainly due to the pervasiveness
condition (i.e., Assumption B.1) of the factor structure.
Under this assumption, the largest eigenvalue of $\bSigma_\ve$ will diverge in the rate of $O_p(N)$,
which reduces the convergence rates of the autoregression related parameters.  When there is no pervasive factor, the rate of convergence is $\sqrt{NT}$.

{\bf Remark 4.}
Note that we require that $\bU^\top\bLambda \ne 0$
as a condition to establish the asymptotic results.
If it holds $\bU^\top\bLambda = \zero$,
then we could obtain that the $\bSigma_u\rightarrow \zero$
and $\bSigma_{ue}\rightarrow \zero$.
Under this case, the community structure diminishes the factor effect,
    which leads to different forms of the asymptotic result.  We state the result in the Appendix for completeness.


The above theoretical results are established under the condition that $K$ is fixed.
However, the fixed $K$ assumption can be restrictive in practice.
Typically, more groups will emerge as $N\rightarrow\infty$.
As a result, the number of parameters to be estimated is also diverging as $K\rightarrow\infty$.
The following theorem establishes the theoretical properties of $\wt\btheta$ and
$\wh\btheta^{(1)}$ under this situation.  It shows that the properties of Theorem~\ref{thm:first_step_estimator} continue to hold.

\bet\label{thm_first_step_K_diverge}
{\sc (First Step Estimator for $K\rightarrow\infty$)}
Assume $\bB_1$ is diagonalizable which holds automatically for symmetric matrices, i.e.,
$\bB_1 = \bP_B\bD_B \bP_B^{-1}$ with $\bD_B$ being a diagonal matrix.
Assume the same conditions as in Theorem \ref{thm:first_step_estimator}
and $K = o(T^{1/2})$.
Let $\bA_K\in \RR^{m\times (K^2+p+1)}$ satisfy $\bA_K\bA_K^\top\rightarrow \bH$
as $K\rightarrow\infty$, where $m$ is a fixed integer and $\bH\in\RR^{m\times m}$ is a non-negative symmetric matrix.
It then holds
\beq
\bA_K\bSigma_{2\ve}^{-1/2}\balpha_{NT}\bSigma_x(\wt\btheta - \btheta)\rightarrow_d
N(\zero, \bH). \label{Koracle_normal}
\eeq
Further assume $\gamma_N\gg \sqrt{K\log N}$
and $T\gg (\sqrt K\exp(K\log 21)\log N/\gamma_N)^2$. Then it holds,
\beq
\bA_K\bSigma_{2\ve}^{-1/2}\balpha_{NT}\bSigma_x(\wh\btheta^{(1)} - \btheta)\rightarrow_d
N(\zero, \bH).\label{Kfirst_normal}
\eeq
\eet
The proof of the theorem is given in Appendix \ref{thm_first_step_K_diverge_proof}.
Implied by the above theorem, any finite sub-vector of $\vec(\wh\bB_1^{(1)})$ is $\sqrt{T}$-consistent.
However, since it ignores the potential cross-sectional dependence structure in $\beps_t$, it is sub-optimal in terms of estimation efficiency. We further improve it by a second step estimator and
the corresponding theoretical properties are discussed in the next section.

\subsection{Asymptotic properties of the second-step estimation} \label{sec:second}

We apply the principal orthogonal complement thresholding method \cite[POET]{fan2013large} to estimate the covariance $\bSigma_\eps$.
However, the task here is slightly different from that in \cite{fan2013large} since direct observations for $\braces{\beps_t}_{t=1}^T$ are not available.
Instead, we are able to calculate $\braces{\hat\beps_t}_{t=1}^T$ -- the residual from the first step regression of the CNAR model.
That is, $\hat\beps_t=\by_t - \bX_{t-1}\hat\btheta^{(1)}$.
The estiamtion properties of $\wh \bSigma_\ve$ and $\wh \bSigma_\ve^{-1}$ are given in the following theorem.


\bet \label{thm:resid-cov-bound}
{\sc (Covariance Estimation)}
Suppose that $\kappa_{NT} = K/\sqrt T$
and assume the same conditions with Theorem \ref{thm_first_step_K_diverge}.
In addition, assume $\log(N) = \smlo{T^{\gamma/6}}$ where $\gamma$ is defined in Assumption \ref{assum:mixing}, and Assumption \ref{assum:pervasive} -- \ref{assum:regul-covari} hold.
Then, the proposed covariance-estimator based on the first-step CNAR residual satisfies
\begin{enumerate}[label=(\roman*)]
    \item $\norm{{\wh{\bSigma}_{\eps} }- \bSigma_{\eps}}_{\bSigma_{\eps}} = \Op{\frac{\sqrt{N} \log(N)}{T} + \frac{1}{\sqrt{N}} + \sqrt N\kappa_{NT}^2K^2}$.
    \item $\norm{{\wh{\bSigma}_{\eps}^{-1}} - \bSigma_{\eps}^{-1}} = \Op{\frac{1}{\sqrt{N}} + \sqrt{\frac{\log(N)}{T}} + \kappa_{NT} K }$.
\end{enumerate}

\eet

The proof of Theorem \ref{thm:resid-cov-bound} is given in Appendix \ref{appen_cov}.
{The result extends the theoretical results of the covariance estimation in \cite{fan2013large}
    under our CNAR modeling framework.
    The estimation error of the first step estimator is involved through the extra term $\kappa_{NT}$.
    As we could observe, estimating the precision matrix $\Omega_\ve = \Sigma_\ve^{-1}$ is much easier than $\bSigma_\ve$
    under this context.
    This will lead to the $\sqrt{NT}$-consistency result of the second step estimator.
    The details are stated in the following theorem.
}

{

    \bet\label{thm_sec_step_K_diverge}
    {\sc (Second Step Estimator)}
    Assume the same conditions with Theorem \ref{thm_first_step_K_diverge}
    and \ref{thm:resid-cov-bound}.
    Let $\omega_{NTK} =N^{-1/2} + (\log N/T)^{1/2} +  K^2/\sqrt T  $
    and assume that $\omega_{NTK} = o(1)$.
    Let $\bA_K\in \RR^{m\times (K^2+p+1)}$ satisfy $\bA_K\bA_K^\top\rightarrow \bH$
    as $K\rightarrow\infty$, where $m$ is a fixed integer and $\bH\in\RR^{m\times m}$ is a non-negative symmetric matrix.
    Then it holds,
    \beq
    \sqrt{NT}\bA_K\bSigma_{2\theta}^{1/2}(\wh\btheta^{(2)} - \btheta)\rightarrow_d
    N(\zero, \bH),\label{Ksec_normal}
    \eeq
    where
    \begin{align}
    \bSigma_{2\theta} =
    \left(
    \begin{array}{ccc}
    \bSigma_{ue2}
    &\bSigma_{uye2}
    & \zero \\
    & \kappa_{ye2}
    & \zero\\
    &
    & \kappa_{e2}\bSigma_Z
    \end{array}
    \right),\label{Sig_2theta}
    \end{align}
    where $\bSigma_{ue2} = \lim_{N\rightarrow\infty}\frac{1}{N}(\bU^\top\bGamma_y(0)\bU) \otimes (\bU^\top \Omega_\ve \bU)$,
    $\bSigma_{uye2} =  \lim_{N\rightarrow \infty}\frac{1}{N}\vec(\bU^\top\Omega_\ve\bGamma_y(0)\bU)$,
    $\kappa_{ye2} = \lim_{N\rightarrow\infty}\frac{1}{N} \tr\{\Omega_\ve \bGamma_y(0)\}$,
    and $\kappa_{e2} = \lim_{N\rightarrow\infty} \frac{1}{N}\tr(\Omega_\ve)$
    with $\Omega_\ve = \bSigma_\ve^{-1}$.
    \eet

    The proof of Theorem \ref{thm_sec_step_K_diverge} is given in Appendix \ref{proof_sec_est}.
    With respect to the convergence result, we have the following comments.
    First, the assumption involved here requires $T\gg \log N$ and $K\ll T^{1/6}$,
    which can be easily satisfied for large scale networks.
    Second, it should be noted although the largest eigenvalue of $\bSigma_\ve$ diverges in the speed of $O(N)$,
    the largest eigenvalue of the precision matrix (i.e., $\Omega_\ve$) is bounded by a finite constant \citep{fan2013large}.
    Hence the noise level is well controlled.
    As a result, this further enables us to achieve a better convergence rate (i.e., $\sqrt{NT}$-consistency) of the second step estimator.

}


\section{Numerical Studies} \label{sec:simu}

In this section, we use Monte Carlo simulations to assess the adequacy of the asymptotic convergence rates  by evaluating the finite sample performance of estimators.
For different simulation models and combinations of $T$, $N$, and $K$, we report the relative mean squared errors (ReMSE) for model parameter estimators and the time series predictors.
Specifically, for any estimator $\hat\bM$ of the true value $\bM$, the ReMSE is defined as ReMSE$={\norm{\hat\bM - \bM}_F}/{\norm{\bM}_F}$.
Since the systematic noise is difficult to predict, we report the prediction error with respect to the signal $\bs_{t}$ defined in \eqref{eqn:simul-model} below.
We compare our results with those estimated using NAR \citep{zhu2017network}.
All results are based on $200$ replications.

The synthetic data are generated according to the following model:
\begin{equation}  \label{eqn:simul-model}
\by_t = \underbrace{\bPhi \by_{t-1} + \beta_2 \by_{t-1} + \bZ_{t-1}\bgamma}_\text{Signal $\bs_t$} + \underbrace{\bLam\bff_t + \be_t}_\text{Noise $\beps_t$},
\end{equation}
where $\bPhi=\bTheta\bB\bTheta^\top$ for CNAR model and $= \beta_1 \tilde\bA$
with $\tilde\bA$ being the row-normalized adjacency matrix.
It has different structures in the following three examples, which are specified later.
We generate the covariates from multivariate normal distribution $\calN\paran{\bzero,\bI_p}$
with the dimension fixed at $p = 5$.
We fix the dimension of latent factor $\bff_t$ at $M=3$ and generate $\bff_t$ and $\be_t$ from $\calN\paran{\bzero,\bI}$ of appropriate dimensions. 
For true values of the coefficients, we fix $\beta_2$ and $\bgamma$ at $\beta_2=0.3$ and $\bgamma=\paran{-0.1, 0.2, -0.3, 0, 0}^\top$.
The entries of the loading matrix $\bLam \in \RR^{N\times M}$ are also fixed for each pair of $(N, M)$ but are generated randomly once from $\calN\paran{1,1}$.

\subsection{Comparison of CNAR and NAR under different network models}

In the following three examples, we generate the data by using different mechanisms.
The finite sample performances of CNAR and NAR model \citep{zhu2017network} are compared in terms of estimation and prediction accuracy.
Because the covariance of the noise $\beps_t$ in \eqref{eqn:simul-model} is not diagonal,
we use the 2nd-step weighted least square estimator for both NAR and CNAR estimation for a fair comparison.

\begin{enumerate}[label={\bf \sc Example \arabic*}, align=left, wide, labelwidth=!, labelindent=0pt]
    \item ({\bf Stochastic Block Model and CNAR})
    We construct our network using the stochastic block model with $K$ communities
    \[
    \bQ = \alpha_N\bQ_0; \qquad \bQ_0=\rho\bI_K + (1-\rho)\bbone_K\bbone_K^\top, \quad 0<\lambda<1.
    \]
    This setting assumes that the edge probability between any pair of nodes depends only on whether they belong to the same community: the edge probability is $\alpha_N$ within community and $\alpha_N(1-\rho)$ between community.
    The quantity $\rho$ reflects the relative difference in connectivity between and within communities.
    We set $\alpha_N = 0.9$, $\rho=8/9$, and $\bPhi=\bTheta\bB\bTheta^\top$. Here
    $\bB$ is a $K\times K$ diagonal matrix, whose diagonal elements are the first $K$ numbers of the sequence $\{0.1, -0.1, 0.2, -0.2, 0.3, -0.3, \cdots \}$.
    We vary $T\in\brackets{50,100,\cdots,450}$, $K \in\brackets{2,4,8}$ and $N\in\brackets{200,400,800}$.

    Figure \ref{fig:example-1-boxplot} shows, for Example 1, the ReMSE of estimated network AR coefficient $\bU\bB_1\bU^\top$ and 1-step prediction $y_{T+1}$ by NAR (first row) and CNAR (second row), respectively, with $K=2$ and $N=400$.
    Note the scales of the y-axes are very different and all $T \le N=400$.
    Under all settings, CNAR outperforms NAR with much lower ReMSE, as NAR is not flexible to fully capture the complex community effects.
    In addition, as $T$ increases, the ReMSE of CNAR model decreases, as expected.  The CNAR with estimated $\hat\bU$ and two-step weighted LS using covariance estimator \eqref{eqn:cov_eps} becomes closer and closer to the oracle estimator as $T$ increases.

    \begin{figure}[htpb!]
        \centering
        \begin{subfigure}[b]{0.35\textwidth}
            \centering
            \includegraphics[width=\linewidth]{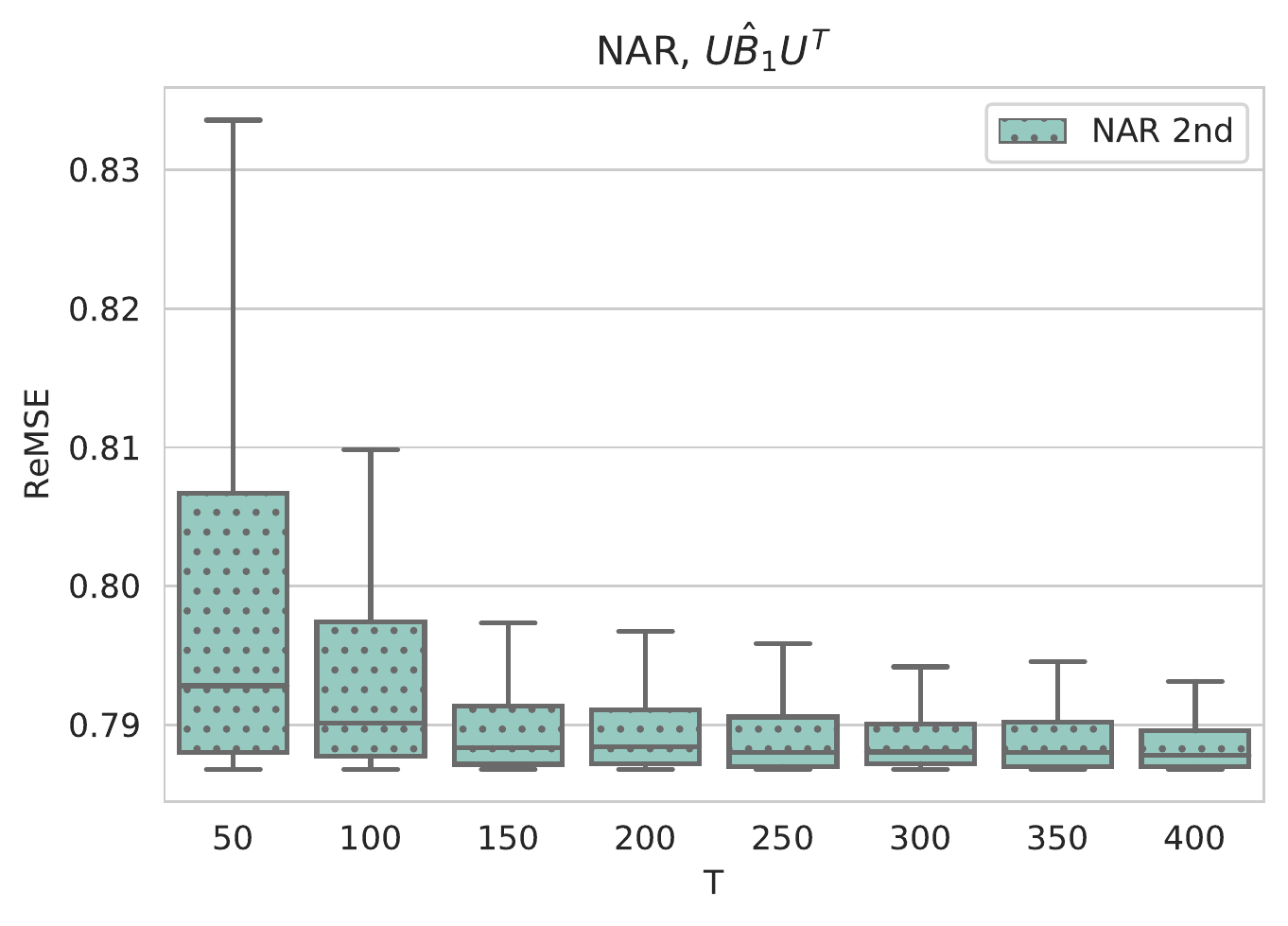}
        \end{subfigure}
        \hspace{3ex}
        \begin{subfigure}[b]{0.35\textwidth}
            \centering
            \includegraphics[width=\linewidth]{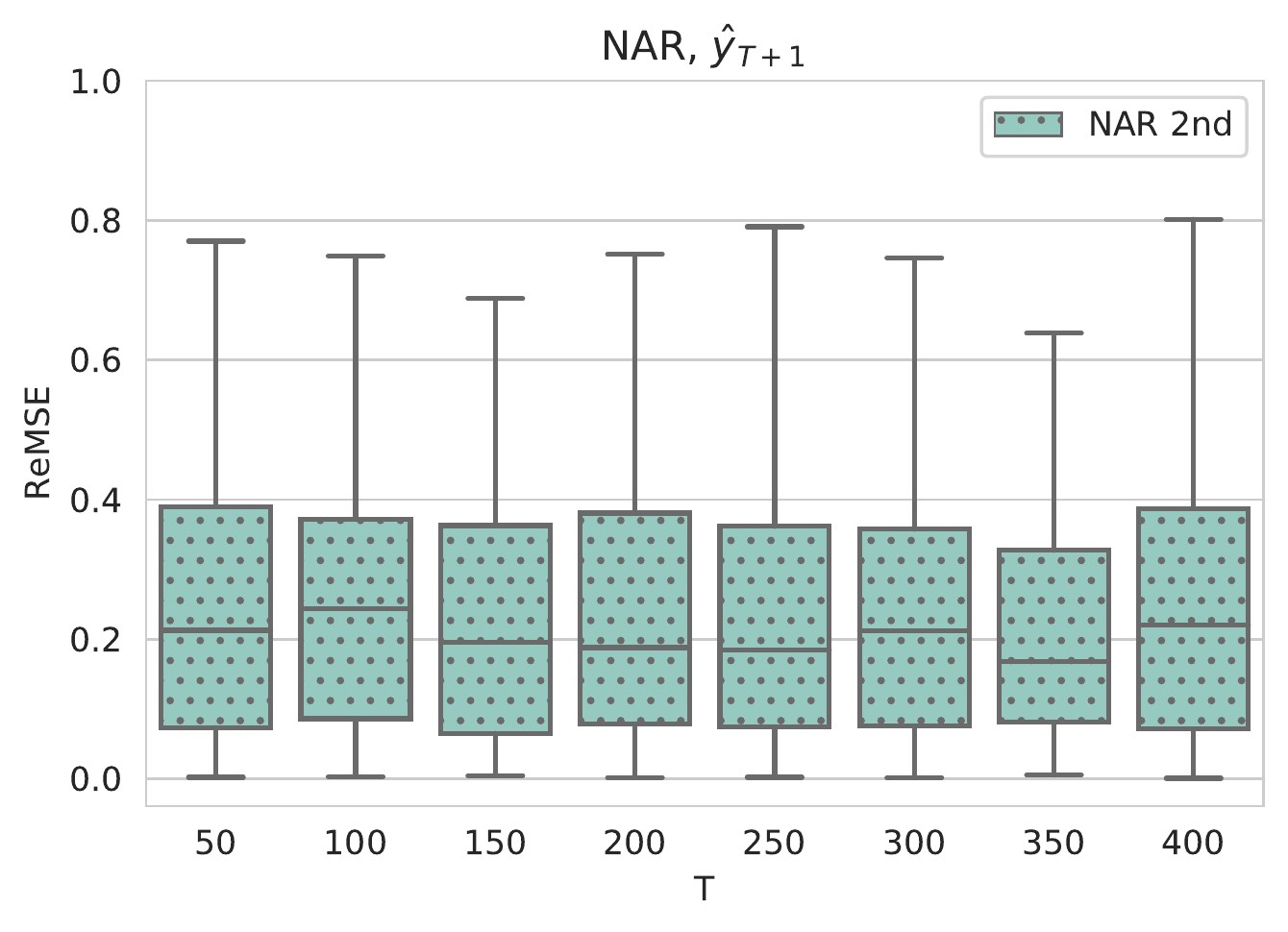}
        \end{subfigure}

        \begin{subfigure}[b]{0.35\textwidth}
            \centering
            \includegraphics[width=\linewidth]{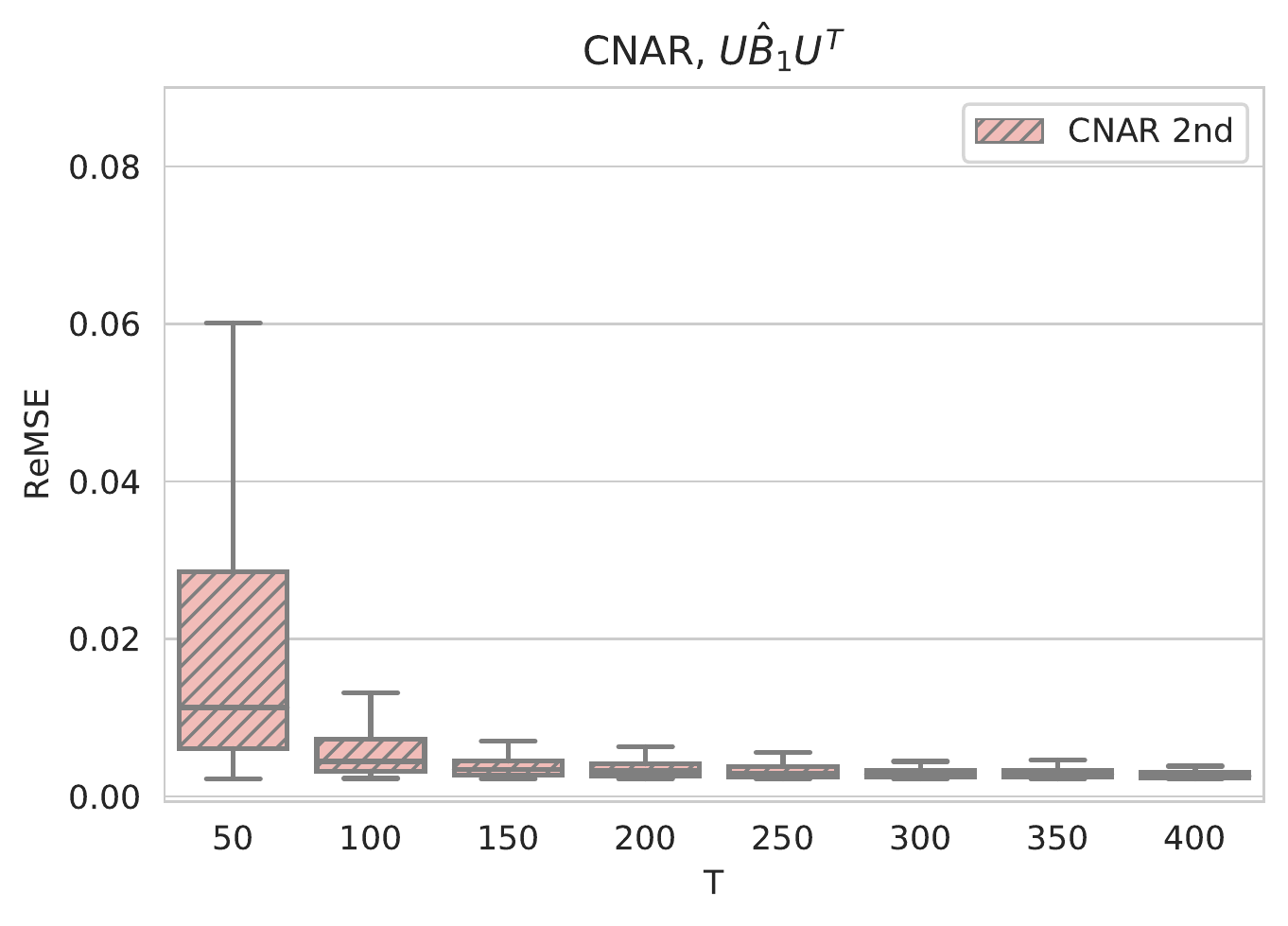}
        \end{subfigure}
        \hspace{3ex}
        \begin{subfigure}[b]{0.35\textwidth}
            \centering
            \includegraphics[width=\linewidth]{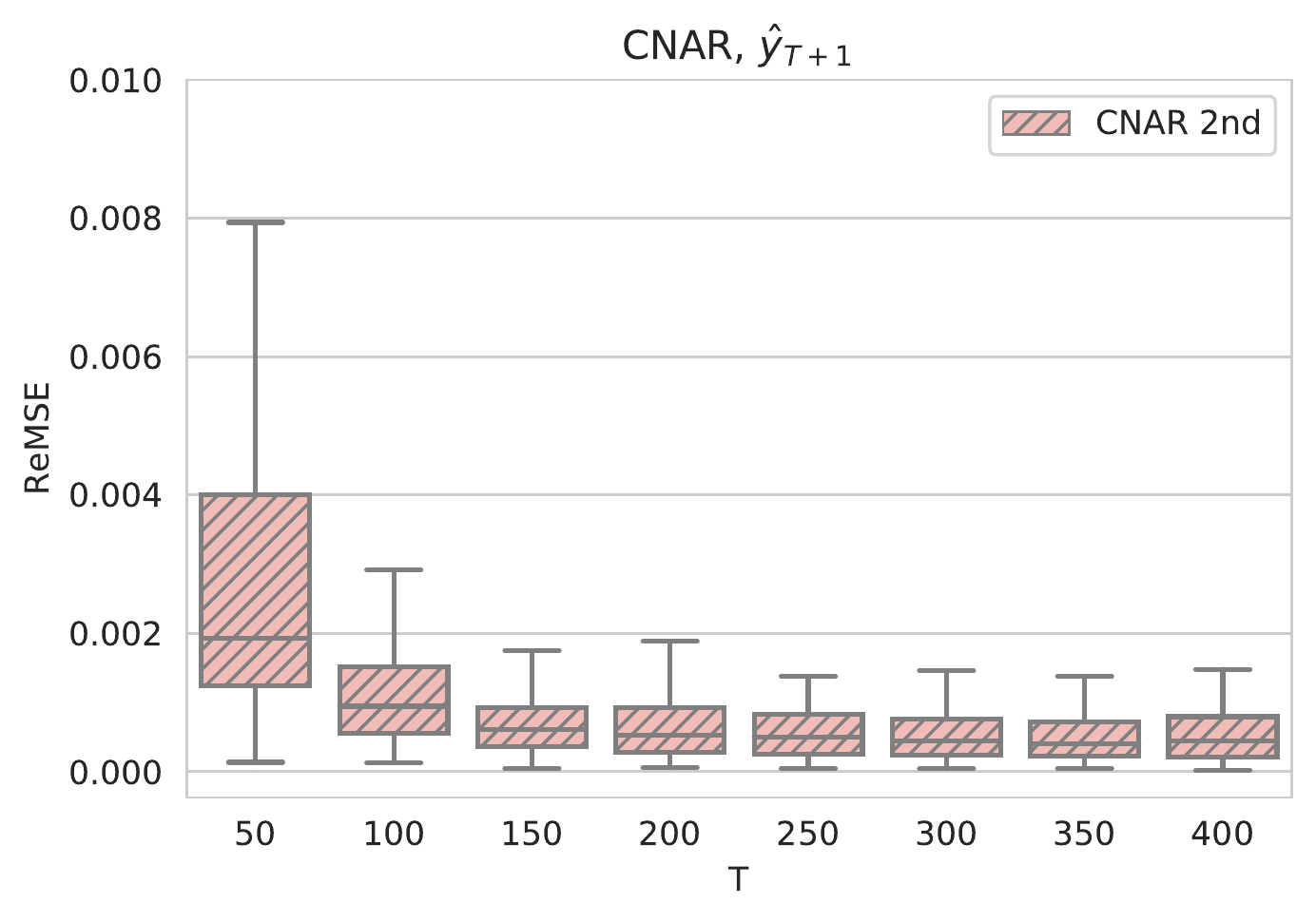}
        \end{subfigure}
        \caption{Example 1 (SBM and CNAR model with $K=2$ and $N=400$). Box plots for the ReMSE for estimated network AR coefficient $\bU\bB_1\bU^\top$ and 1-step prediction $y_{T+1}$ by NAR (first row) and CNAR (second row), respectively.
        }
        \label{fig:example-1-boxplot}
    \end{figure}

    In the supplementary material, we present a table comparing the performance of CNAR and NAR in Example 1 with different numbers of network communities $K\in\braces{2,4,8}$, network sizes $N\in\braces{200,400,800}$ and time series length $T=100,200,400$.
    The performance of CNAR is better than that of NAR under all settings.
    For the CNAR model, it can be observed that the errors reduce when $T$ or $N$ with given number of
    communities $K$, which corroborates with the theoretical findings.

    \item ({\bf Low-rank spectral network and CNAR})
    In this example, we focus on the sensitivity analysis when the network is not generated from the SBM model.
    We consider a general network with spectrum resembling that of an SBM.
    The network is generated by the Python graph function \textit{spectral\_graph\_forge} in the \textit{NetworkX} package \citep{hagberg2008exploring}, which computes the eigenvectors of a given SBM's adjacency matrix, filters them and builds a random graph with a similar eigenstructure ($\alpha=0.95$ for the second function argument).
    The membership matrix $\bTheta$ is generated using community assignments from a spectral clustering.
    Other model parameters, such as $\bB$, $\bbeta_2$ and $\gamma$, are set in the same way as in Example 1.

    Figure \ref{fig:sbm-spectral-CNAR-boxplot} shows the ReMSE of estimated network AR coefficient $\bU\bB_1\bU^\top$ and 1-step prediction $y_{T+1}$ by NAR (first row) and CNAR (second row), respectively, with $K=2$ and $N=400$.
    Under this setting with a different network generative model, CNAR works similarly to it does in the SBM case in Example 1 and outperforms NAR.
    In the supplementary material, we compares CNAR and NAR under two additional settings, namely the clusters of power law network and the random partition network.
    Under all setting,  the CNAR model still has much smaller  estimation and prediction error compared to the NAR model.
    Although our theories are based on the SBM generative model, all examples demonstrate empirically that the CNAR applies to a wider group of network models.
    Therefore, the CNAR model is shown to be more robust than the NAR model.

    \begin{figure}[htpb!]
        \centering
        \begin{subfigure}[b]{0.35\textwidth}
            \centering
            \includegraphics[width=\linewidth]{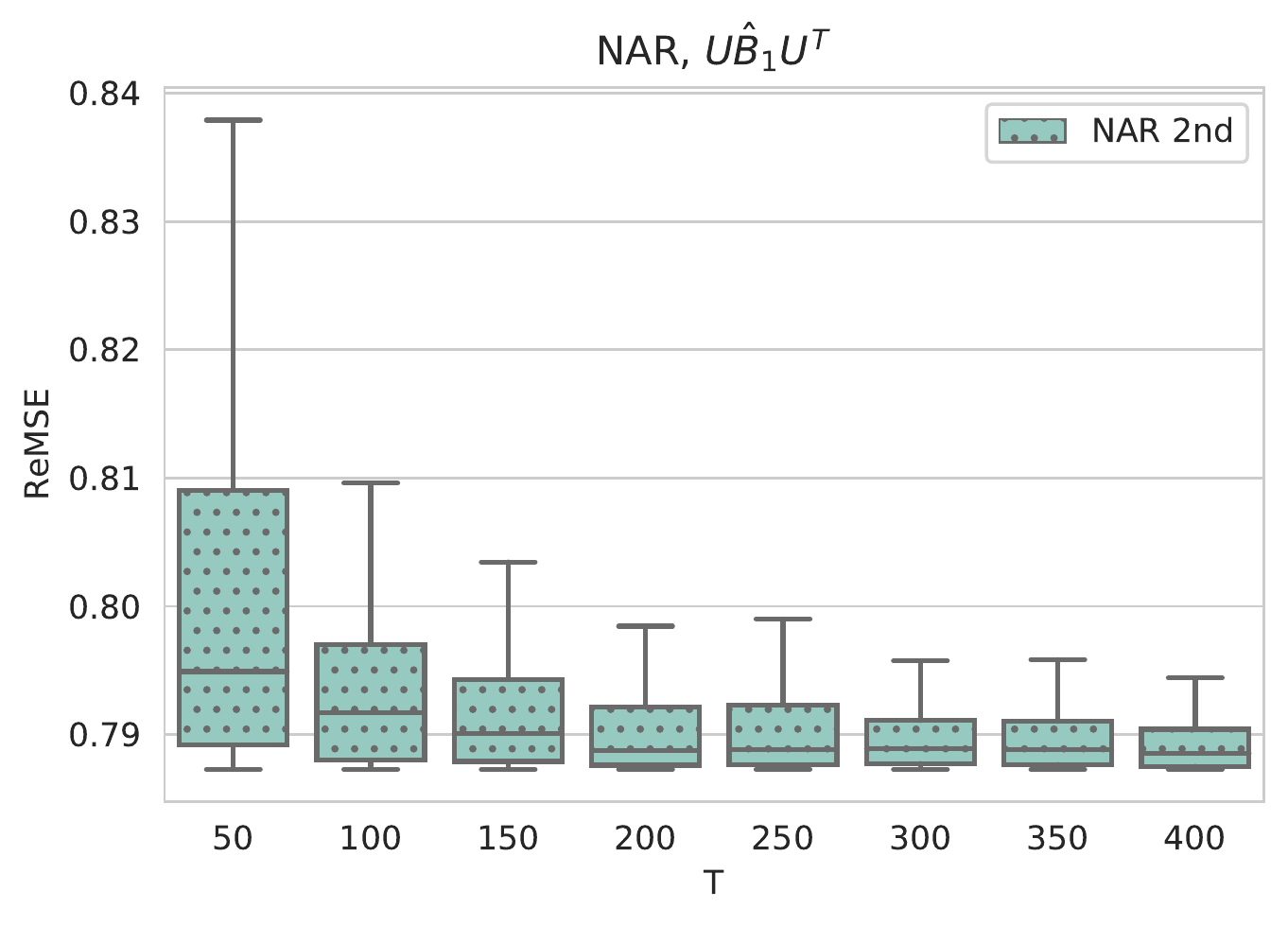}
        \end{subfigure}
        \hspace{3ex}
        \begin{subfigure}[b]{0.35\textwidth}
            \centering
            \includegraphics[width=\linewidth]{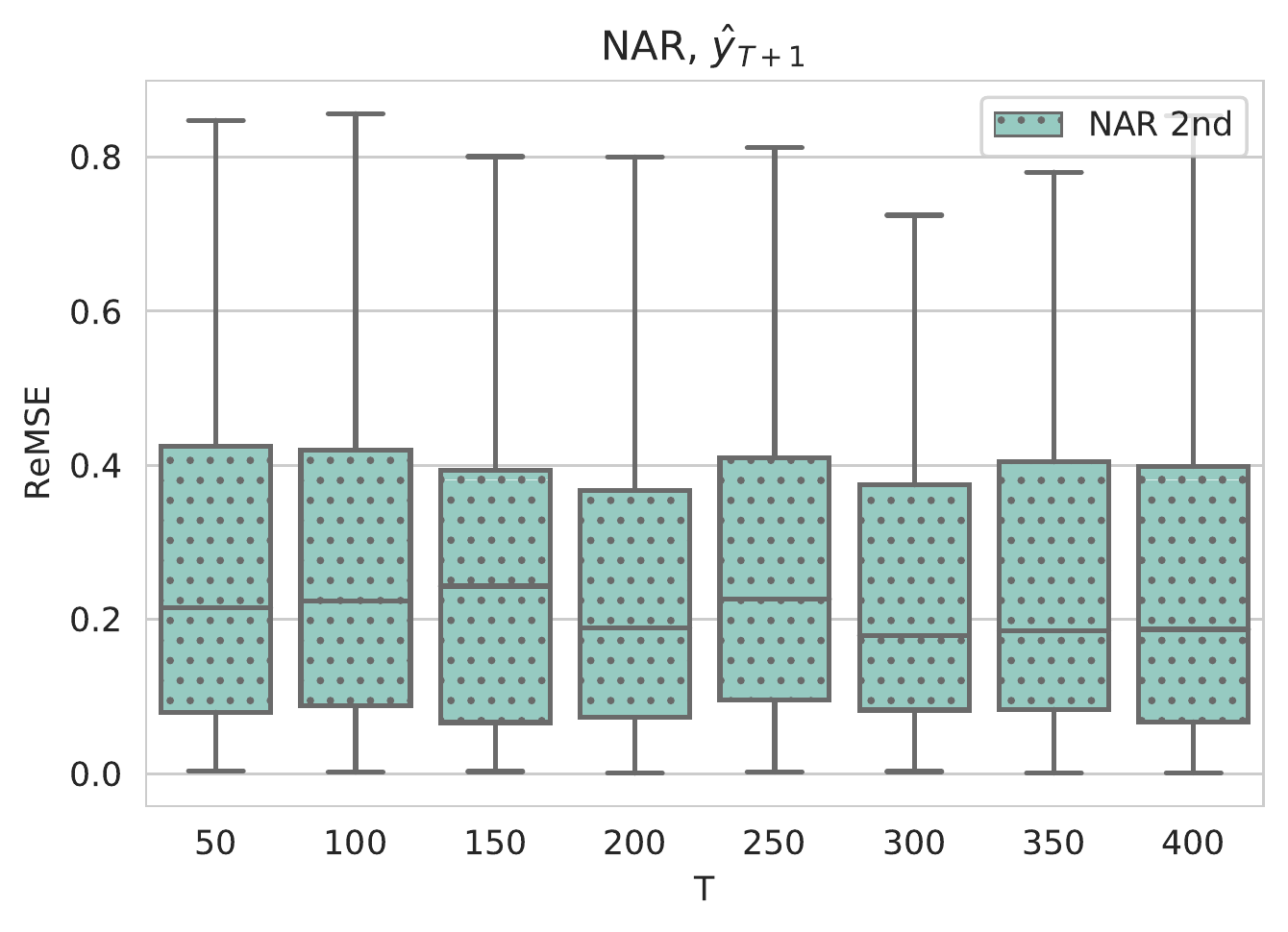}
        \end{subfigure}

        \begin{subfigure}[b]{0.35\textwidth}
            \centering
            \includegraphics[width=\linewidth]{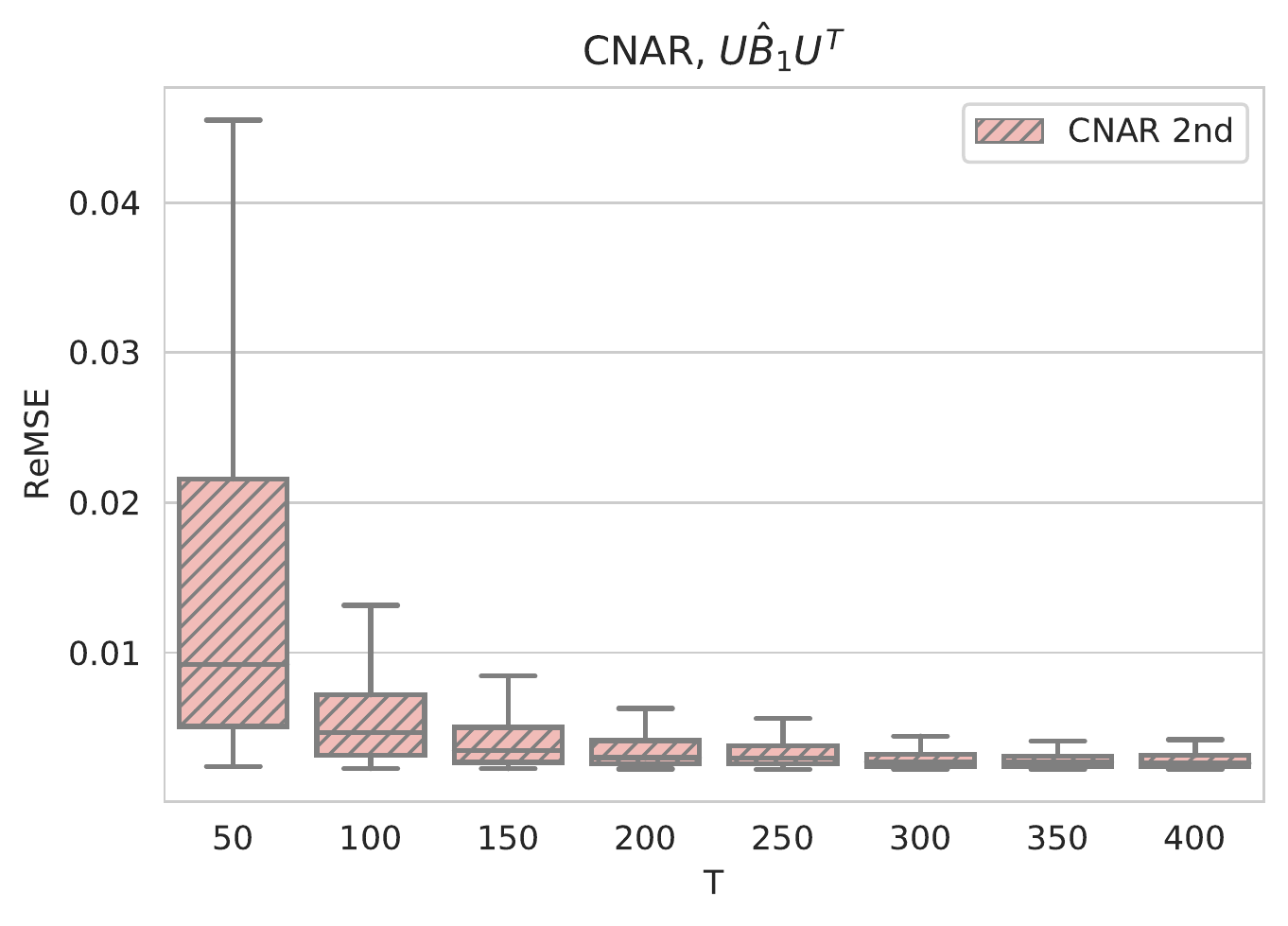}
        \end{subfigure}
        \hspace{3ex}
        \begin{subfigure}[b]{0.35\textwidth}
            \centering
            \includegraphics[width=\linewidth]{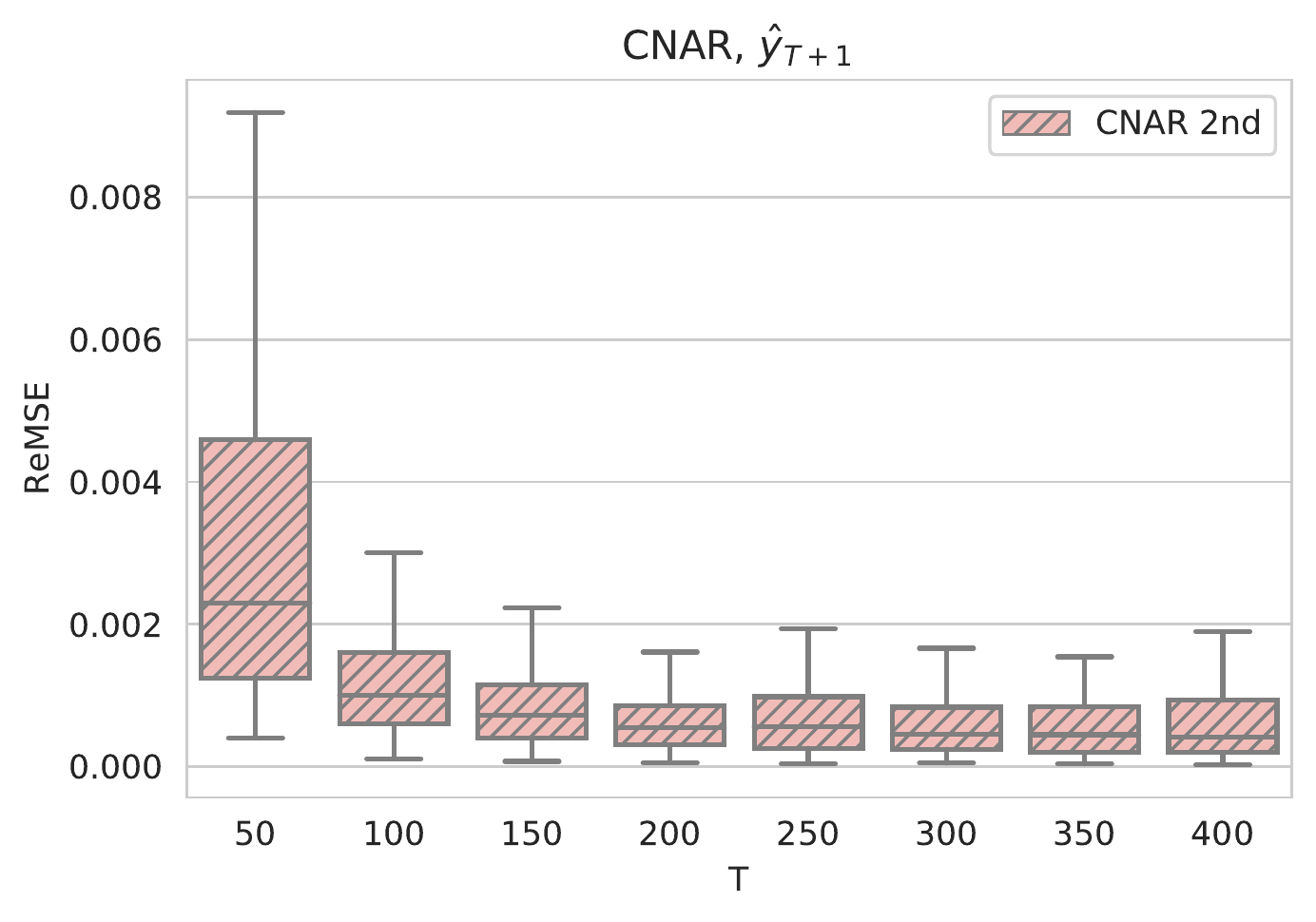}
        \end{subfigure}
        \caption{Example 2 (General low-rank network and CNAR model with $K=2$ and $N=400$). Box plots for the ReMSE for estimated network AR coefficient $\bU\bB_1\bU^\top$ and 1-step prediction $y_{T+1}$ by NAR (first row) and CNAR (second row), respectively.
        }
        \label{fig:sbm-spectral-CNAR-boxplot}
    \end{figure}

    \item ({\bf Stochastic Block Model and NAR})
    In this setting we consider a situation where the data is generated from {the NAR model. Thus the situation is more favorable to NAR in this case.}
    Specifically, the adjacency matrix $\bA$ is generated according to the {SBM} in Example 1.
    We set $\bPhi=\beta_1 \cdot \tilde\bA$ with $\beta_1 = 0.5$, where $\tilde\bA$ is the row-normalized adjacency matrix.
    The network size is given as $N=400$ and the time length is specified as $T\in\brackets{50,100,\cdots,450}$.
    For the CNAR model, we take $K=2$.

    We estimate $\beta_1$ for NAR, $\bB = \beta_1\bD_1^{-1} \bTheta^\top\wt\bA \bTheta\bD_1^{-1}$ for CNAR (see Remark 1).
    The ReMSE of the estimates and the 1-step predictions are depicted in Figure \ref{fig:example-2-boxplot}.
    While NAR achieves better estimation accuracy with small sample size (e.g., $T=50$), CNAR still estimates and predicts with comparable accuracy even under model misspecification.
    This again demonstrates that the CNAR model is flexible enough to capture the data structure.

    \begin{figure}[htpb!]
        \centering
        \begin{subfigure}[b]{0.35\textwidth}
            \centering
            \includegraphics[width=\linewidth]{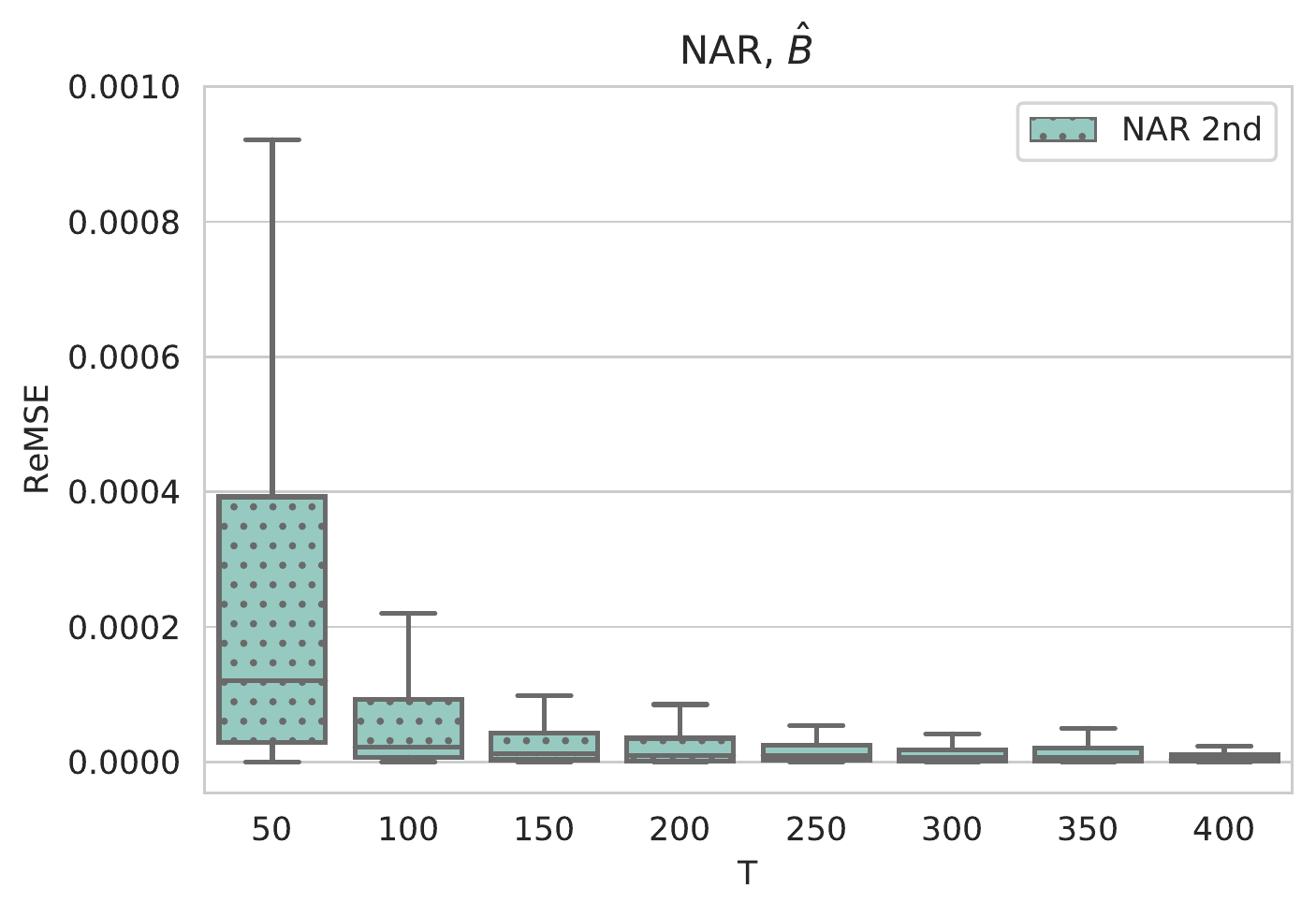}
        \end{subfigure}
        \hspace{3ex}
        \begin{subfigure}[b]{0.35\textwidth}
            \centering
            \includegraphics[width=\linewidth]{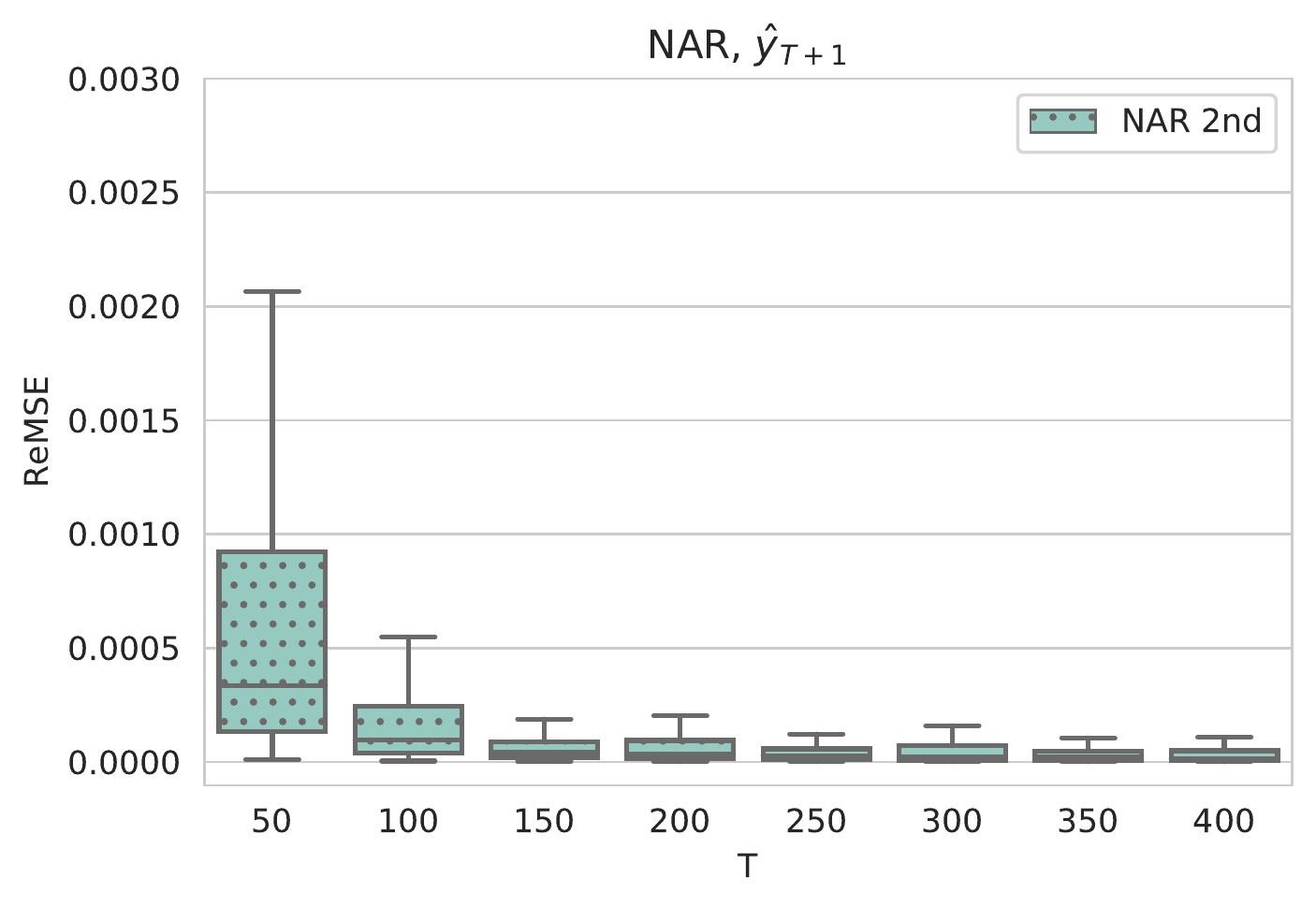}
        \end{subfigure}

        \begin{subfigure}[b]{0.35\textwidth}
            \centering
            \includegraphics[width=\linewidth]{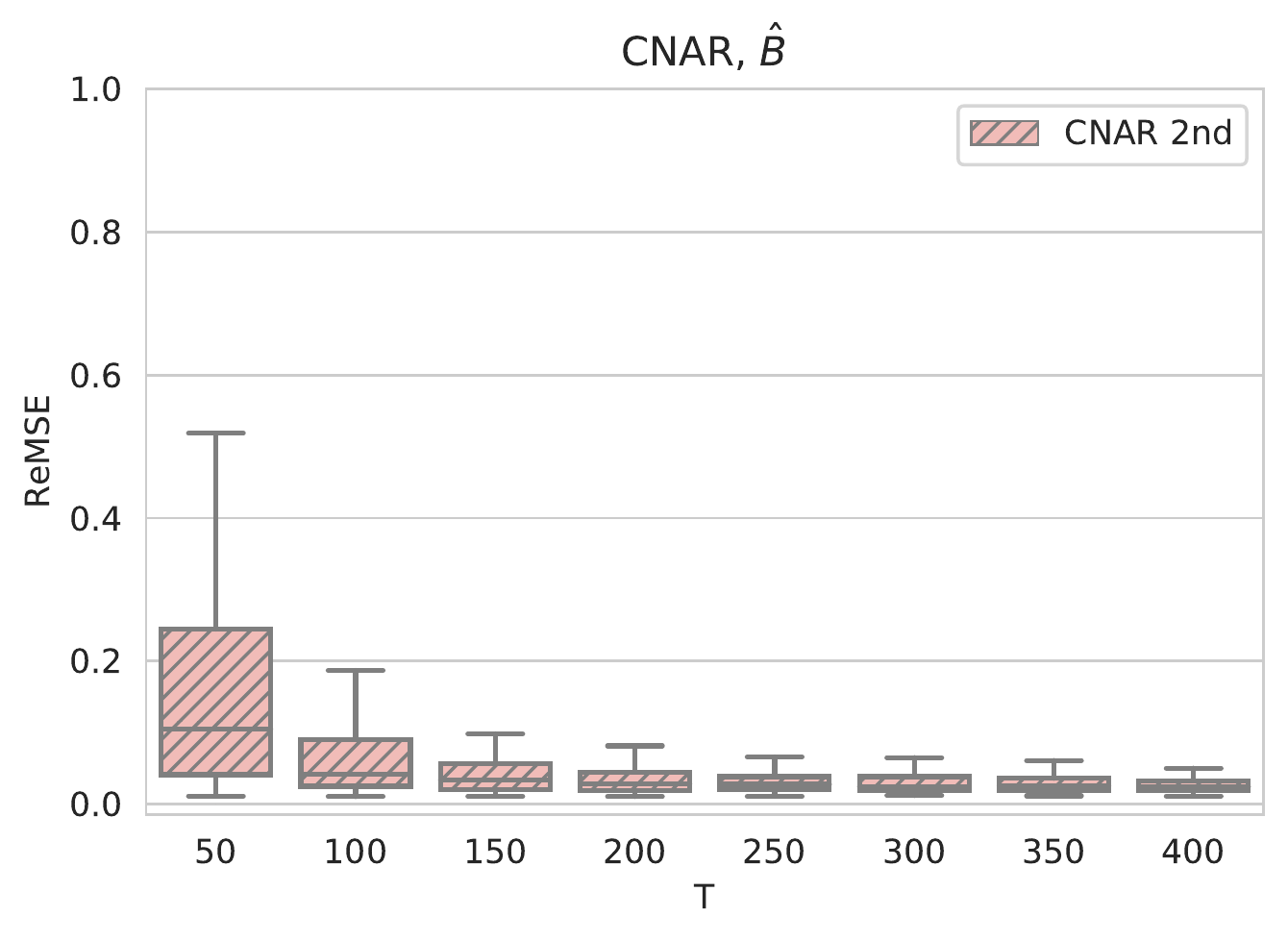}
        \end{subfigure}
        \hspace{3ex}
        \begin{subfigure}[b]{0.35\textwidth}
            \centering
            \includegraphics[width=\linewidth]{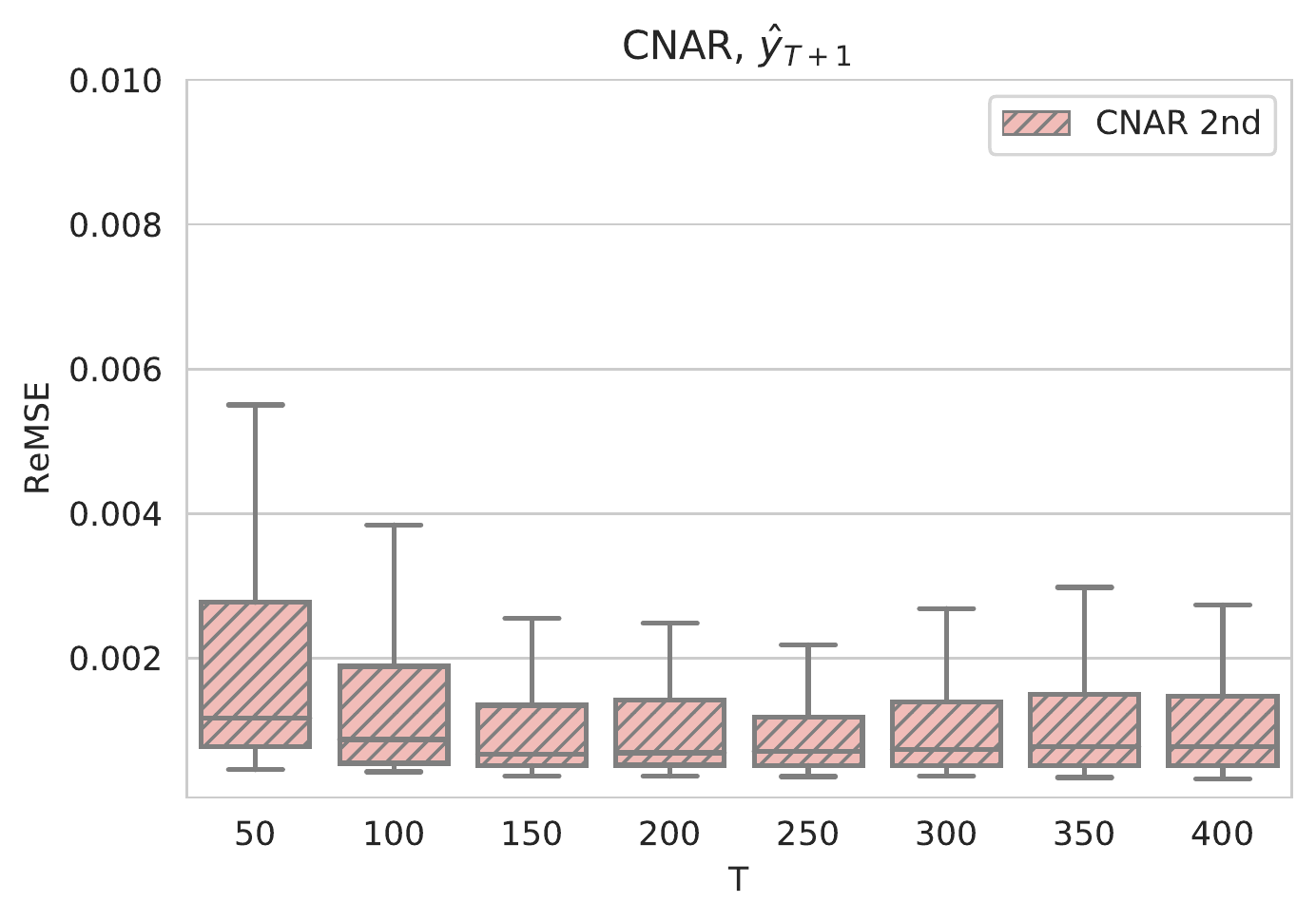}
        \end{subfigure}
        \caption{Example 3 (SBM and NAR model with $K=2$ and $N=400$). Box plots of ReMSE for estimated network AR coefficient $\bU\bB_1\bU^\top$ and 1-step prediction $y_{T+1}$ by NAR (first row) and CNAR (second row), respectively.
        }
        \label{fig:example-2-boxplot}
    \end{figure}

\end{enumerate}

\subsection{Comparison of CNAR 1st and 2nd-step estimators}

We investigate the different rates of convergence of the CNAR 1st and 2nd-step estimator when we increase $N$.
The data are generated following {\sc Example} 1 with $T = 400$, $K=2$ and $N$ varying in $\braces{200, 400, 800}$.
Figure \ref{fig:example-1-boxplot-1vs2} presents the ReMSE of estimated network AR coefficient $\bU\bB_1\bU^\top$ and 1-step prediction $y_{T+1}$ by CNAR 1st (first row) and 2nd-step estimator (second row), respectively.
We observe that estimation error of the 1st-step estimator does not change when $N$ increases while that of the 2nd-step estimator decreases.
This confirms our theoretical results that the 1st-step estimator converges at the rate $1/\sqrt{T}$ while the 2nd-step estimator converges at the rate $1/\sqrt{NT}$.

\begin{figure}[htpb!]
    \centering
    \begin{subfigure}[b]{0.35\textwidth}
        \centering
        \includegraphics[width=\linewidth]{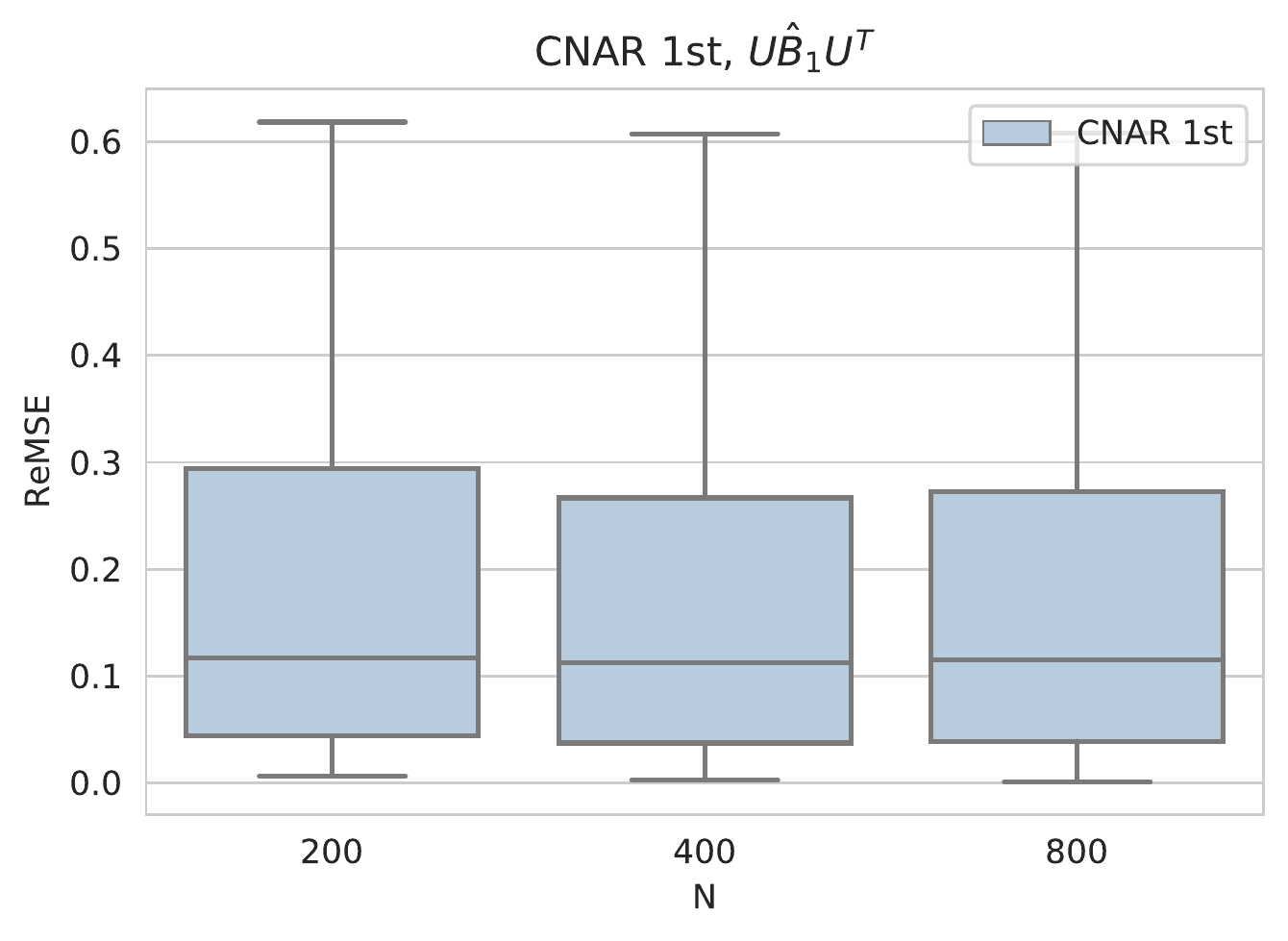}
    \end{subfigure}
    \hspace{3ex}
    \begin{subfigure}[b]{0.35\textwidth}
        \centering
        \includegraphics[width=\linewidth]{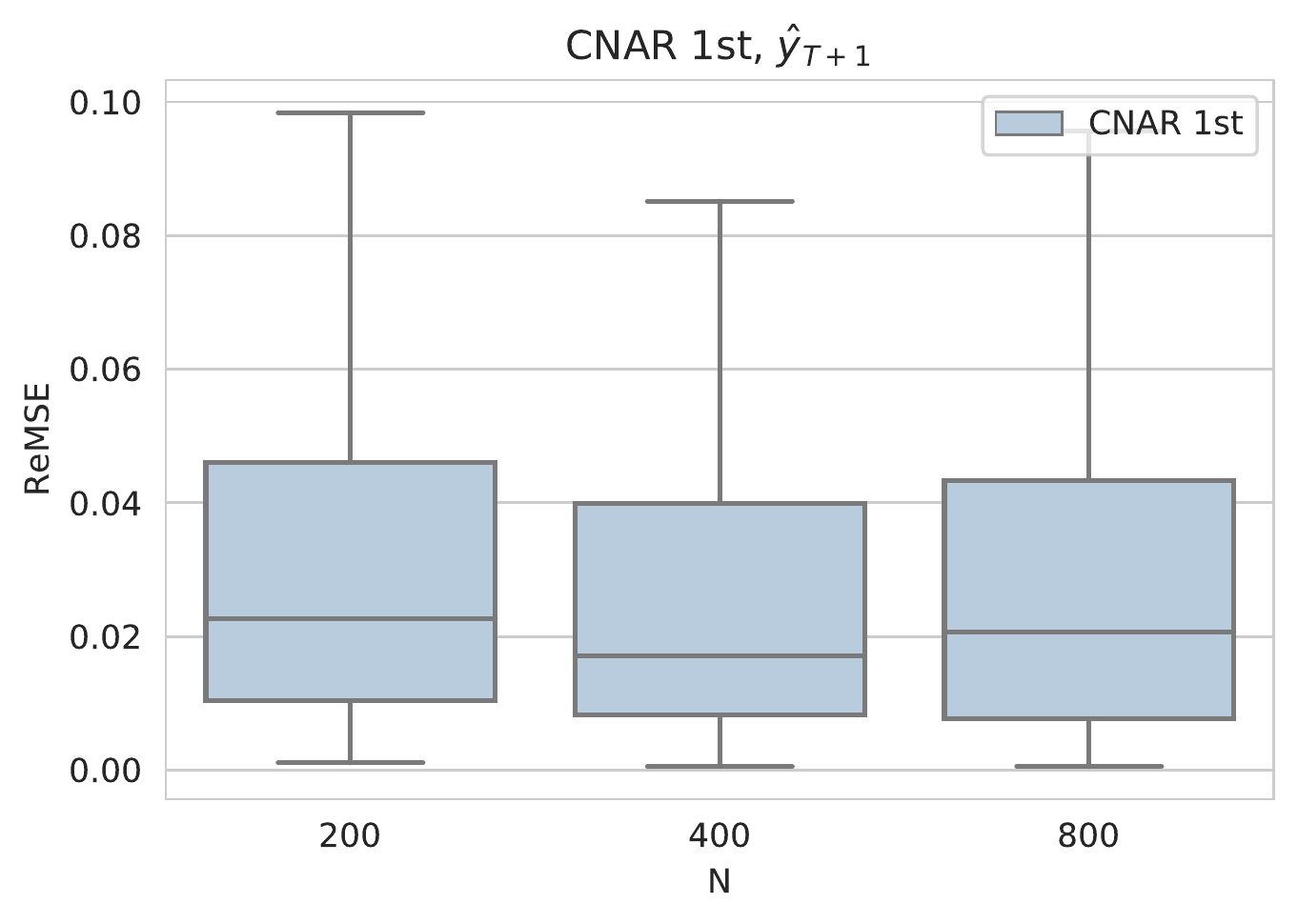}
    \end{subfigure}

    \begin{subfigure}[b]{0.35\textwidth}
        \centering
        \includegraphics[width=\linewidth]{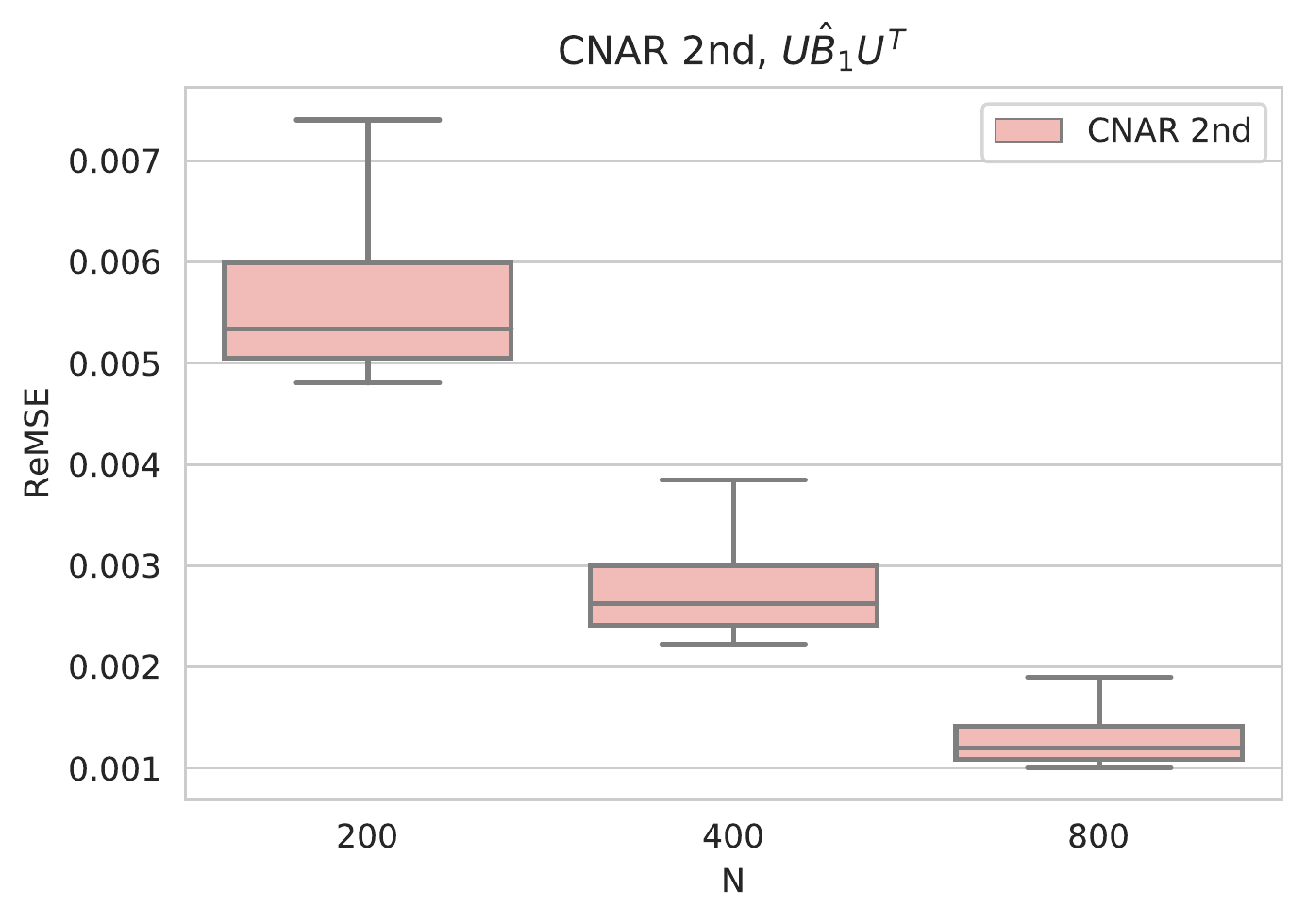}
    \end{subfigure}
    \hspace{3ex}
    \begin{subfigure}[b]{0.35\textwidth}
        \centering
        \includegraphics[width=\linewidth]{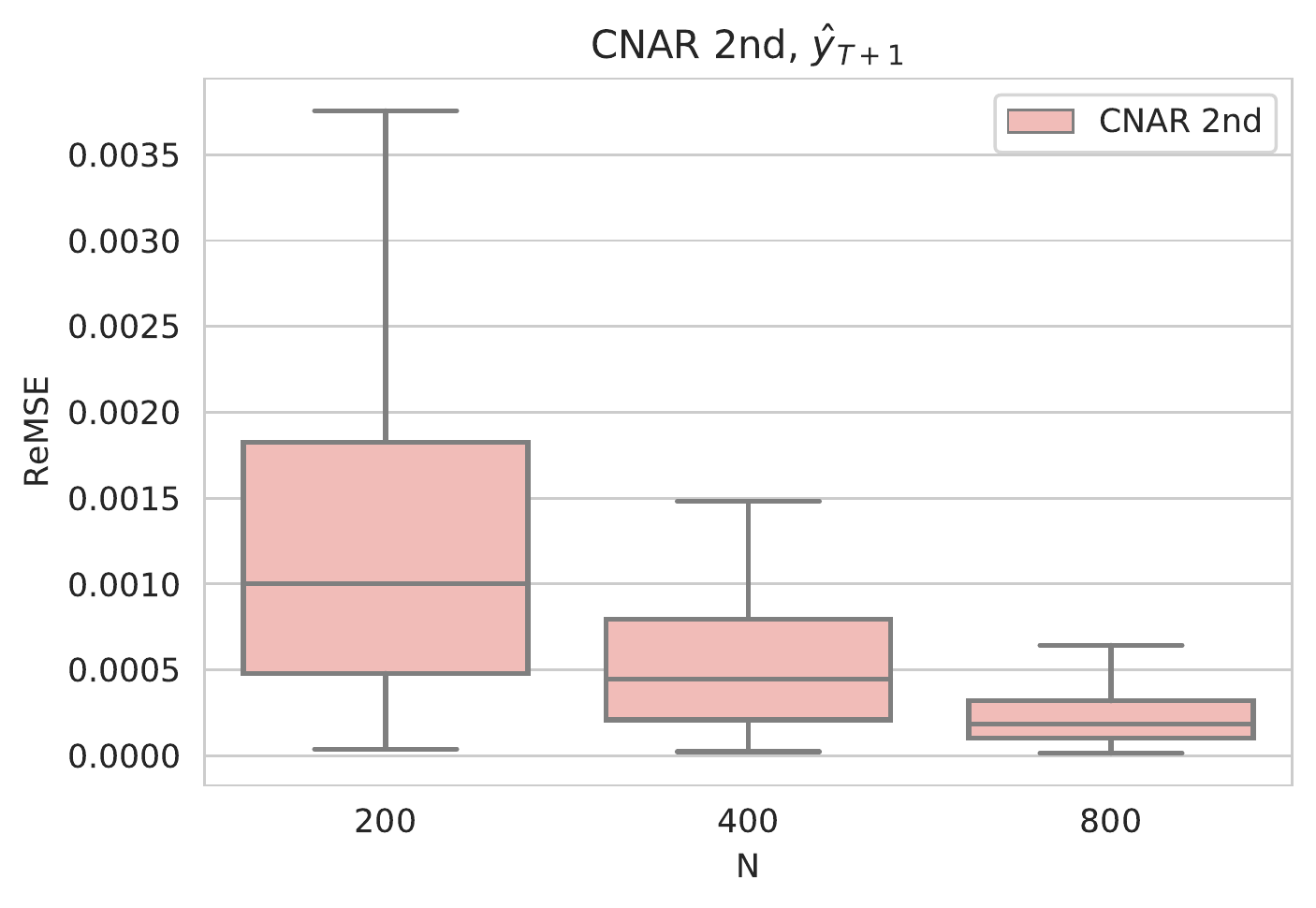}
    \end{subfigure}
    \caption{Example 1 box plots of estimated network AR coefficient $\bU\bB_1\bU^\top$ and 1-step prediction $y_{T+1}$ by CNAR 1st-step estimator (first row) and 2nd-step estimator (second row), respectively.
        For all plots, we have $K=2$ and $T=400$.}
    \label{fig:example-1-boxplot-1vs2}
\end{figure}


\section{Real Data Application}  \label{sec:appl}

In this section, we conduct data analysis using stock return data in Chinese A stock market,
which are traded in the Shanghai Stock Exchange and the Shenzhen Stock Exchange.
The response $Y_{it}$ is the daily stock return for $T = 244$ trading days of $N = 2187$ stocks
in the year of 2014.
Motivated by \cite{fama2015five},
we include the following sixcovariates related to corporations' fundamentals.
They are, {\sc SIZE} (i.e., the logarithm of market value),
{\sc BM} (i.e., book to market value),
{\sc PR} (i.e., yearly incremental profit ratio),
{\sc AR} (i.e., yearly incremental asset ratio),
{\sc LEV} (i.e., leverage ratio),
and {\sc CASH} (i.e., cash flow).

We first conduct a preliminary descriptive analysis of the data.
First, for each stock, we calculate the average return during the whole year, i.e.,
$T^{-1}\sum_t Y_{it}$.
This leads to the histogram in the left panel of Figure \ref{ave_return}.
The mean daily return of all stocks is given by 0.13\%.
Next, for each day, we calculate the average daily returns over all stocks, i.e.,
$N^{-1}\sum_i Y_{it}$.
This yields the line chart in the right panel of Figure \ref{ave_return}.
A higher volatility level can be captured during January to March and
November to December.

\begin{figure}[H]
    \centering
    \includegraphics[width=0.8\textwidth]{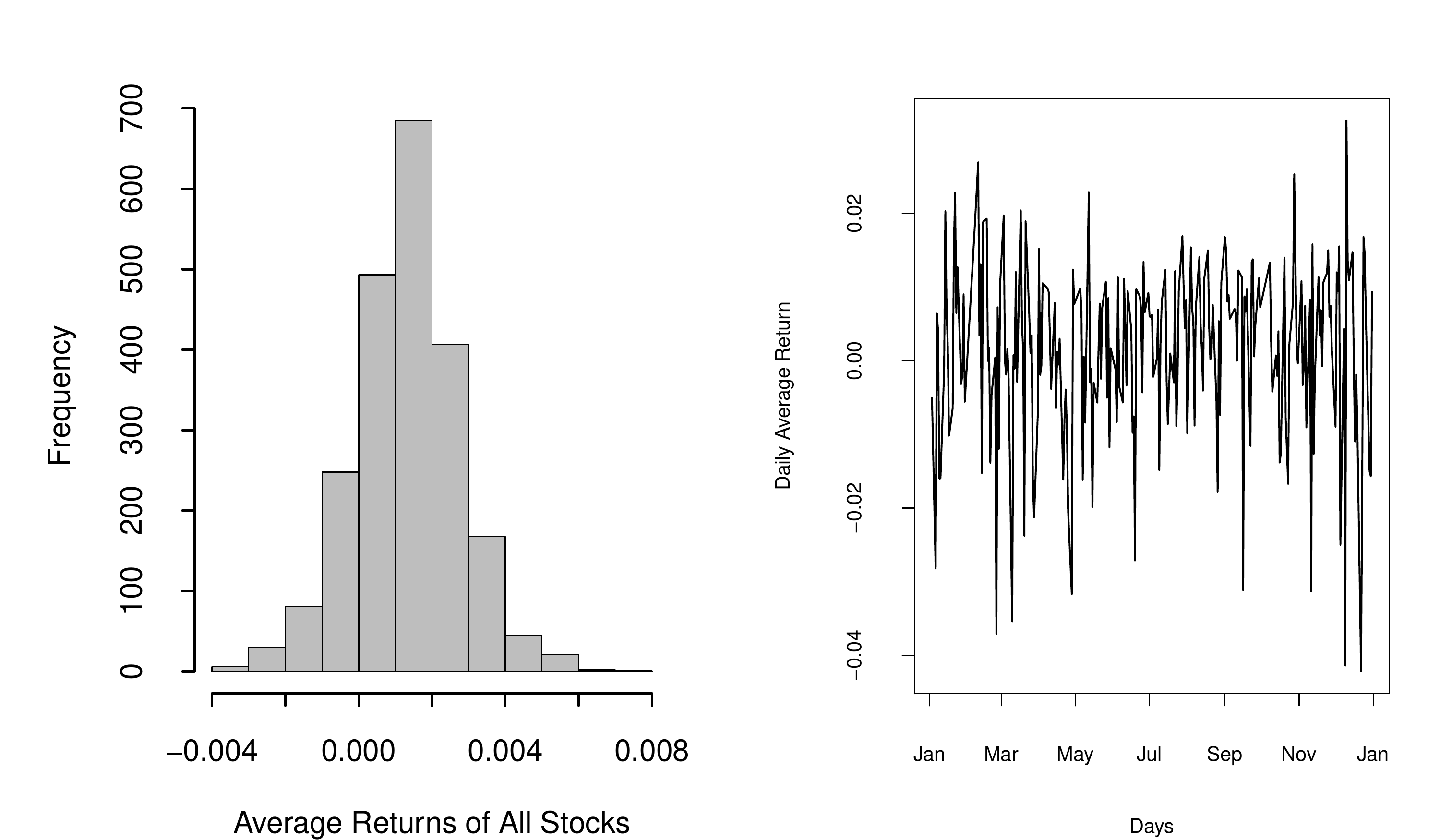}
    \caption{\small Left panel: the histogram of $N = 2187$ stocks' yearly average returns;
        right panel: the line plot of daily average returns of $N = 2187$ stocks.
    }\label{ave_return}
\end{figure}

We construct the network relationship by using the top ten shareholder information
of stocks.
Specifically, $a_{ij} = 1$ if the $i$th and $j$th stock shares at least one of the
top ten shareholders \citep{zhu2019network},
otherwise $a_{ij} = 0$.
We next conduct spectral clustering algorithm using the adjacency matrix
following \citep{lei2015consistency}.
By the screeplot of the eigenvalues (shown in Figure \ref{eigenvalue}),
we set the number of groups to be $K = 3$.
The group sizes are given as 78, 461, 1648 respectively.

\begin{figure}[H]
    \centering
    \includegraphics[width=3 in]{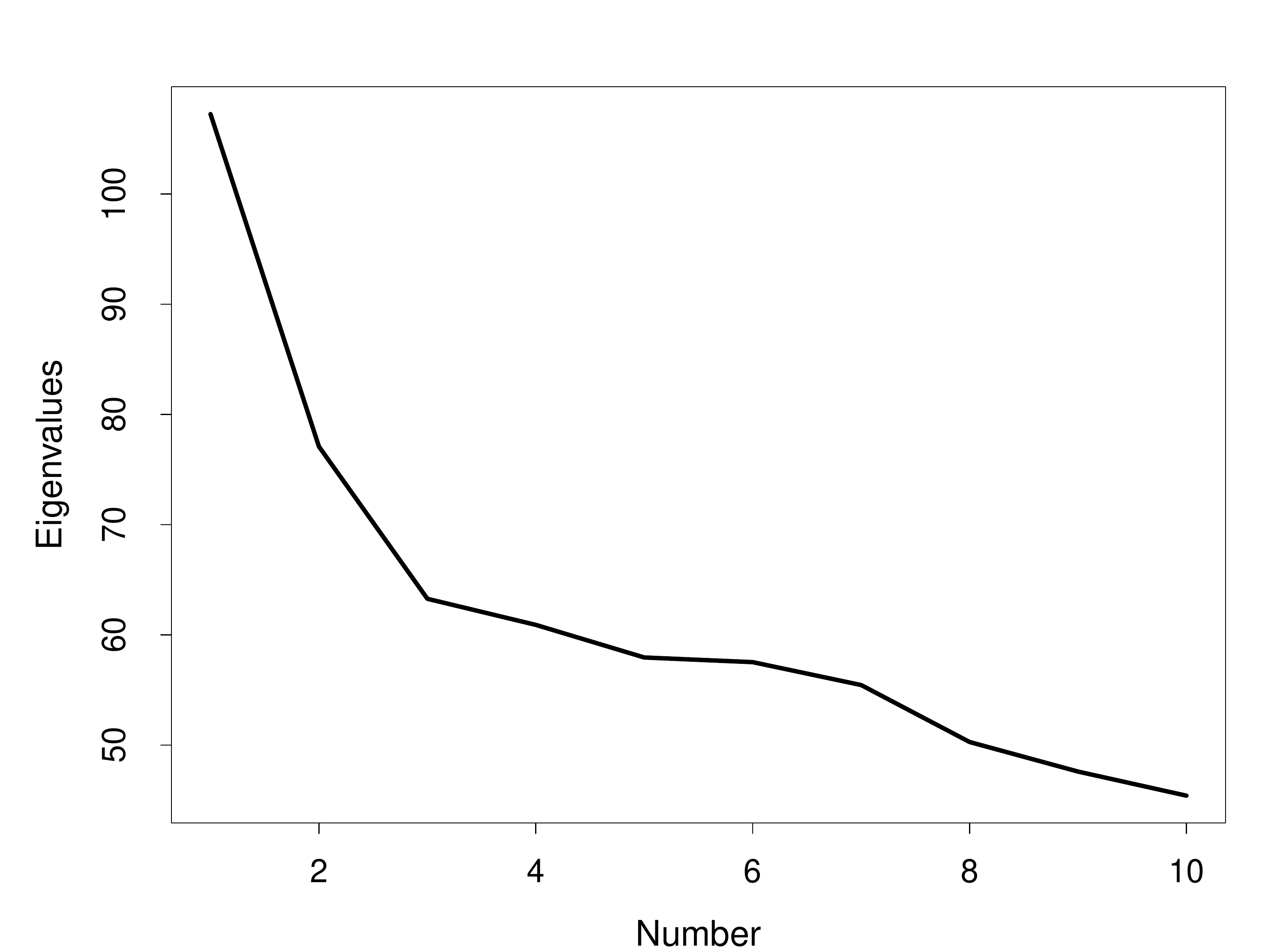}
    \caption{\small Decreasing eigenvalues of the adjacency matrix.
        The number of groups is set with $K = 3$.
    }\label{eigenvalue}
\end{figure}

We next estimate the parameters in the CNAR model using the data.
To evaluate the model performance, a sliding window approach is used.
The number of factors is selected to be $R = 3$ by using the screeplot approach.
The time window for model training is set to be $T_{train} = 150$ days.
Then, the subsequent $T_{test} = 25$ days are employed for model testing.
The mean square prediction error is calculated as
$\mbox{MPSE} = (NT_{test})^{-1}\sum_{t = t_{test,0}}^{t_{test,0}+T_{test}} \sum_i(\wh Y_{it} - Y_{it})^2$,
where $\wh Y_{it}$ is the predicted response and $t_{test,0}$
is the first time point of the testing data.
Further define a baseline MSPE as $\mbox{MSPE}_{0} = (NT_{test})^{-1}\sum_{t = t_{test,0}}^{t_{test,0}+T_{test}} \sum_i(\wh Y_{it} - \hat \mu_{i,train})^2$,
where $\hat \mu_{i,train} = T_{train}^{-1}\sum_{t = t_{test,0} - T_{train}}^{t_{test,0}-1} Y_{it}$ is the mean responses of the $i$th node within the training period.
A relative prediction error is then defined as
$\mbox{ReMSPE} = \mbox{MSPE}_{pred}/\mbox{MSPE}_{0}$.
For model comparison, the CNAR model is compared with the NAR model \citep{zhu2017network}
in terms of the prediction accuracy.
Specifically, the ReMSPE are given in Figure \ref{pred}.
As one could observe,
the prediction accuracy of the CNAR model is about 20\%--30\% higher than the NAR model,
which illustrates the prediction power of the proposed methodology.

\begin{figure}[H]
    \centering
    \includegraphics[width=5 in]{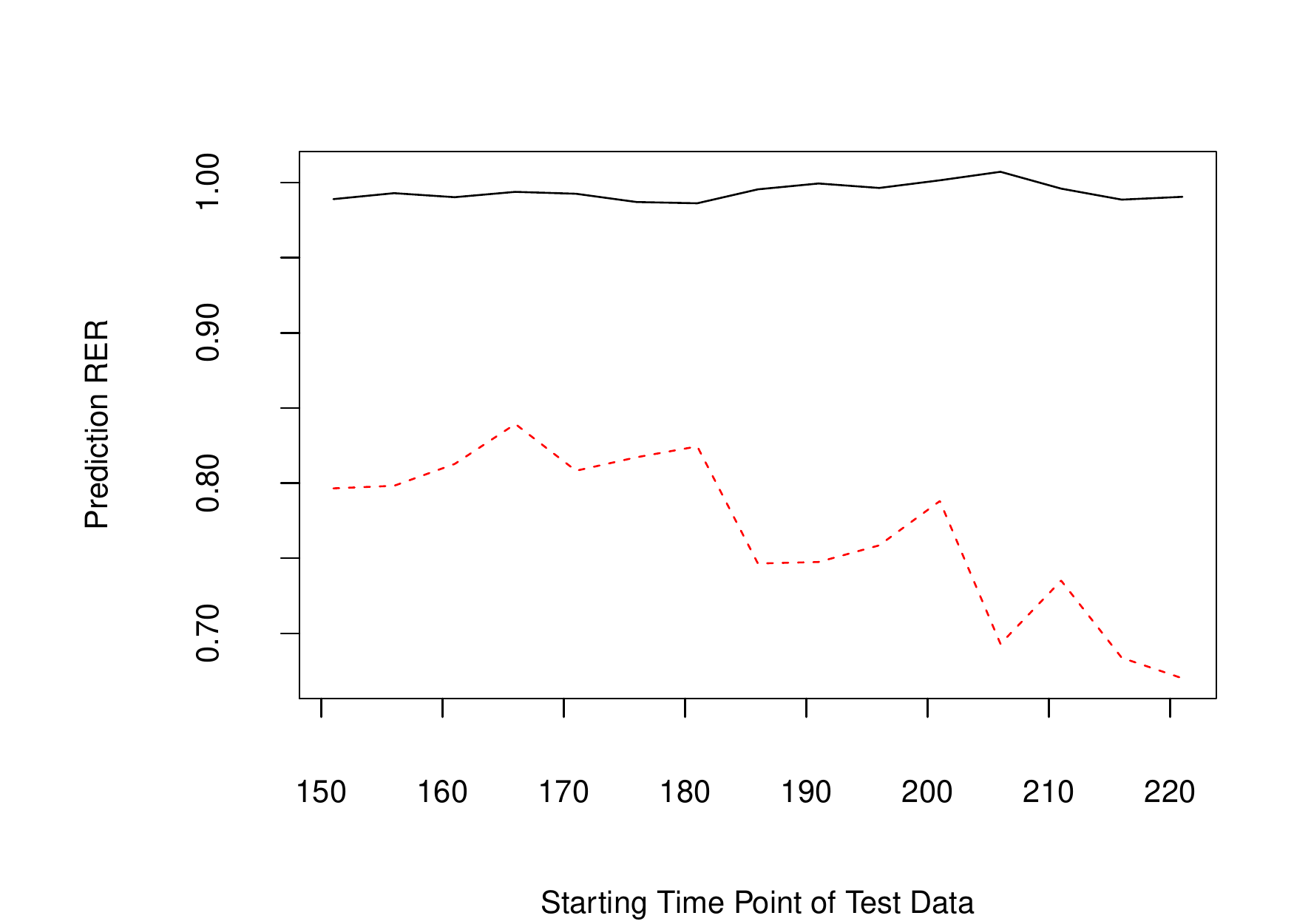}
    \caption{\small The ReMSPE of the CNAR model (red dash line) and the NAR model (black line). The CNAR model could achieve higher and more stable prediction accuracy rate than the NAR model.
    }\label{pred}
\end{figure}

\section{Summary} \label{sec:summ}

Modeling a high-dimensional time series with network dependence is a difficult yet commonly encountered problem in real-world applications.
The CNAR model mitigates the problem by incorporating the community structure in a network, yet provides more flexibility than the NAR model by allowing network coefficients to vary across different communities.
In addition, it takes account of possible non-network-related factors and thus corrects the unrealistic assumption of the NAR model that the noises are cross-sectional uncorrelated.
Both theoretical and empirical results confirm the advantage of CNAR model.

There are several interesting future research topics under the general framework of the CNAR model.
First, the theoretical analysis of the CNAR is based on the assumption of a underlying stochastic block model.
Generally speaking, any community structure will work if $\hat\bU$ can be estimated consistently.
This is illustrated empirically by a simulation study here.
Therefore, it is of great interest to investigate this problem formally under different network structures.
Second, the network structure discussed here is assumed to be static.
However, in reality the network structure typically evolve slowly with time.
Incorporating in the proposed CNAR model with a time-varying network structure is another interesting problem worth pursuing.
Last, in this work we allow for estimating network effects for a diverging number of communities.
In future study one could further consider high dimensional covariates and discuss variable screening and selection procedures under the model framework.


%
%

\clearpage
\bibliographystyle{\mybibsty}
\bibliography{\mybib}

%
%
\clearpage
\setcounter{page}{1}
\begin{appendices}

\section{Proofs of Parameter Estimation} \label{sec:proof}

\subsection{Proof of Theorem \ref{stat1}}\label{stat_proof}

Note that the solution given by Theorem \ref{stat1} satisfies the CNAR model.
We then verify that the solution satisfies strict stationarity.
Note that
\beq
\rho(\bG)\le \rho(\bU^\top\bB_1\bU) + |\beta_2|.\nonumber
\eeq
Specifically we have
$\rho(\bU^\top\bB_1\bU) = \max_{\|\bv\| = 1}|\bv^\top\bU^\top\bB_1\bU\bv|\le
\|\bU \bv\|^2\max_{\|\bv^*\| = 1}|\bv^{*\top}\bB_1\bv^*| = \rho(\bB_1)$.
It yields that
\beq
\rho(\bG)\le \rho(\bB_1) + |\beta_2|  <1\label{rho_G}
\eeq
by the condition in Theorem \ref{stat1}.
It then holds that $\lim_{m\rightarrow{\infty} }\sum_{j = 0}^m \bG^j\wt\bepsilon_{t-j}$ exists and is a strictly stationary process.

We next verify that the uniqueness of the stationary solution.
Assume $\{\wt\by_t\}$ is another stationary solution to the CNAR model (\ref{eqn:cnar})
with $E\|\wt\by_t\|_1<\infty$.
Then we have $\wt \by_t = \sum_{j = 1}^{m-1}\bG^j\wt\bepsilon_{t-j}
+ \bG^m \wt\by_{t-m}$ for any integer $m\ge 1$.
Consequently, by (\ref{rho_G}) one can conclude
$E\|\by_t - \wt\by_t\|_1 = E\|\sum_{j = m}^\infty \bG^j(\wt \bepsilon_{t-j}) - \bG^m\wt\by_{t-m}\|_1\le c\rho(\bG)^m$,
where $c$ is a finite constant.
Here $m$ is chosen arbitrarily. Hence we have $E\|\by_t - \wt\by_{t}\|_1 = 0$,
which implies $\by_t = \wt\by_t$ with probability 1.
This completes the proof.

\subsection{Proof of Theorem \ref{thm:first_step_estimator}} \label{append:first_step_estimator}

In the following we first prove that the asymptotic normality holds for $\wt\btheta$.
Then the result could also be obtained for $\wh\btheta^{(1)}$
by following the similar procedure.

{\sc Step 1. (Proof of $\balpha_{NT} \Sigma_x(\wt\btheta - \btheta)\rightarrow_d N(\zero, \Sigma_{2\ve})$)}
Recall that we have
\beq
\wt\btheta = \btheta + \big(\sum_t\bX_{t-1}^\top\bX_{t-1}\big)^{-1}\big(\sum_t\bX_{t-1}^\top\bepsilon_t\big).\nonumber
\eeq
It suffices to deal with $\wt \Sigma_x \defeq (NT)^{-1}\sum_t\bX_{t-1}^\top\bX_{t-1}$
and $\wt \Sigma_{x\ve} \defeq (NT)^{-1}\sum_t\bX_{t-1}^\top\bepsilon_t$ separately.
{Then we have
$\Sigma_x(\wt \btheta - \btheta) = -(\wt\Sigma_x - \Sigma_x)(\wt\btheta - \btheta)
+ \Sigma_{x\ve}$.}
Note that $\wt\Sigma_x$ and $\wt\Sigma_{x\ve}$ take the form as
\begin{align}
&\wt\Sigma_{x} = \frac{1}{NT}
\left(
\begin{array}{ccc}
(\sum_t  \bU^\top \by_{t-1}\by_{t-1}^\top \bU) \otimes \bI_K
& \sum_t (\bU^\top\by_{t-1})\otimes (\bU^\top \by_{t-1})
& \sum_t (\bU^\top\by_{t-1})\otimes (\bU^\top \bZ_{t-1}) \\
& \sum_t \by_{t-1}^\top \by_{t-1}
& \sum_t \by_{t-1}^\top\bZ_{t-1} \\
&
& \sum_t\bZ_{t-1}^\top\bZ_{t-1}
\end{array}
\right),\label{wt_Sigx}\\
&\mbox{~~~and~~~}
\wt\Sigma_{x\ve} =
\frac{1}{NT}\left(
\begin{array}{c}
\sum_t (\bU^\top \by_{t-1})\otimes (\bU^\top \bepsilon_t) \\
 \sum_t \by_{t-1}^\top\bepsilon_t\\
 \sum_t \bZ_{t-1}^\top\bepsilon_t
\end{array}
\right).\nonumber
\end{align}
Let $\bG = \bU\bB_1\bU^\top + \beta_2\bI_N$.
Note that $\by_t = \sum_{j = 0}^\infty
\bG^j(\bepsilon_{t-j} +
\bZ_{t-1-j}\bgamma)\defeq\sum_{j = 0}^\infty \bG^j\wt\bepsilon_{t-j}
$.
Without loss of generality, we assume $E(\bZ_t) = \zero$ in the following.

{\sc (1) $\wt\Sigma_x\rightarrow_p \Sigma_{x}$.}
{The result can be directly obtained by (\ref{del_Sigx}) of Lemma \ref{vSigx}.}

{\sc (2) $\balpha_{NT}\wt \Sigma_{x\ve} \rightarrow_d N(0,\Sigma_{2\ve})$.}
It suffices to show that
for any $\bfeta \in \RR^{K^2+p+1}$
we have $\bfeta^\top \balpha_{NT} \wt \Sigma_{x\ve}\rightarrow_d N(0,\bfeta^\top \Sigma_{2\ve}\bfeta)$.
Write
$\Sigma_{x\ve,t} = \bX_{t}^\top\bepsilon_t$.
Hence we have $\wt\Sigma_{x\ve} = \sum_t\Sigma_{x\ve,t} = \sum_t \bX_{t-1}^\top\bepsilon_t$.
Accordingly, denote $\xi_{Nt} = \bfeta^\top \wt\balpha_{NT}\Sigma_{x\ve,t} $
and $\mF_{Nt} = \sigma\{\bepsilon_{is}, \bZ_{i(s-1)}:1\le i\le N, -\infty <s\le t\}$,
where $\wt\balpha_{NT} = \diag\{N^{-1}T^{-1/2}\bI_{K^2}, N^{-1}T^{-1/2}, (NT)^{-1/2}\bI_{p}\}$.
Then $\{\sum_{s = 1}^t \xi_{Ns}, \mF_{Nt},-\infty<t\le T, N\ge 1\}$
constitutes a martingale array.
Then we employ the central limit theorem  (Corollary 3.1 of \cite{hall2014martingale}) to obtain the result.

First we have
\begin{align*}
\sum_{t = 1}^TE\{\xi_{Nt}^2I(|\xi_{Nt}|>\delta)|\mF_{N,t-1}\}
\le \delta^{-2}\sum_t E\{\xi_{Nt}^4|\mF_{N,t-1}\}
\end{align*}
Note that $\xi_{Nt}^2 = \bepsilon_t^\top \bX_{t-1}\wt\balpha_{NT} \bfeta\bfeta^\top\wt\balpha_{NT}
\bX_{t-1}^\top\bepsilon_t$.
According to (\ref{var_Q1}) of Lemma \ref{quad_lem},
we have $E\{\xi_{Nt}^4|\mF_{N,t-1}\}
\le c(NT)^{-2}(\bfeta^\top\balpha_{NT} \bX_{t-1}^\top\Sigma_{\ve}\bX_{t-1}\balpha_{NT} \bfeta)^2$,
where $c$ is a finite constant.
To verify the asymptotic normality, we prove the following two results.

{
(a) $\sum_t(\bfeta^\top\wt\balpha_{NT} \bX_{t-1}^\top\Sigma_{\ve}\bX_{t-1}\wt\balpha_{NT} \bfeta)^2\rightarrow_p0$.

Due to the similarity, we verify that
$N^{-4}T^{-2}\sum_t (\by_t^\top\Sigma_\ve \by_t)^2\rightarrow_p0 $.
It suffices to verify that
$N^{-4}T^{-2}\sum_t (\by_t^\top\bLambda\Sigma_f\bLambda^\top \by_t)^2\rightarrow_p0 $ and
$N^{-4}T^{-2}\sum_t (\by_t^\top\Sigma_e \by_t)^2\rightarrow_p0 $.
Note that $\by_t$ follows $N(\zero, \Gamma_y(0))$.
For the first one we have
$N^{-4}T^{-2}\sigma_1(\Sigma_f)^2\sum_t (\by_t^\top\bLambda\bLambda^\top \by_t)^2$.
Then $E\{(\by_t^\top\bLambda\bLambda^\top \by_t)^2\}
\le \{\tr(\bLambda^\top\Gamma_y(0)\bLambda)\}^2 \le c\sigma_1(\Gamma_y(0))^2 = O(N^2)$ by using (\ref{sigma_Gamma}).
Hence $N^{-4}T^{-2}\sigma_1(\Sigma_f)^2\sum_t E\{(\by_t^\top\bLambda\bLambda^\top \by_t)^2\} = O(T^{-1})\rightarrow 0$.
Then the result holds.
The second could be similarly proved by noticing $\sigma_1(\Sigma_e) = O(1)$.

(b) $\sum_t E\{\xi_{Nt}^2|\mF_{N,t-1}\} =
\sum_t \bfeta^\top\wt \balpha_{NT}\bX_{t-1}^\top \Sigma_{\ve}\bX_{t-1}\wt \balpha_{NT}\bfeta
\rightarrow_p \bfeta^\top\Sigma_{2\ve}\bfeta$.

Denote $\wt \Sigma_{2\ve} = \sum_t \wt \balpha_{NT}\bX_{t-1}^\top \Sigma_{\ve}\bX_{t-1}\wt\balpha_{NT}$.
Following the same procedure of proving (\ref{del_Sigx}), we could show that
\begin{align*}
P\big\{\big|\bfeta^\top(\wt\Sigma_{2\ve} - \Sigma_{2\ve})\bfeta
\big|\ge \xi\big\}\le 2\exp(- c\min\{T^2\xi^2, T\xi\}).
\end{align*}
Hence the conclusion (b) holds.

}

{\sc Step 2. (Proof of $\balpha_{NT}\Sigma_x(\wh\btheta^{(1)} - \btheta) \rightarrow_d N(\zero, \Sigma_{2\ve})$).}

{
First, assuming $\gamma_N\gg \sqrt{\log N}$ and $ T\gg (\log N/\gamma_N)^2$, we have
$\wh\Sigma_x- \Sigma_x = o_p(1)$ by using (\ref{del_hat_Sigx}).
Next, $\balpha_{NT}\wh \Sigma_{x\ve} \rightarrow_d N(0,\Sigma_{2\ve})$
could be similarly proved as Step (2) previously,
where $\wh \Sigma_{x\ve} = (NT)^{-1}\sum_t\wh\bX_{t-1}^\top\bepsilon_t$.
Particularly in proving (b) we use
\begin{align*}
P\big\{\big|\bfeta^\top(\wh\Sigma_{2\ve} - \Sigma_{2\ve})\bfeta
\big|\ge \xi\big\}\le 2\exp(- c\min\{T^2\xi^2, T\xi\})+
 c\frac{\sqrt K\log N}{\sqrt T\xi\gamma_N} + N^{-1}.
\end{align*}
by following (\ref{del_hat_Sigx}), where $\wh \Sigma_{2\ve} =
\sum_t \wt\balpha_{NT}\wh\bX_{t-1}^\top \Sigma_{\ve}\wh\bX_{t-1}\wt\balpha_{NT}$.

}

{

}

\subsection{Proof of Theorem \ref{thm_first_step_K_diverge}}\label{thm_first_step_K_diverge_proof}


{\sc 1. Proof of (\ref{Koracle_normal}).}
First we have
\begin{align*}
\balpha_{NT}\Sigma_x(\wt \btheta - \btheta)
 = -\balpha_{NT}(\wt \Sigma_x - \Sigma_x)(\wt \btheta - \btheta)
 + \frac{\balpha_{NT}}{NT}\big(\sum_t \bX_{t-1}^\top \bepsilon_{t}\big),
\end{align*}
where $\wt\Sigma_x$ is defined in (\ref{wt_Sigx}).
To prove the result, it suffices to show that for any $\bA_K\in \RR^{m\times (K^2+p+1)}$ with $\bA_K\bA_K^\top\rightarrow \bH$,
we have 
$\|(\wt\Sigma_x - \Sigma_x)(\wt\btheta - \btheta)\| =
o_p(\|\Sigma_x(\wt\btheta - \btheta)\|)$
and $\bA_K\Sigma_{2\ve}^{-1/2}\wt\balpha_{NT}$ $(\sum_t\bX_{t-1}^\top\bepsilon_t)
\rightarrow_d N(\zero, \bH)$,
where $\Sigma_{2\ve}$ is defined in (\ref{Sig_x_2e})
and $\wt \balpha_{NT} = \balpha_{NT}/(NT)$.
For the first conclusion, it suffices to verify that
$\max_i |\lambda_i(\wt\Sigma_x - \Sigma_x)| = o_p(\lambda_{\min}(\Sigma_x)).$
Note by (\ref{sup_delx1}) of Lemma \ref{sup_vSigx} and specifying $\xi
= \lambda_{\min}(\Sigma_x)$, $\max_i |\lambda_i(\wt\Sigma_x - \Sigma_x)| = o_p(\lambda_{\min}(\Sigma_x))$ can be obtained
by the assumption that $K = o(T^{1/2})$.
We then show
\begin{align}
\bA_K\Sigma_{2\ve}^{-1/2}\wt\balpha_{NT}\Big(\sum_t\bX_{t-1}^\top\bepsilon_t\Big)
\rightarrow_d N(\zero, \bH)\label{normal_theta1}
\end{align}
in the following.

Note here $m$ is finite, therefore
it suffices to show that for any $\bfeta\in \RR^m$
with $\|\bfeta\| = 1$
that $\bfeta^\top\bA_K\Sigma_{2\ve}^{-1/2}\wt\balpha_{NT}(\sum_t\bX_{t-1}^\top\bepsilon_t)
\rightarrow_d N(\zero, \bfeta^\top\bH\bfeta).$
Denote $\xi_{Nt} = (NT)^{-1/2}\bfeta^\top \bA_K\Sigma_{2\ve}^{-1/2}\wt\balpha_{NT}\bX_{t-1}^\top \bepsilon_t$
and $\mF_{Nt} = \sigma\{\bepsilon_{is}, \bZ_{i(s-1)}:1\le i\le N, -\infty <s\le t\}$.
Then  $\{\sum_{s = 1}^t \xi_{Ns}, \mF_{Nt},-\infty<t\le T, N\ge 1\}$
constitutes a martingale array.
The rest of the proof can be obtained by following the {Step 1 (2)} in the proof of  Theorem \ref{thm:first_step_estimator} to employ the central limit theorem  (Corollary 3.1 of \cite{hall2014martingale})
and (\ref{sup_delx1}) of Lemma \ref{sup_vSigx}.
%
%
%

\noindent
{\sc 2. Proof of (\ref{Kfirst_normal}).}
When we use $\wh \bU$, we still have
\begin{align*}
\balpha_{NT}\Sigma_x(\wh \btheta^{(1)} - \btheta) =
-\balpha_{NT}(\wh\Sigma_x - \Sigma_x)(\wh\btheta^{(1)} - \btheta) +
\frac{\balpha_{NT}}{NT}\Big(\sum_t \wh\bX_{t-1}^\top\bepsilon_t
\Big).
\end{align*}
The rest of the proof follows the proof of (\ref{Koracle_normal}) but using (\ref{sup_delx2}) of
Lemma \ref{sup_vSigx}
and $T\gg (\sqrt K\exp(K^2\log 21)\log N/\gamma_N)^2$.

\subsection{Proof of Theorem \ref{thm_sec_step_K_diverge}}\label{proof_sec_est}

Define $\Omega_\ve = \Sigma_\ve^{-1}$.
Let $\wh \Omega_\ve = \wh\Sigma_\ve^{-1}$ be the estimated precision
matrix after the first step estimation.
Correspondingly, let $\wh \Sigma_{2\theta} =
(NT)^{-1}\sum_t\wh\bX_t^{\top}\wh\Omega_\ve\wh\bX_t$.
We have the following expression,
\beq
\sqrt{NT}\Sigma_{2\theta}(\wh\btheta^{(2)} - \btheta) =
-\sqrt{NT}(\wh\Sigma_{2\theta} - \Sigma_{2\theta})(\wh\btheta^{(2)} - \btheta) +
(NT)^{-1/2}\big(\sum_t \wh\bX_{t-1}^\top \Omega_{\ve} \ve_t\big)\nonumber
\eeq
It suffices to show
for any $\bA_K\in \RR^{m\times (K^2+p+1)}$ with $\bA_K\bA_K^\top \rightarrow \bH$, that
\begin{align}
&\big\|(\wh \Sigma_{2\theta} - \Sigma_{2\theta})(\wh\btheta^{(2)} - \btheta)\big\| =
o_p\big(\big\|\Sigma_{2\theta}(\wh\btheta^{(2)} - \btheta)\big\|\big)\label{sig2theta1}\\
& (NT)^{-1/2}\bA_K \Sigma_{2\theta}^{-1/2}\big(\sum_t \wh\bX_{t-1}^\top \Omega_{\ve} \ve_t\big) \rightarrow_d N(\zero, \bH).\label{sig2theta2}
\end{align}
(\ref{sig2theta1}) can be obtained by using $\max_i|\lambda_i(\wh\Sigma_{2\theta} - \Sigma_{2\theta})| = o_p(\lambda_{\min}(\Sigma_{2\theta}))$ by (\ref{del_hatSig2theta}) of Lemma \ref{sup_vSigx}.
Next, (\ref{sig2theta2})
can be subsequently proved by using the central limit theorem  (Corollary 3.1 of \cite{hall2014martingale}) of the martingale sequence by following the proof of (\ref{normal_theta1}) in the proof of  Theorem \ref{thm_first_step_K_diverge}.

\subsection{Theoretical Properties under $\bU^\top\bLambda = \zero$}

In this scenario, we present the theoretical properties in the following theorem.

\bet\label{coro_ulam0}
Assume the same conditions as in Theorem \ref{thm:first_step_estimator} but
$\bU^\top\bLambda = \zero$.
Define $\bSigma_x$ and $\bSigma_{2\ve}$ as in (\ref{Sig_x_2e}) with modifications:
$\bSigma_u = \lim_{N\rightarrow\infty}\bU^\top\bGamma_y(0)\bU$,
$\bSigma_{uy} = \zero$,
$\bSigma_{ue} = \lim_{N\rightarrow\infty}\{\bU^\top\bGamma_y(0)\bU\}
\otimes (\bU^\top\bSigma_\ve \bU)$,
$\bSigma_{uye} = \zero$.
Then we have
$\balpha_{NT}\bSigma_x(\wt \btheta - \btheta)\rightarrow_d N(\zero, \bSigma_{2\eps})$.
\eet

Under $\bU^\top\bLambda = \zero$, the convergence rate of $\wt\btheta$ is still the same but
$\wt \bB_1$ is asymptotically uncorrelated with $\wt \beta_2$.
The proof is given as follows.

\begin{proof}

Under the case $\bU^\top \bLambda\ne \zero$,
we replace $\bX_{t-1}$ by
 $\bX_{t-1} = ((\by_{t-1}^\top \bU_N)\otimes \bU_N, \by_{t-1}, \bZ_{t-1})\in\RR^{N\times (K^2+p+1)}$
 with $\bU_N = N^{1/4}\bU$,
and $\btheta^b = (\vec(\wt \bB_1)^\top, \beta_2, \bgamma^\top)^\top$
with $\wt\bB_1 = \bB_1/\sqrt N$.
In addition, define $\wh \btheta^b$ as the estimator for $\btheta^b$.
Denote $\wt\Sigma_x$ and $\wt\Sigma_{x\ve}$ as in (\ref{wt_Sigx}) except replacing $\bU_N$ with $\bU$.
In addition, define $\wt \balpha_{NT} = \diag\{\sqrt{NT}\bI_{K^2}, \sqrt T, \sqrt{NT}\bI_{p}\}$.

Note that $\bU^\top\bLambda = \zero$ and $\wh \bU^\top \bLambda = \zero$, then we have
$\wh\bU_N^\top\by_{t - 1} = \wh\bU_N^\top\by_{t - 1}^*$
and $\wh\bU_N^\top\bepsilon_{t - 1} = \wh\bU_N^\top\bepsilon_{t - 1}^*$ in (\ref{wt_Sigx}),
where $\by_{t - 1}^* = \sum_{j = 0}^\infty
\bG^j(\bepsilon_{t-1-j}^* +
\bZ_{t-2-j}\bgamma)$ and $\bepsilon_{t - 1}^* = \bZ_{t-1}\bgamma+ \be_{t-1}$.
As a result, by
following the same procedure as in the proof of Theorem \ref{thm:first_step_estimator}, we can show that
$\wt \balpha_{NT}\Sigma_x(\wh \btheta^b - \btheta) \rightarrow_d N(\zero, \Sigma_{2\ve})$.
Equivalently, we could achieve the final result given in Theorem  \ref{coro_ulam0}.
\end{proof}


\subsection{Technical Lemmas}

{In this section we present several technical Lemmas, which will be used in the proof of the main Theorems.
Specifically,
Lemma \ref{lem_y_conv} to \ref{mu_G} establish convergence conditions for the high dimensional time series $\{\by_t\}$.
Lemma \ref{thm:SBM-eigvec-L2-bound} provides
Frobenius or $\ell_2$ bounds of the leading $K$ eigenvectors of the adjacency matrix $\bA$ under SBM.
Lemma \ref{vSigx} to \ref{sup_vSigx} establish nonasymptotic bounds related to
$\{\by_t\}$,
which are critical for proving main theoretical results of the estimators.

}

\bel\label{lem_y_conv}
Let $\bx_t = (x_{1t},\cdots, x_{Nt})^\top\in\RR^N$, where $x_{it}$s are independent and identically distributed
 random variables with mean zero, variance $\sigma_x^2$ and finite fourth order moment.
Let $\wt\by_t = \sum_{j = 0}^\infty \bG^j\bepsilon_{t-j}$, where $\bG\in\RR^{N\times N}$
and $\bepsilon_t\in\RR^N$ independently follows multivariate normal distribution with $N(\zero, \Sigma_{\ve})$.
Define $\Gamma(h) = \cov(\wt\by_t,\wt\by_{t-h})$.
In addition, assume the independence between $\bX_t$ and $\wt\by_t$.
Let $\bM\in \RR^{N\times N}$ be an arbitrary square matrix.
Then the following conclusion holds.\\
(a) $(NT)^{-1}\sum_{t = 1}^T  \wt \by_t^\top \bM \wt\by_t \rightarrow_p \lim_{N\rightarrow \infty}N^{-1}\tr\{\bM\Gamma(0)\}$ if the limit exists
and
\begin{align}
T^{-1/2}N^{-1}\sum_{i = 0}^\infty\sum_{j = 0}^\infty
\big[\tr\{(\bG^\top)^i \bM \bG^j  \Sigma_{\ve}(\bG^\top)^j
\bM^\top \bG^i\Sigma_{\ve}\}\big]^{1/2}\rightarrow 0\label{yMy}
\end{align}
as $N\rightarrow \infty$.\\
(b) $(NT)^{-1}\sum_{t = 1}^T \bX_t^\top \bM\wt\by_t\rightarrow_p0$
if
\beq
T^{-1/2}N^{-1}\sum_{j = 0}^\infty\big[
\tr\{\bM\bG^j\Sigma_{\ve}(\bG^\top)^j\bM^\top\}\big]\rightarrow0\label{yMX}
\eeq
as $N\rightarrow\infty$.\\
(c) $(NT)^{-1}\bX_t^\top\bM \bX_t\rightarrow_p \sigma_X^2\lim_{N\rightarrow\infty}\tr(\bM)$ if the limit exists
and $N^{-2}T^{-1}\tr(\bM\bM^\top)\rightarrow0$ as $N\rightarrow0$.
\eel

\begin{proof}
  Let $\wt\bepsilon_t = \Sigma_{\ve}^{-1/2}\bepsilon_t$.
Then we have $\wt\by_t = \sum_{j = 0}^\infty \bG^j \Sigma_{\ve}^{1/2}\wt\bepsilon_t$.
Using the conclusion (d) of Lemma 1 in \cite{zhu2017network}, the result can be obtained.
\end{proof}

\bel\label{lem_GG_upper}
Let $\bG = \bU\bB_1\bU^\top +\beta_2\bI_N$
and $\Sigma_\ve = \sigma_{e}^2 \bI_N +
\bLambda\Sigma_f\bLambda^\top$.
Assume $\sigma_1(\Sigma_e)\le c_e$ and $\sigma_1(\Sigma_f) /N\le c_f$,
where $c_e$ and
$c_f$ are finite constants.
Assume $\bU^\top\bU = \bI_K$ and $\bLambda^\top\bLambda = I_r$.
Then it holds for any integer $n\ge 0$ that
\begin{align}
\bfeta^\top \bG^n\Sigma_{\ve}(\bG^\top)^n\bfeta \le c_1c_\eta N\big\{|\beta_2|+\sigma_1(\bB_1)\big\}^{2n}\label{etaG}
\end{align}
for any vector $\bfeta\in\RR^N$ satisfying
$\bfeta^\top \bfeta = c_\eta$,
 and $c_1$ is a finite positive constant.
In addition, we have
\begin{align}
& \tr\big\{\bG^{n}\Sigma_{\ve}(\bG^\top)^{n}\big\}\le c_2\{|\beta_2|+ \sigma_1(\bB_1)\}^{n_1+n_2}N,\label{G_tr}\\
&\tr\big\{\bG^{n_1}\Sigma_{\ve}(\bG^\top)^{n_1} \bG^{n_2}\Sigma_{\ve}(\bG^\top)^{n_2}\big\}\le c_3\{|\beta_2|+ \sigma_1(\bB_1)\}^{2n_1+2n_2}N^2\label{GG_tr}
\end{align}
where $c_2$ and $c_3$ are  finite constants.
\eel

\begin{proof}
Note that $\Sigma_\ve = \Sigma_e +\bLambda\Sigma_f\bLambda^\top$.
Hence it suffices to deal with the upper bounds
respectively for
$\bG^n\Sigma_e(\bG^\top)^n$
and
$\bG^n\bLambda\Sigma_f\bLambda^\top(\bG^\top)^n$.

Note that $\bG = \beta_2\bI_N + \bU\bB_1\bU^\top$.
Then we have
\begin{align}
&\bG^n = \sum_{j = 0}^n C_n^j (\bU\bB_1\bU^\top)^j\beta_2^{n-j}
= \sum_{j = 1}^nC_n^j
\beta_2^{n-j}(\bU\bB_1^j\bU^\top)+ n\beta_2^n\bI_N\label{Gn}\\
&\bG^n(\bG^\top)^n = \sum_{j_1,j_2 = 0, j_1+j_2\ne 0}^n
C_n^{j_1}C_{n}^{j_2}
\beta_2^{2n-j_1-j_2}\bU\bB_1^{j_1}(\bB_1^\top)^{j_2}\bU^\top
+ n^2\beta_2^{2n}\bI_N.\label{GG_upper1}
\end{align}

\noindent
{\bf Proof of (\ref{etaG}).}
It suffices to deal with the upper bounds
respectively for
$\bfeta^\top \bG^n\Sigma_e(\bG^\top)^n\bfeta$
and
$\bfeta^\top \bG^n\bLambda\Sigma_f\bLambda^\top(\bG^\top)^n\bfeta$.

{\sc (1) Upper Bound for $\bfeta^\top \bG^n\Sigma_e(\bG^\top)^n\bfeta$.}
First note $\bfeta^\top \bG^n\Sigma_e(\bG^\top)^n\bfeta\le
c_e \bfeta^\top \bG^n(\bG^\top)^n\bfeta$.
Then it suffices to derive upper bounds for
$\bfeta^\top \bG^n(\bG^\top)^n\bfeta$ directly.
By (\ref{GG_upper1}) we have
$\bfeta^\top \bG^n (\bG^\top)^n\bfeta =$
\begin{align}
&\sum_{j_1,j_2 = 0, j_1+j_2\ne 0}^n
C_n^{j_1}C_{n}^{j_2}
\beta_2^{2n-j_1-j_2}\bfeta^\top(\bU\bB_1^{j_1}(\bB_1^\top)^{j_2}\bU^\top)\bfeta
+ n\beta_2^n c_\eta\nonumber\\
&\le\sum_{j_1,j_2 = 0, j_1+j_2\ne 0}^n
C_n^{j_1}C_{n}^{j_2}
|\beta_2|^{2n-j_1-j_2}\sigma_1(\bB_1)^{j_1+j_2}c_\eta
+ n|\beta_2|^n c_\eta =
\{|\beta_2|+ \sigma_1(\bB_1)\}^{2n}c_\eta,
\label{GG_upper}
\end{align}
where the inequality is due to the Cauchy's inequality that
$\bfeta^\top\{\bU\bB_1^{j_1}(\bB_1^\top)^{j_2}\bU^\top\}\bfeta
\le \{\bfeta^\top\bU\bB_1^{j_1}\\(\bB_1^{j_1})^\top\bU^\top\bfeta\}^{1/2}
\{\bfeta^\top\bU\bB_1^{j_2}(\bB_1^{j_2})^\top\bU^\top\bfeta\}^{1/2}
\le \sigma_1(\bB_1)^{j_1+j_2}c_\eta$.

{\sc (2) Upper bound for $\bfeta^\top \bG^n\bLambda\Sigma_f\bLambda^\top(\bG^\top)^n\bfeta$.}
Note that by $\bLambda^\top\bLambda = \bI_r$ and (\ref{GG_upper})
we have
$\bfeta^\top \bG^n\bLambda\Sigma_f\bLambda^\top(\bG^\top)^n\bfeta\le
\sigma_1(\Sigma_f)\bfeta^\top \bG^n(\bG^\top)^n\bfeta\le
\sigma_1(\Sigma_f)\{|\beta_2|+ \sigma_1(\bB_1)\}^{2n}c_\eta\le
c_fc_\eta N\{|\beta_2|+ \sigma_1(\bB_1)\}^{2n}$.

\noindent
{\bf Proof of (\ref{G_tr}) and (\ref{GG_tr}).}
Due to the similarity of the proof, we only show the proof of (\ref{GG_tr})
in the following.
Recall that $\Sigma_{\ve} = \Sigma_{e}+ \bLambda \Sigma_f\bLambda^\top$.
Then it suffices to derive the upper bound for
$\tr\{\bG^{n_1}(\bG^\top)^{n_1} \bG^{n_2}(\bG^\top)^{n_2}\}$
and $\tr\{\bG^{n_1}\bLambda\Sigma_f\bLambda^\top(\bG^\top)^{n_1} \bG^{n_2}
\bLambda\Sigma_f\bLambda^\top(\bG^\top)^{n_2}\}$
respectively.
By (\ref{GG_upper1}),
we have
$\bG^{n_1}(\bG^\top)^{n_1}\bG^{n_2}(\bG^\top)^{n_2}
 = $
\begin{align*}
\sum_{i_1,i_2 = 0}^{n_1}\sum_{j_1,j_2 = 0}^{n_2}
 C_{n_1}^{i_1}C_{n_1}^{i_2}
 C_{n_2}^{j_1}C_{n_2}^{j_2}
 \beta_2^{2n_1+2n_2 - i_1-i_2 - j_1-j_2}
 \bU \bB_1^{i_1}(\bB_1^\top)^{i_2}\bB_1^{j_1}(\bB_1^\top)^{j_2}\bU^\top
 I(i_1+i_2+j_1+j_2\ne 0)
\end{align*}
$+ n_1^2n_2^2 \beta_2^{2n_1+2n_2}\bI_N$.

Note that $\tr\{\bU \bB_1^{i_1}(\bB_1^\top)^{i_2}\bB_1^{j_1}(\bB_1^\top)^{j_2}\bU^\top
\} = \tr\{\bB_1^{i_1}(\bB_1^\top)^{i_2}\bB_1^{j_1}(\bB_1^\top)^{j_2}\}$.
By Cauchy's inequality, we have
\begin{align*}
 \big|\tr\{\bB_1^{i_1}(\bB_1^\top)^{i_2}\bB_1^{j_1}(\bB_1^\top)^{j_2}\}\big|&\le
 \big[\tr\big\{\bB_1^{i_1}(\bB_1^\top)^{i_2}\bB_1^{i_2}(\bB_1^\top)^{i_1}\big\}\big]^{1/2}
 \big[\tr\big\{\bB_1^{j_1}(\bB_1^\top)^{j_2}\bB_1^{j_2}(\bB_1^\top)^{j_1}\big\}\big]^{1/2}\\
& \le \sigma_1(\bB_1)^{i_1+i_2+j_1+j_2}K.
\end{align*}
Hence we have
\beq
\tr\{\bG^{n_1}(\bG^\top)^{n_1} \bG^{n_2}(\bG^\top)^{n_2}\}\le
\{|\beta_2|+ \sigma_1(\bB_1)\}^{2n_1+2n_2}N.\label{G4_upper}
\eeq
Next, note that by Cauchy's inequality
$\tr\{\bG^{n_1}\bLambda\Sigma_f\bLambda^\top(\bG^\top)^{n_1} \bG^{n_2}
\bLambda\Sigma_f\bLambda^\top(\bG^\top)^{n_2}\}\le$
\begin{align*}
&\sigma_{1}(\Sigma_f)\tr\big\{
\bLambda^\top(\bG^\top)^{n_1}\bG^{n_2}\bLambda\Sigma_f\bLambda^\top
(\bG^\top)^{n_2}\bG^{n_1}\bLambda\big\}\\
& \le \sigma_1(\Sigma_f)^2\tr\{\bLambda^\top(\bG^\top)^{n_1}\bG^{n_2}
\bLambda\bLambda^\top(\bG^\top)^{n_2}\bG^{n_1}\bLambda\} \\
&\le \sigma_1(\Sigma_f)^2\tr\{\bLambda^\top(\bG^\top)^{n_1}\bG^{n_2}
\bLambda\bLambda^\top(\bG^\top)^{n_2}\bG^{n_1}\bLambda\} \\
& \le  \sigma_1(\Sigma_f)^2\tr\{\bLambda^\top(\bG^\top)^{n_1}\bG^{n_1}
\bLambda\}\tr\{\bLambda^\top(\bG^\top)^{n_2}\bG^{n_2}
\bLambda\}\\
&\le c\sigma_1(\Sigma_f)^2 M^2 \{|\beta_2|+ \sigma_1(\bB_1)\}^{2n_1+2n_2},
\end{align*}
where $c$ is a finite constant.
Then by (\ref{G4_upper}) the result can be obtained.

\end{proof}

\bel\label{quad_lem}
Let $\{v_i {\in\RR^{1}}:1\le i\le N\}$ be a set of identically distributed random variables.
Assume that (a) $E(v_i) = 0$ for $1\le i\le N$;
(b) $E(v_i,v_j) = 0$ for any $i\ne j$;
(c) $E(v_iv_jv_k) = 0$ for any $1\le i,j,k\le N$;
(d) $E(v_i^2) = 1$ and $E(v_i^4) = \kappa_4$,
where $\kappa_4$ is a finite positive constant.
Let $\bv = (v_1,v_2,\cdots, v_N)^\top\in\RR^N$, $Q_1 = \bv^\top \bM_1 \bv + \bu_1^\top \bv$,
and $Q_2 = \bv^\top \bM_2 \bv + \bu_2^\top \bv$, where $\bM_1 = (m_{1,ij})\in\RR^{N\times N}$ and
$\bM_2= (m_{2,ij})\in\RR^{N\times N}$ are $N\times N$ dimensional matrices, $\bu_1, \bu_2\in\RR^{N}$ are $N$-dimensional vectors.
We then have
\begin{align}
&\cov(Q_1, Q_2) = \tr(\bM_1\bM_2^\top) + \tr(\bM_1\bM_2) +
(\kappa_4 - 3)\tr\big\{\diag(\bM_1)\diag(\bM_2)\big\}+
\bu_1^\top \bu_2,\label{cov_qq}\\
&\var(Q_1)\le c \tr(\bM_1^\top \bM_1) + \bu_1^\top \bu_1,\label{var_Q1}
\end{align}
where $c$ is a finite constant.
\eel

\begin{proof}
The proof of (\ref{cov_qq}) is given in Lemma 2 of \cite{zhu2019multivariate}.
Next, (\ref{var_Q1}) can be shown by using Cauchy's
inequality, i.e., $|\tr(\bM_1^2)|\le \tr(\bM_1^\top \bM_1)$.
\end{proof}

\bel\label{mu_G}
Assume $\bB_1$ is diagonalizable, i.e.,
$\bB_1 = \bP_B\bD_B \bP_B^{-1}$,
where $\bD_B$ is a diagonal matrix.
Let $\bG = \bU\bB_1\bU^\top$ as defined in Section 2.3.
Then
\begin{align*}
\mu_{\min}(\bG)\ge \big\{1-\sigma_1(\bB_1) - |\beta_2|\big\}^2\sigma_K(\bP_B)^{-2}
\sigma_K(\bP_B^{-1})^{-2},
\end{align*}
where
$\mu_{\min}(\bG) = \min_{| z| = 1, \bz\in \mC}
\{(\bI_N - \bG z)^*(\bI_N - \bG z)\}$.
\eel

\begin{proof}
Let $\bU_c\in\RR^{N\times (N-K)}$ collect the complement orthogonal
vector of $\bU$.
Further denote $\wt\bU = (\bU, \bU_c)\in\RR^{N\times N}$,
thus we have $\wt\bU^\top \wt\bU = \bI_N$.
Further denote $\wt\bP_B = \diag(\bP_B, \bI_N)$, $\wt \bP_B^{-} = \diag(\bP_B^{-1}, \bI_N)$,
and $\wt\bB_1 = \diag(\bB_1, \zero)$.
Therefore we have
$\bU\bB_1\bU^\top = \wt\bU\wt\bP_B\Lambda(\wt\bB_1) \bP_B^{-}\wt\bU^\top$,
where $\Lambda(\wt\bB_1) = \diag\{\lambda_1(\wt\bB_1),\cdots,
\lambda_N(\wt\bB_1)\}$ is a diagonal matrix.
This leads to
$\bG = \wt\bU\wt\bP_B\Lambda(\wt\bB_1) \bP_B^{-}\wt\bU^\top + \beta_2 \bI_N$.
First by definition we have we have
\begin{align*}
\mu_{\min}(\bG) = \min_{|z| = 1}
\{(\bI_N - \bG z)^*(\bI_N - \bG z)\} =
\min_{|\bz| = 1}
\{(z\bI_N - \bG)^*(z\bI_N - \bG)\} .
\end{align*}
Further by the decomposition of $\bG$
we have $z\bI_N - \bG = \wt\bU\wt\bP_B\bD_z\bP_B^{-}\wt\bU^\top$,
where $\bD_z$ is diagonal with entries
$z - \beta_2 - \lambda_i(\wt\bB_1)$.
By the condition we have $|\beta_2| + |\lambda_i(\wt\bB_1)|\le
|\beta_2| + |\sigma_i(\bB_1)|<1$.
Therefore $\bD_z$ is invertible for all $|z| = 1$
and the eigenvalues of $\bD_z^*\bD_z$
are $|z - \beta_2 - \lambda_i(\wt\bB_1)|^2\ge \{1-|\beta_2| -
\sigma_1(\bB_1)\}^2$ for all $|z| = 1$ and $1\le i\le N$.
Hence we have
\begin{align*}
\mu_{\min}(\bG) = \min_{|z| = 1}\big\{
\sigma(\wt\bU\wt\bP_B\bD_z\wt\bP_B^{-}
(\wt\bP_B^{-})^\top \bD_z^* \wt\bP_B^\top\wt\bU^\top)\big\}\ge
\sigma_K(\bP_B)^{2}\sigma_K(\bP_B^{-1})^2\{1-
|\beta_2| - \sigma_1(\bB_1)\}^2.
\end{align*}

\end{proof}

\begin{lemma} \label{thm:SBM-eigvec-L2-bound}
Let $\bA$ be an adjacency matrix generated from a stochastic block model $(\bTheta, \bB)$.
Assume that $\bA^\star = \bTheta \bB \bTheta^\top$ is of rank $K$, with smallest absolute nonzero eigenvalue at least $\gamma_N$ and $\underset{k,l}{\max} \, b_{kl} \le \alpha_N$ for some $\alpha_N \le \log N/N$.
Let $\hat\bU, \bU^\star \in \RR^{N\times K}$ be the $K$ leading eigen-vectors of $\bA$ and $\bA^\star$, respectively.
Then, there exist a $K\times K$ orthonormal matrix $\bH$ and a constant $C$ such that
\begin{equation} \label{eqn:SBM-eigvec-L2-bound}
    \norm{\hat\bU - \bU^\star\bH}_F \le \frac{2 \sqrt{2K}}{\gamma_N} C \sqrt{N \alpha_N}
\end{equation}
with probability at least $1 - N^{-1}$.
\end{lemma}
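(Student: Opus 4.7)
The plan is to decompose the bound into two standard ingredients: a Davis--Kahan type $\sin\Theta$ inequality that relates eigenvector perturbation to operator-norm perturbation of the matrix itself, and a matrix concentration bound for $\bA - \bA^\star$ under the SBM sparsity regime. The target inequality is already cast in the form ``(eigengap)$^{-1}$ $\times$ (operator-norm perturbation)'' with an extra $\sqrt{2K}$ factor, which is exactly the signature of the Davis--Kahan--Yu--Wang--Samworth rotation bound applied to a $K$-dimensional invariant subspace.

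First, I would invoke the variant of Davis--Kahan that produces an \emph{orthogonal rotation} $\bH \in \RR^{K\times K}$ satisfying
\begin{equation*}
\|\hat\bU - \bU^\star \bH\|_F \;\le\; \frac{2\sqrt{2K}\,\|\bA - \bA^\star\|_2}{\delta},
\end{equation*}
where $\delta$ is the eigengap separating the $K$ nonzero eigenvalues of $\bA^\star$ from the remaining zero eigenvalues. Because $\bA^\star = \bTheta \bB \bTheta^\top$ has rank $K$ with smallest nonzero eigenvalue at least $\gamma_N$ in absolute value, the spectrum of $\bA^\star$ consists of the $K$ nonzero eigenvalues (each of magnitude $\ge \gamma_N$) together with $N-K$ zero eigenvalues, so the relevant gap is exactly $\delta = \gamma_N$. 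This gives the factor $2\sqrt{2K}/\gamma_N$ already present in \eqref{eqn:SBM-eigvec-L2-bound}.

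Second, I would bound $\|\bA - \bA^\star\|_2$ with high probability. The difference $\bA - \bA^\star$ differs from the centered adjacency matrix $\bA - \E[\bA]$ only by the bounded diagonal correction $\diag(\bP)$, which contributes at most a constant to the operator norm. The matrix $\bA - \E[\bA]$ is symmetric with independent, centered, bounded, Bernoulli-type entries above the diagonal, each of variance at most $\alpha_N$. Applying the standard concentration inequality for adjacency matrices of inhomogeneous Erd\H{o}s--R\'enyi graphs (precisely Theorem 5.2 of Lei and Rinaldo, 2015, which the lemma statement credits), one obtains
\begin{equation*}
\|\bA - \bA^\star\|_2 \;\le\; C\sqrt{N\alpha_N}
\end{equation*}
on an event of probability at least $1 - N^{-1}$, for an absolute constant $C$. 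Plugging this bound into the Davis--Kahan inequality above yields exactly \eqref{eqn:SBM-eigvec-L2-bound}.

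The genuinely technical step is the spectral norm concentration for $\bA - \bA^\star$: a naive matrix Bernstein argument is lossy in the sparse regime $\alpha_N = O(\log N/N)$, and one needs the truncation plus discretization argument of Lei--Rinaldo (or the Bandeira--van Handel machinery) to obtain the sharp $\sqrt{N\alpha_N}$ rate. Since the lemma is stated as a direct adaptation of Lei--Rinaldo, I would simply quote their concentration result and spend the proof on (i) verifying its hypotheses under the SBM parameterization $(\bTheta,\bB)$ here, and (ii) carrying out the Davis--Kahan step carefully, making sure the rotation $\bH$ is orthogonal (not just arbitrary) so that $\bU^\star \bH$ remains an orthonormal frame for the same invariant subspace of $\bA^\star$.
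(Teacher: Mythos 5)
Your proposal matches the paper's proof essentially verbatim: the paper simply combines Lemma 5.1 of Lei and Rinaldo (2015) (the Davis--Kahan-type rotation bound yielding the factor $2\sqrt{2K}/\gamma_N$) with their Theorem 5.2 (the sparse-regime concentration $\norm{\bA-\bA^\star}\le C\sqrt{N\alpha_N}$ with probability at least $1-N^{-1}$), exactly the two ingredients you identify. No gaps.
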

\begin{proof}
Combining Lemma 5.1 and Theorem 5.2 of \cite{lei2015consistency}, we obtain that, for some $K\times K$ orthonormal matrix $\bH$,
\[
    \norm{\hat\bU - \bU^\star\bH}_F \le \frac{2 \sqrt{2K}}{\gamma_N} \norm{\bA - \bA^\star} \le \frac{2 \sqrt{2K}}{\gamma_N} C \sqrt{N \alpha_N}
\]
with probability at least $1 - N^{-1}$.
The constant $C$ is the absolute constant invovled in Theorem 5.2.
The maximum degree in Theorem 5.2 becomes $N \alpha_N$ in the current setting.
\end{proof}

\bel\label{vSigx}
Assume the same conditions as in Theorem  \ref{thm_first_step_K_diverge}.
{Define $\alpha_{uN} = 1$ if
$ \bU^\top\bLambda = \zero$ and $\alpha_{uN} = N$ otherwise.}
For any vector $\bv\in \RR^K$, $\bw, \bw_1\in\RR^{K^2}$,
$\bw_2\in\RR^{p}$
 with $\max\{\|\bv\|, \|\bw\|, \|\bw_1\|, \|\bw_2\|\} \le  1$ and $0<\xi<1$, it holds
\begin{align}
&P\Big\{
\frac{1}{ \alpha_{uN}T}\big|\sum_t\bv^\top\bU^\top \by_{t-1}\by_{t-1}^\top\bU\bv - T
\bv^\top\bU^\top\Gamma_y(0)\bU\bv
\big|\ge \xi
\Big\}\le 2\exp(-c\min\{T\xi, T^2\xi^2\})\label{uyyu_conv}\\
& P\Big\{
\frac{1}{ \alpha_{uN}T}\big|\sum_t
\bw^\top \big\{(\bU^\top\by_{t-1})\otimes (\bU^\top\by_{t-1})
-\vec(\bU^\top \Gamma_y(0)\bU)\big\}
\big|\ge \xi
\Big\}\le 2\exp(2\log K -  c\min\{T\xi, T^2\xi^2\})\label{uy_conv}\\
& P\Big\{
\frac{1}{\alpha_{uN}T}\big|\sum_t
\bw_1^\top \{(\bU^\top\by_{t-1})\otimes (\bU^\top\bZ_{t-1})\}\bw_2
\big|\ge \xi
\Big\}\le 2\exp(2\log K -  c\min\{T\xi, T^2\xi^2\})\label{uyz}\\
& P\Big\{
\frac{1}{TN}\big|\sum_t
\by_{t-1}^\top\by_{t-1} - T\tr(\Gamma_y(0))
\big|\ge \xi
\Big\}\le 2\exp( - c\min\{T\xi, T^2\xi^2\})\label{yy}\\
& P\Big\{
\frac{1}{TN}\big|\sum_t
\bw_2^\top\bZ_{t-1}^\top\bZ_{t-1}\bw_2 - NT\bw_2^\top\Sigma_Z\bw_2
\big|\ge \xi
\Big\}\le 2\exp( - cN\min\{T\xi, T^2\xi^2\}).\label{zz}
\end{align}
In addition, let $\wt\bepsilon_t = \bepsilon_t + \bZ_{t-1}\bgamma$,
$\wt\by_t = \sum_{j = 0}^\infty \bG^j \wt\bepsilon_{t-j}$,
$\wt\mu_y = T^{-1}\sum_t\wt\by_t$.
Then we have
\begin{align}
& P\Big\{{\frac{1}{\alpha_{uN}}}\big|\bv^\top\bU^\top \wt\mu_y\wt\mu_y^\top\bU\bv\big|\ge \xi
\Big\}\le  2\exp\{-c\min\{T\xi, T^2\xi^2\}\},\label{muy1}\\
& P\Big\{N^{-1}\big|\wt\mu_y^\top\wt\mu_y\big|\ge \xi
\Big\}\le  2\exp\{-c\min\{T\xi, T^2\xi^2\}\}.\label{muy2}
\end{align}
{If $\bU^\top\bLambda \ne 0$ and $\wh\bU^\top\bLambda \ne 0$, then define $\bX_{t-1} = ((\by_{t-1}^\top \bU)\otimes \bU, \by_{t-1}, \bZ_{t-1})\in\RR^{N\times (K^2+p+1)}$
and $\wh\bX_{t-1} = ((\by_{t-1}^\top \wh\bU)\otimes \wh\bU, \by_{t-1}, \bZ_{t-1})\in\RR^{N\times (K^2+p+1)}$.
Otherwise define $\bX_{t-1}$ by replacing $\bU$ (or $\wh \bU$) with $\bU_N$ (or $\wh \bU_N$).
Let $\wt \Sigma_x \defeq (NT)^{-1}\sum_t\bX_{t-1}^\top\bX_{t-1}$
and $\wh \Sigma_x \defeq (NT)^{-1}\sum_t\wh\bX_{t-1}^\top\wh\bX_{t-1}$.
Accordingly, define $\Sigma_x$ as in (\ref{Sig_x_2e}).}
It holds that for any $\bv\in\RR^{K^2+p+1}$
\begin{align}
&P\big\{\big|\bv^\top(\wt\Sigma_x - \Sigma_x)\bv
\big|\ge \xi\big\}\le 2\exp(- c\min\{T\xi, T^2\xi^2\}).\label{del_Sigx}
\end{align}
Furthermore by assuming $\gamma_N\gg  \sqrt{K\log N}/\xi$ we have
\begin{align}
&P\big\{\big|\bv^\top(\wh\Sigma_x - \Sigma_x)\bv
\big|\ge \xi\big\}\le 2\exp(-c\min\{T\xi, T^2\xi^2\}) + c\frac{\sqrt K\log N}{\sqrt T\xi\gamma_N}+ N^{-1}.\label{del_hat_Sigx}\\
&P\Big\{{\frac{1}{\alpha_{uN}}}\big|\bv^\top\wh\bU^\top \wt\mu_y\wt\mu_y^\top\wh\bU\bv\big|\ge \xi
\Big\}\le  2\exp\{-c\min\{T\xi, T^2\xi^2\}\} + c\frac{\sqrt K\log N}{\sqrt T\xi\gamma_N}+N^{-1}\label{uyyuhat_conv}
\end{align}
\eel

\begin{proof}
  We prove the results one by one in the following.
  {Recall that
$\by_t = \sum_{n = 0}^\infty\bG^n \bepsilon_{t-n}$
and $\bG^n$ takes the form in (\ref{Gn}).
Hence if $\bU^\top \bLambda = \zero$, we could drop $\bLambda\bff_t$ in $\wt \bepsilon_t$ and the proof procedure is the same.
In this case we could verify that $\bfeta^\top\bU^\top\Gamma_y(0)\bU\bfeta = O(1)$.
Assume $\bU^\top\bLambda\ne \zero$ (thus $\alpha_{uN} = N$) and
$\wh\bU^\top \bLambda\ne \zero$ for convenience in the following.
This leads to $N^{-1}\bfeta^\top\bU^\top\Gamma_y(0)\bU\bfeta \ge cN^{-1}\sigma_M(\Sigma_f)$ for any $\|\bfeta\| = 1$.

}

\noindent
{\bf Proof of (\ref{uyyu_conv}):}
Note that
$T^{-1}\bv ^\top \sum_t\bU^\top \by_{t-1}\by_{t-1}^\top \bU \bv
= T^{-1}\sum_t \by_{t-1}^\top \bU \bv \bv^\top \bU^\top \by_{t-1}$.
Let $\by = (\by_0^\top ,\cdots, \by_{T-1}^\top)^\top \in\RR^{NT}$
and denote $\Gamma = \cov(\by)$ and
$\bz = \Gamma^{-1/2}\by$.
Hence we have
$T^{-1}\sum_t\by_{t-1}^\top \bU \bv \bv^\top \bU^\top \by_{t-1}
= T^{-1}\bz^\top (\Gamma^{1/2})^\top \bD_U(\Gamma^{1/2})\bz$,
where $\bD_U = \bI_T\otimes (\bU \bv \bv^\top \bU^\top)\in\RR^{(NT)\times (NT)}$.
By the Hanson-Wright inequality of \cite{rudelson2013hanson}, we have
\begin{align}
P\Big\{
\big|\bz^\top \bQ\bz - E(\bz^\top \bQ\bz) \big|>{NT}\xi
\Big\}\le 2\exp\Big\{-c\min\Big(
\frac{N^2T^2\xi^2}{\|\bQ\|_F^2}, \frac{NT\xi}{\sigma_1(\bQ)}\Big) \Big\},\label{HW_ieq}
\end{align}
where $\bQ = (\Gamma^{1/2})^\top \bD_U(\Gamma^{1/2})$.
Note that $\|\bQ\|_F^2 = \tr(\bD_U\Gamma \bD_U\Gamma)\le
\sigma_1(\Gamma)^2 \tr(\bD_U^2)=T\sigma_1(\Gamma)^2$.
For any $\bw\in \RR^{NT}$ satisfying $\|\bw\| = 1$
we have
\begin{align*}
\bw^\top\bQ\bw &\le
\rho(\bD_U)(\bw^\top \Gamma\bw)\le
\rho(\bU\bv\bv^\top \bU^\top)(\bw^\top \Gamma\bw)\le \bw^\top \Gamma\bw\le \sigma_1(\Gamma)
\end{align*}
Recall that $\bG = \bU\bB_1\bU^\top + \beta_2\bI_N$.
According to \cite{basu2015regularized} (Proposition 2.3 and (2.6))
and Lemma \ref{mu_G}
we have
\begin{align}
\sigma_1(\Gamma) \le \frac{\sigma_1(\Sigma_{\ve})}{\mu_{\min}(\bG)}
\le \frac{\sigma_1(\Sigma_\ve)\sigma_K(\bP_B)^2\sigma_K(\bP_B^{-1})^2}
{\{1-|\beta_2| - \sigma_1(\bB_1)\}^2},\label{sigma_Gamma}
\end{align}
where $\mu_{\min}(\bG) = \min_{| z| = 1, \bz\in \mC}
\{(\bI_N - \bG z)^*(\bI_N - \bG z)\}$,
and $\bB_1 = \bP_B\bD_B \bP_B^{-1}$ is assumed to be diagonalizable.
Furthermore, note that
$\Sigma_{\ve} = \bLambda\Sigma_f\bLambda^\top + \Sigma_{e}$.
Therefore we have $\sigma_1(\Sigma_\ve)\le \sigma_1(\Sigma_f) + \sigma_1(\Sigma_{e}) = O(N)$.
Hence by the conditions in Theorem  \ref{thm_first_step_K_diverge}
and (\ref{sigma_Gamma}), we have $\sigma_1(\Gamma) = O(N)$,
which leads to the final result.

\noindent
{\bf Proof of (\ref{uy_conv}):}
Note that we have $(\bU^\top \by_{t-1})\otimes (\bU^\top\by_{t-1}) =
\vec(\bU^\top\by_{t-1}\by_{t-1}^\top\bU)$.
Let $\bw = (w_1,\cdots, w_{K^2})^\top$.
It can be derived that
\begin{align*}
&\big|\sum_t
\bw^\top\vec\big(\bU^\top\by_{t-1}\by_{t-1}^\top\bU\big)-
T\bw^\top\vec\big(\bU^\top\Gamma_y(0)\bU\big)
\big|\\
&\le \max_k |w_k|\max_{k_1k_2}\big|\sum_t\be_{k_1}^\top \bU^\top \by_{t-1}\by_{t-1}^\top \bU\be_{k_2} - T\be_{k_1}^\top\bU^\top \Gamma_y(0)\bU\be_{k_2}
\big|
\end{align*}
Using (\ref{HW_ieq}) and follow the same technique in proving (\ref{uyyu_conv}), we can show for any $1\le k_1,k_2\le K$
\begin{align}
 P\Big\{
\frac{1}{T N}\big|\sum_t
\be_{k_1}^\top(\bU^\top\by_{t-1}\by_{t-1}^\top\bU)\be_{k_2}
\big|\ge \xi
\Big\}\le 2\exp(-c\min\{T\xi, T^2\xi^2\}).
\end{align}
Then by the maximum inequality property, (\ref{uy_conv}) can be proved.

\noindent
{\bf Proof of (\ref{uyz})--(\ref{zz}):}
The proofs of (\ref{uyz}) and (\ref{yy}) are similar to
(\ref{uy_conv}).
{Note that for (\ref{uyz}) $E[\bw_1^\top \{(\bU^\top\by_{t-1})\otimes (\bU^\top\bZ_{t-1})\}\bw_2]
= \bw_1^\top\vec\{\bU^\top E(\by_t(\bZ_t\bw_2)^\top) \bU\}
=
\bw_1^\top\vec\{\bU^\top  \bU\}(\bgamma^\top\Sigma_Z\bw_2) = O(1)$.
Hence $N^{-1}E[\bw_1^\top \{(\bU^\top\by_{t-1})\otimes (\bU^\top\bZ_{t-1})\}\bw_2]\rightarrow 0$ and omitted here.}

Lastly, (\ref{zz}) is simple to prove by directly using (\ref{HW_ieq})
and the fact that $p$ is finite.

\noindent
{\bf Proof of (\ref{del_Sigx}):}
The result of (\ref{del_Sigx}) is implied by
(\ref{uyyu_conv})--(\ref{zz})
by noting $\wt\Sigma_x$ takes the form in (\ref{wt_Sigx}).

\noindent
{\bf Proof of (\ref{del_hat_Sigx}) and (\ref{uyyuhat_conv}):}
The result of (\ref{del_hat_Sigx}) can be obtained by replacing
$\bU$ by $\wh\bU$ in (\ref{uyyu_conv})--(\ref{uyz}).
Due to the similarity we only show the first one.
Note that we have we have decomposition as follows
\begin{align*}
&\frac{1}{NT}\big|\sum_t\bv^\top\wh\bU^\top \by_{t-1}\by_{t-1}^\top\wh\bU\bv - T
\bv^\top\bU^\top\Gamma_y(0)\bU\bv
\big|\\
& \le \frac{1}{NT}\big|\sum_t\bv^\top\wh\bU^\top \by_{t-1}\by_{t-1}^\top\wh\bU\bv - T
\bv^\top\wh\bU^\top\Gamma_y(0)\wh\bU\bv
\big| +
\frac{1}{N}\big|\bv^\top\wh\bU^\top\Gamma_y(0)\wh\bU\bv -
\bv^\top \bU^\top\Gamma_y(0) \bU\bv\big|\\
&\defeq \Delta_1 + \Delta_2
\end{align*}
We then bound the above two parts respectively.

{

\noindent
{\sc (a) Bound on $\Delta_2$.}

Note that
\begin{align}
&\big|\bv^\top\wh\bU^\top\Gamma_y(0)\wh\bU\bv -
\bv^\top \bU^\top\Gamma_y(0) \bU\bv\big|\nonumber \\
&\le
\big|\bv^\top\wh\bU^\top\Gamma_y(0)\wh\bU\bv -
\bv^\top \bU^\top\Gamma_y(0) \wh\bU\bv\big|+
\big|\bv^\top \bU^\top\Gamma_y(0)\wh\bU\bv -
\bv^\top \bU^\top\Gamma_y(0) \bU\bv\big|\nonumber\\
& \le \big\{
\bv^\top (\wh\bU - \bU)^\top(\wh\bU - \bU)\bv
\big\}^{1/2}
\big\{\bv^\top\wh\bU^\top \Gamma_y(0)^2\wh\bU \bv
\big\}^{1/2}+
\big\{
\bv^\top (\wh\bU - \bU)^\top(\wh\bU - \bU)\bv
\big\}^{1/2}
\big\{\bv^\top \bU^\top \Gamma_y(0)^2 \bU \bv
\big\}^{1/2}\nonumber\\
&\le 2\sigma_1\{\Gamma_y(0)\}
\big\{
\bv^\top (\wh\bU - \bU)^\top(\wh\bU - \bU)\bv
\big\}^{1/2}\le
2\sigma_1\{\Gamma_y(0)\}\|\wh \bU - \bU\|_F\label{del_UU}
\end{align}
Hence we have
\begin{align*}
P\Big\{\frac{1}{N}
\big|\bv^\top\wh\bU^\top\Gamma_y(0)\wh\bU\bv -
\bv^\top \bU^\top\Gamma_y(0) \bU\bv\big| \ge \xi
\Big\}\le P\Big\{
\big\|\wh\bU - \bU\big\|_F\ge \frac{\xi  N}{2\sigma_1\{\Gamma_y(0)\}}
\Big\}
\end{align*}
By letting $N\xi/\{2\sigma_1(\Gamma_y(0))\} \ge 2C\sqrt{2K\log N}/\gamma_N$
and noting $\gamma_N\gg \sqrt{K\log N}/\xi$
and $\sigma_1(\Gamma_y(0))/N\le \sigma_1(\Gamma)/N < \infty$ by (\ref{sigma_Gamma}), the conclusion can be obtained
by using Lemma \ref{thm:SBM-eigvec-L2-bound}.

\noindent
{\sc (b) Bound on $\Delta_1$.}

Let $\Delta_y = (NT)^{-1}\sum_t \by_{t-1}\by_{t-1}^\top - N^{-1}\Gamma_y(0)$.
Note that we have
$\bv^\top\wh\bU^\top \Delta_y\wh\bU\bv = \bv^\top\wh\bU^\top \Delta_y\wh\bU\bv -
\bv^\top \bU^\top \Delta_y \bU\bv + \bv^\top \bU^\top \Delta_y \bU\bv$.
Then we have $P\{|\bv^\top \bU^\top \Delta_y \bU\bv|\ge \xi\}\le 2\exp(-c\min(T\xi^2, T^2\xi^2))$ by (\ref{del_Sigx}).
By using the technique in deriving (\ref{del_UU}), we have
$|\bv^\top\wh\bU^\top \Delta_y\wh\bU\bv -
\bv^\top \bU^\top \Delta_y \bU\bv|\le 2\tr\{\Delta_y^2\}^{1/2}\|\wh \bU - \bU\|_F$.
It can be verified that $E\tr(\Delta_y^2)^{1/2}\le \{E\tr(\Delta_y^2)\}^{1/2} = O(1/\sqrt T)$.
Consider the events $\mE = \{\tr(\Delta_y^2)^{1/2}>a/\sqrt T\}$.
Then by Chebyshev's Inequality,
\begin{align*}
P\Big(\big|\bv^\top\wh\bU^\top \Delta_y\wh\bU\bv -
\bv^\top \bU^\top \Delta_y \bU\bv\big| > \xi\Big)&\le
P(\mE) + P\big\{\|\wh\bU - \bU\|_F\ge \sqrt T\xi/(2a)\big\}
\le c/a + N^{-1}
\end{align*}
as long as $\sqrt T\xi/a\gg \sqrt{K\log N}/\gamma_N$.
Letting $a = \sqrt T\xi\gamma_N/(\sqrt K\log N)$, the result can be obtained.


Lastly, (\ref{uyyuhat_conv}) can be obtained similarly by replacing $\bU$ with $\wh \bU$
in (\ref{muy1}).

%

}

By using the same technique in (\ref{uyyu_conv}), we can firstly show that
\begin{align*}
P\Big\{
\frac{1}{NT}\big|\sum_t\bv^\top\wh\bU^\top \by_{t-1}\by_{t-1}^\top\wh\bU\bv - T
\bv^\top\wh\bU^\top\Gamma_y(0)\wh\bU\bv
\big|\ge \xi
\Big\}\le 2\exp(-c\min\{T\xi, T^2\xi^2\}).
\end{align*}

\noindent
{\bf Proof of (\ref{muy1}) and (\ref{muy2}):}
Let $\wt\by = (\wt\by_1^\top,\cdots, \wt\by_T^\top)^\top\in\RR^{NT}$
and let $\Gamma = \cov(\wt\by)$,
$\bz = \Gamma^{-1/2}\wt\by$.
Then we have $\wt\mu_y = T^{-1}(\one_T^\top\otimes \bI_N)\wt\by$.
Hence we have
$\bv^\top\bU^\top \wt\mu_y\wt\mu_y^\top \bU\bv =
T^{-2} \bz^\top (\Gamma^{1/2})^\top (\one_T\one_T^\top)\otimes (\bU\bv\bv^\top\bU^\top) \Gamma^{1/2}\bz\defeq T^{-1}\bz^\top\bQ \bz$
for any $\bv\in \RR^N$.
It can be easily verified that
$\|\bQ\|_F^2 \le \sigma_1(\Gamma)^2$
and $\sigma_1(\bQ)\le \sigma_1(\Gamma)$.
Similar to (\ref{uyyu_conv}),
by using the Hanson-Wright inequality given in \cite{rudelson2013hanson}, it can concluded that
\begin{align*}
P\Big\{{\frac{1}{N}}\big|\bv^\top\bU^\top \wt\mu_y\wt\mu_y^\top\bU\bv\big|\ge \xi
\Big\}\le  2\exp\{-c\min(T^2\xi^2,T\xi)\},
\end{align*}
which proves (\ref{muy1}).

Similarly, note that $N^{-1}\wt\mu_y^\top\wt\mu_y = N^{-1} T^{-2}\wt\by^\top\{(\one_T\one_T^\top)\otimes \bI_N\}\wt\by=
N^{-1}T^{-2}\bz^\top (\Gamma^{1/2})^\top
\{(\one_T\one_T^\top)\otimes \bI_N\}\Gamma^{1/2} \bz\defeq T^{-1}N^{ -1}\bz^\top \bQ\bz$.
Then it can be verified that $\|\bQ\|_F^2\le \sigma_1(\Gamma)^2$
and $\sigma_1(\bQ)\le \sigma_1(\Gamma)$.
Therefore by the Hanson-Wright inequality (\ref{muy2}) can be proved.

\end{proof}

\bel\label{sup_vSigx}
Assume the same conditions as in Theorem  \ref{thm_first_step_K_diverge}.
Let $\bX_t$, $\wh\bX_t$,
$\wt \Sigma_x $
and $\wh \Sigma_x $ be defined as in Lemma \ref{vSigx}.
In addition, let $\wt \mu_y$ be defined as in Lemma \ref{vSigx}.
We then have
\begin{align}
& P\Big\{\sup_{\|\bv\|\le 1}\big|\bv^\top(\wt\Sigma_x - \Sigma_x)\bv
\big|\ge \xi\Big\}\le 2\exp\big\{K^2\log 21 - c\min\{T\xi, T^2\xi^2\}\big\},\label{sup_delx1}\\
& P\Big\{\sup_{\|\bv\|\le 1}\big|\bv^\top(\wh\Sigma_x - \Sigma_x)\bv
\big|\ge \xi\Big\}\le 2\exp\big\{K^2\log 21 - c_2\min\{T\xi, T^2\xi^2\}\big\}\nonumber\\
&~~~~~~~~~~~\quad \quad\quad\quad\quad\quad
+ N^{-1}\exp(K^2\log 21)+c\frac{\sqrt K\log N\exp(K^2\log 21)}{\sqrt T\xi\gamma_N},\label{sup_delx2}\\
& P\Big\{\sup_{\|\bv\|\le 1}\frac{1}{\alpha_{uN}}\big|\bv^\top(\bU^\top \wt\mu_y\wt\mu_y^\top \bU)\bv
\big|\ge \xi\Big\}\le 2\exp\big\{K^2\log 21 - c\min\{T\xi, T^2\xi^2\}\big\}.\label{muy_del}
\end{align}
Let $\omega_{NTK} = N^{-1/2}+ (\log N/T)^{1/2} + \kappa_{NT}K$.
Further let
$\wt\Sigma_{2\theta} = (NT)^{-1}\sum_t\bX_{t-1}^\top \wh\Omega_\ve \bX_{t-1}$
$\wh\Sigma_{2\theta} = (NT)^{-1}\sum_t\wh\bX_{t-1}^\top \wh\Omega_\ve \wh\bX_{t-1}$.
Then we have
\begin{align}
& \sup_{\|\bv\|\le 1} \big|
\bv^\top (\wt\Sigma_{2\theta} - \Sigma_{2\theta})\bv
\big| = o_p(1),\label{del_Sig2theta}\\
& \sup_{\|\bv\|\le 1} \big|
\bv^\top (\wh\Sigma_{2\theta} - \Sigma_{2\theta})\bv
\big| = o_p(1).\label{del_hatSig2theta}
\end{align}
\eel

\begin{proof}
\noindent
{\bf Proof of (\ref{sup_delx1}):}
We consider the set $\mV = \{v\in\RR^{K^2+p+1}: \|v\|\le 1\}$.
Choose $\mV^* = \{\bv_1,\cdots, \bv_m\}$ as a $1/10$-net of $\mV$,
where the definition of covering net is given in \cite{vershynin2011lectures}.
By Lemma 3.5 of \cite{vershynin2011lectures},
we have $|\mV^*|\le 21^{K^2+p+1}$.
It holds that for every $\bv\in\mV$, there exists some $\bv_i\in\mV^*$
such that $\|\Delta \bv\|\le 1/10$, where $\Delta \bv = \bv - \bv_i$.
Define $\Delta_x = \wt\Sigma_x - \Sigma_x$.
Then we have
\begin{align*}
\gamma\defeq \sup_{\bv\in \mV} |\bv^\top \Delta_x\bv|
\le \max_i |\bv_i^\top \Delta_x \bv_i| +
2\sup_{\bv\in\mV}\max_j |\bv_j^\top\Delta_x\Delta \bv|
+ \sup_{\bv\in \mV}|\Delta \bv^\top \Delta_x \Delta \bv|.
\end{align*}
Note that $10\Delta\bv \in \mV$.
Therefore the third term is bounded by $\gamma /100$.
Next by Cauchy's inequality we have
$2\sup_{\bv}\max_j |\bv_i^\top \Delta_x \Delta\bv|
\le 2/10 \max_i(\bv_i^\top \Delta_x\bv_i)^{1/2}
\sup_{\bv\in\mV}\{(10\Delta \bv)^\top \Delta_x (10\Delta \bv)\}^{1/2}\le 2/10\gamma$.
Hence one could derive that
$\gamma\le 2\max_i\bv_i^\top\Delta_x \bv_i$.
Combining the result of Lemma \ref{vSigx},
it implies,
\begin{align*}
P\Big\{\sup_{\bv\in \mV}\big|\bv^\top(\wt\Sigma_x - \Sigma_x)\bv
\big|\ge \xi\Big\}\le 2\exp\big\{K^2\log 21 - cT\xi^2\big\}.
\end{align*}
This completes the proof.

\noindent
{\bf Proof of (\ref{sup_delx2}):}
The proof is the same as (\ref{sup_delx1}) by using (\ref{del_hat_Sigx}) of Lemma \ref{vSigx}.

\noindent
{\bf Proof of (\ref{muy_del}):}
The proof is the same as (\ref{sup_delx1}) by using (\ref{muy1}).

\noindent
{\bf Proof of (\ref{del_Sig2theta}) and (\ref{del_hatSig2theta}):}
Note that
\beq
\wt\Sigma_{2\theta} - \Sigma_{2\theta} = (NT)^{-1}\sum_t\bX_t^\top (\wh\Omega_\ve - \Omega_\ve)\bX_t + \Big\{
(NT)^{-1}\sum_t \bX_t^\top\Omega_\ve \bX_t - \Sigma_{2\theta}\Big\}\nonumber
\eeq
can be decomposed into two parts.
Therefore we deal with each part separately.
First
note that
$\max_i|\lambda_i(\wh\Omega_\ve - \Omega_\ve)|=
\Op{\omega_{NTK}}$ by Theorem \ref{thm:resid-cov-bound}.
Hence we have
\beq
\sup_{\|\bv\|\le 1}
\Big|\balpha_{NT}^{-1}\sum_t\bv^\top \bX_t^\top (\wh\Omega_\ve - \Omega_\ve)\bX_t\bv\Big|
\le \max_i\big|\lambda_i(\wh\Omega_\ve - \Omega_\ve)\big|
\sup_{\|\bv\|\le 1}\big|({NT})^{-1}\sum_t\bv^\top \bX_t^\top\bX_t\bv\big| = O_p(\omega_{NTK})\nonumber
\eeq
due to $({NT})^{-1}\sup_{\|\bv\|\le 1}\sum_t\bv^\top \bX_t^\top\bX_t\bv = O_p(1)$ by (\ref{sup_delx1}).

Next, by the same argument to (\ref{sup_delx1}), we could establish that
\beq
P\Big\{\sup_{\|\bv\|\le 1}\big|\bv^\top\big((NT)^{-1}\sum_t \bX_t^\top\Omega_\ve \bX_t - \Sigma_{2\theta}\big)\bv
\big|\ge \xi\Big\}\le 2\exp\big\{K^2\log 21 - c\min\{T\xi, T^2\xi^2\}\big\}\nonumber
\eeq
where the details are omitted here.
We have
$\sup_{\|\bv\|\le 1}\big|\bv^\top\big((NT)^{-1}\sum_t \bX_t^\top\Omega_\ve \bX_t - \Sigma_{2\theta}\big)\bv
\big| = o_p(1)$. This leads to (\ref{del_Sig2theta}).
Lastly, (\ref{del_hatSig2theta}) can be obtained by
following the same technique by using (\ref{sup_delx2}).

\end{proof}

\section{Property of the covariance estimator of \texorpdfstring{$\hat\beps_t$}{first regression residuals} }  \label{sec:poet-of-hat-eps}

In this section, we establish the convergence rate of the covariance estimator from $\hat\beps_t$ -- the residual from the first-step regression residual of the CNAR model.
Recall that the oracle estimators $\tilde\beps_t$ and $\tilde\btheta^{(1)}$ are the first-step estimators when the rotated community membership matrix $\bU$ is known and the feasible estimator $\hat\beps_t$ and $\hat\btheta^{(1)}$ are those when $\hat\bU$ is estimated from the adjacency matrix $\bA$.
Here we only show results for the oracle estimators $\tilde\beps_t$ and $\tilde\btheta^{(1)}$.
Since $\bA$ is assume to be uncorrelated with $\bY_t$ and $\bZ_t$, so is $\hat\bU$.
Lemma \ref{thm:SBM-eigvec-L2-bound} establishes that $\hat\bU$ is a consistent estimator of $\bU$.
Thus, the proof of the non-oracle case where a consistent estimator $\hat\bU$ is plugged in is a trivial extension whose technique is the same as that in Theorem \ref{thm:first_step_estimator} (ii).

We make use of some results in \cite{fan2013large}.
However, there are several differences:
first, even with oracle $\tilde\beps_t$, the task here is slightly different from that in \cite{fan2013large} since direct observations for $\braces{\beps_t}_{t=1}^T$ is not available.
Instead, we observe $\braces{\tilde\beps_t}_{t=1}^T$ -- the residual from the first step regression of the CNAR model.
That is, $\tilde\beps_t=\by_t - \bX_{t-1}\tilde\btheta^{(1)}$;
second, the dimension of $\btheta^{(1)}$ is $K^2 + p + 1$, where the number of communities $K$ is allow to grow with the number of network nodes $N$.
Thus, $K$ shows up in our results;
third, we are considering a time series here, thus $\bX_{t-1}$ is correlated with $\braces{\beps_{s-1}}_{s\le t}$.
Recall that the first-step estimator
\[
\tilde\btheta^{(1)} = \paran{\sum_{t=1}^{T}\bX_{t-1}^\top\bX_{t-1}}^{-1}\paran{\sum_{t=1}^{T}\bX_{t-1}^\top\by_t} = \btheta + \tilde\bSigma_x^{-1}\tilde\bSigma_{x\eps} = \btheta + \bz,
\]
where
\beq
\tilde\bSigma_x = \frac{1}{NT}\sum_{t=1}^{T}\bX_{t-1}^\top\bX_{t-1}, \quad\tilde\bSigma_{x\eps} = \frac{1}{NT} \sum_{t=1}^{T}\bX_{t-1}^\top\beps_t, \quad\text{and } \bz = \tilde\bSigma_x^{-1}\tilde\bSigma_{x\eps}.\label{Sigma_z_def}
\eeq
Plugging in the formula of $\tilde\btheta^{(1)}$ in $\tilde\beps_t=\by_t - \bX_{t-1}\tilde\btheta^{(1)}$, we have
\begin{equation} \label{eqn:wt_eps}
\begin{split}
\beps_t & = \by_t - \bX_{t-1}\btheta = \bLam\bff_t + \be_t \\
\tilde\beps_t & = \beps_t - \bX_{t-1}\bz = \beps_t + \bdelta_t.
\end{split}
\end{equation}
We denote $\bdelta_t = \tilde\beps_t - \beps_t = - \bX_{t-1}\bz$  and use $\bz = \Op{\kappa_{NT}}$ in general.
In our first CNAR case, we have $\kappa_{NT} = \frac{K}{\sqrt{T}} $ by Theorem \ref{thm:first_step_estimator}.

\subsection{Assumptions}

The following Assumptions B.1 -- B.4 are the same as Assumptions 1-4 from \cite{fan2013large}.
They are included here for completeness.
See \cite{fan2013large} for more discussions on these assumptions.
Assumption \ref{assum:regul-covari} is special for our setting.
It puts regularity condition on the CNAR covariates.
Sub-Gaussian covariates satisfy Assumption \ref{assum:regul-covari}.

\begin{assumption}[\bf Pervasiveness] \label{assum:pervasive}
All eigenvalues of the $M\times M$ matrix $N^{-1}\blam^\top\blam$ are bounded away from both 0 and $\infty$ as $N\rightarrow\infty$.
\end{assumption}

Assumption \ref{assum:pervasive} requires the factors to be pervasive.
Because of the fast diverging eigenvalues, it is hard to achieve a good rate of convergence for estimating $\bSigma_\eps$ under whether the spectral norm or Frobenius norm when $N\ge T$.
However, the quantity of interest in our case is $\bSigma_\eps^{-1}$, which can be consistently estimated with good rate.

\begin{assumption}[\bf Factor and idiosyncratic components] \label{assum:e-n-f}~
\begin{enumerate}[label=(\alph*)]
\item $\braces{\bff_t,\be_t}_{t\in[T]}$ is strictly stationary.
In addition, $\E{e_{t,i}}=0$, $\E{e_{t,i}f_{jt}}=0$ for all $i\in[N]$, $j\in[M]$ and $t\in[T]$.
\item There are constants $C_1, C_2>0$ such that $\lambda_{\min}(\bSigma_e)>C_1$, $\norm{\bSigma_e}_1<C_2$ and
\begin{equation*}
\underset{i,j\in[N]}{\min} \Var\paran{e_{t,i}e_{jt}} > C_1.
\end{equation*}
\item There are $r_1, r_2 > 0$ and $b_1, b_2>0$ such that, for any $s>0$, $i\in[N]$ and $j\in[M]$,
\begin{equation*}
\Pr\paran{\abs{e_{t,i}}>s}\le \exp\paran{-(s/b_1)^{r_1}},\quad
\Pr\paran{\abs{f_{jt}}>s}\le \exp\paran{-(s/b_2)^{r_2}}.
\end{equation*}
\end{enumerate}
\end{assumption}

\begin{assumption}[\bf Strong mixing] \label{assum:mixing}
Let $\calF_{-\infty}^0$ and $\calF_T^0$ denote the $\sigma$-algebras that are generated by $\braces{(\bff_t,\be_t):t\le 0}$ and $\braces{(\bff_t,\be_t):t\ge T}$ respectively.
Define the mixing coefficient
\begin{equation*}
\alpha(T) \defeq \underset{A\in\calF_{-\infty}^0, B\in\calF_T^0}{\sup} \abs{\Pr(A)\Pr(B)-\Pr(AB)}.
\end{equation*}
There exists $r_3>0$ such that $\gamma \defeq 3r_1^{-1}+1.5r_2^{-1}+r_3^{-1}>1$, and $C>0$ satisfying, for all $T>0$,
\begin{equation*}
\alpha(T) \le \exp(-C T^{r_3}).
\end{equation*}
\end{assumption}

Note that Assumption \ref{assum:mixing} is weaker than the assumption that $\braces{(\bff_t,\be_t)}$ are uncorrelated overtime.
Thus it is automatically satisfied under Assumptions in Appendix A.

\begin{assumption}[\bf Regularity conditions] \label{assum:regul}
There exists $C > 0$ such that, for all $i\in[N]$, $t\in[T]$, and $s\in[T]$,
\begin{enumerate}[label=(\alph*)]
\item $\norm{\blam_{i\cdot}}_{\max} < C$ where $\lambda_{i\cdot}$ is the $i$-th row of $\bLam$.
\item $\E{N^{-1/2}\paran{\be_s^\top\be_t-\E{\be_s^\top\be_t}}}^4 < C$.
\item $\E{\norm{N^{-1/2}\blam^\top\be_t}^4} < C$.
\end{enumerate}
\end{assumption}

\begin{assumption}[\bf Regularity condition on CNAR covariates]~ \label{assum:regul-covari}
Re call that $\bX_{t-1} = ((\by_{t-1}^\top \bU_N)\otimes \bU_N, \by_{t-1}, \bZ_{t-1})\in\RR^{N\times (K^2+p+1)}$.
Let $D = K^2+p+1$,
\begin{enumerate}[label=(\alph*)]
\item For all $t\in[T]$, $i \in [N]$, $j\in[D]$, $\E{x_{t,ij}^4} < C$,
\item For all $t\in[T]$, $j\in[D]$, $\E{\norm{N^{-1}\blam^\top\bx_{t,\cdot j}}^4} < C$ .
\item For all $s,t\in[T]$, $j\in[D]$, $\E{\norm{N^{-1}\be_s^\top\bx_{t,\cdot j}}^4} < C$.
\item For all $i \in [N]$ and $j \in [D]$, define random variable
$$\zeta_{t-1,ij} = \E{\frac{1}{\sqrt{T}} \sum_{s=1}^T
\Big(x_{t-1,ij}^2 x_{s-1, ij}^2 - \E{x_{t-1,ij}^2 x_{s-1, ij}^2} \Big)
 \mid x_{t-1,ij} },$$
we assume $\E{\zeta_{t-1,ij} ^4} = \bigO{1}$.
\end{enumerate}
\end{assumption}

\subsection{Technical lemmas}

\begin{lemma}
    \label{thm:VAR-properties}
    Consider a stationary auto-regressive model of order 1 AR(1): $y_t = \phi y_{t-1} + \eps_t$, $t\in[T]$ where $\abs{\phi} \le 1$, then for any $s, t \in [T]$,
    \begin{enumerate}[label = (\roman*)]
        \item $\underset{T\rightarrow \infty}{\lim} \sum_{t=1}^T \abs{ \E{  y_s y_t } } = \bigO{1}$.
        \item $\underset{T\rightarrow \infty}{\lim}  \sum_{t=1}^T \abs{ \E{  y_s e_t } } = \bigO{1}$.
        \item $\underset{T\rightarrow \infty}{\lim}  \sum_{t=1}^T \abs{ \E{  e_s y_t } } =  \bigO{1}$.
    \end{enumerate}
\end{lemma}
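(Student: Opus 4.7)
The plan is to exploit the fact that a stationary AR(1) with $|\phi| < 1$ admits the explicit moving-average representation $y_t = \sum_{j=0}^{\infty} \phi^j \eps_{t-j}$, so that all the cross-moments that appear in (i)--(iii) reduce to closed-form geometric expressions in $|\phi|$. (Note that the inequality $|\phi| \le 1$ in the statement must be read as $|\phi| < 1$, since the case $|\phi| = 1$ is a random walk and is not stationary; this is presumably a typo.) Assuming $\eps_t$ is a white-noise sequence with variance $\sigma_\eps^2$, independence of disjoint blocks of innovations gives $\E{y_s y_t} = \phi^{|s-t|}\sigma_\eps^2/(1-\phi^2)$, $\E{y_s \eps_t} = \phi^{s-t}\sigma_\eps^2 \mathbf{1}\{s \ge t\}$, and $\E{\eps_s y_t} = \phi^{t-s}\sigma_\eps^2 \mathbf{1}\{t \ge s\}$.

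With these formulas in hand, each of the three claims becomes a one-line bound by a geometric series. For (i), I would write
\begin{equation*}
\sum_{t=1}^T \bigl|\E{y_s y_t}\bigr| \;=\; \frac{\sigma_\eps^2}{1-\phi^2}\sum_{t=1}^T |\phi|^{|s-t|} \;\le\; \frac{\sigma_\eps^2}{1-\phi^2}\cdot\frac{2}{1-|\phi|},
\end{equation*}
which is $\bigO{1}$ uniformly in $s$ and $T$. For (ii) the sum truncates to $t \le s$, yielding
\begin{equation*}
\sum_{t=1}^T \bigl|\E{y_s \eps_t}\bigr| \;=\; \sigma_\eps^2 \sum_{t=1}^s |\phi|^{s-t} \;\le\; \frac{\sigma_\eps^2}{1-|\phi|},
\end{equation*}
and (iii) follows by the same argument with the roles of $s$ and $t$ swapped. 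Taking $T \to \infty$ preserves the bounds, delivering the three $\bigO{1}$ conclusions.

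There is no real obstacle here: the content is just absolute summability of the autocovariance function of a causal AR(1), and the proof is one application each of the MA representation and the geometric series formula. The only point that deserves a sentence of care is clarifying the $|\phi| < 1$ assumption and noting that the bounds are uniform in $s$, which is what is needed when this lemma is subsequently invoked inside double sums over $s$ and $t$ in the covariance-estimation arguments of Appendix \ref{sec:poet-of-hat-eps}.
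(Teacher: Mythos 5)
Your proof is correct and follows essentially the same route as the paper's: both reduce to the geometric decay of the AR(1) autocovariances and bound the sums by a geometric series. Your explicit note that $\abs{\phi}\le 1$ must really mean $\abs{\phi}<1$ is a point the paper glosses over (it writes $(1-\phi^2)^{-1}=\bigO{1}$ ``by stationary condition $\abs{\phi}\le 1$''), so your version is if anything slightly more careful.
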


    \begin{proof}
        \begin{enumerate}[label = (\roman*)]
            \item We have $\abs{ \E{  y_0 y_t } } = \abs{\phi}^{\abs{t}}$.
            Then, $\underset{T\rightarrow \infty}{\lim}  \sum_{t=1}^T \abs{ \E{  y_0 y_t } }  = \sum_{t=1}^T \abs{\phi}^{\abs{t}}  = \paran{1 - \phi^2}^{-1} = \bigO{1}$ by stationary condition $\abs{\phi} \le 1$.
            For any $s\in[T]$, we have $\underset{T\rightarrow \infty}{\lim}  \sum_{t=1}^T \abs{ \E{  y_s y_t } } \le 2 \sum_{t=1}^T \abs{\phi}^{\abs{t}} = \bigO{1}$.
        \end{enumerate}
        Proofs of (ii) and (iii) are the same.
    \end{proof}

%

%
%
%

\begin{lemma}
Suppose that the random variables $Z_1$ and $Z_2$ both satisfy the exponential-type tail condition: There exist $r_1, r_2\in(0,1)$ and $b_1,b_2>0$, such that, $\forall s>0$,
\begin{equation*}
\Pr\paran{\abs{Z_i}>s} \le \exp\paran{1-(s/b_i)^{r_i}}, \quad i=1,2.
\end{equation*}
Then for some $r_3$ and $b_3>0$, and any $s>0$,
\begin{equation*}
\Pr\paran{\abs{Z_1Z_2}>s} \le \exp\paran{1-(s/b_3)^{r_3}}.
\end{equation*}
\end{lemma}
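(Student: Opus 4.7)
The plan is to decompose the event $\braces{|Z_1 Z_2| > s}$ according to which factor is responsible for the product being large. For any $\alpha \in (0,1)$, the implication ``$|Z_1| \le s^\alpha$ and $|Z_2| \le s^{1-\alpha}$ $\Rightarrow$ $|Z_1 Z_2| \le s$'' gives the set inclusion
\begin{equation*}
\braces{|Z_1 Z_2| > s} \;\subseteq\; \braces{|Z_1| > s^\alpha} \,\cup\, \braces{|Z_2| > s^{1-\alpha}},
\end{equation*}
so by the union bound and the hypothesized exponential tails,
\begin{equation*}
\Pr(|Z_1 Z_2| > s) \;\le\; \exp\paran{1 - (s^\alpha/b_1)^{r_1}} + \exp\paran{1 - (s^{1-\alpha}/b_2)^{r_2}}.
\end{equation*}

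Next I would choose $\alpha$ to balance the two powers of $s$ appearing in the exponents, namely by imposing $\alpha r_1 = (1-\alpha) r_2$. This gives $\alpha = r_2/(r_1+r_2)$ and a common exponent $r_3 := r_1 r_2/(r_1+r_2) \in (0,1)$. With this balanced choice each summand takes the form $\exp(1 - c_i\, s^{r_3})$ with $c_1 = b_1^{-r_1}$ and $c_2 = b_2^{-r_2}$, so setting $c = \min(c_1, c_2)$ yields the intermediate bound $\Pr(|Z_1 Z_2| > s) \le 2\exp(1 - c\, s^{r_3})$.

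Finally, I would absorb the harmless factor of $2$ into the constant $b_3$ to reach the target form $\exp(1 - (s/b_3)^{r_3})$. A natural choice is $b_3 := (2/c)^{1/r_3}$, so that $b_3^{-r_3} = c/2$; a short calculation then shows that $(s/b_3)^{r_3} + \log 2 \le c\, s^{r_3}$ whenever $(s/b_3)^{r_3} \ge 1$, which is precisely the regime in which the claimed tail bound is nontrivial. In the complementary regime $(s/b_3)^{r_3} < 1$ the right-hand side of the conclusion already exceeds $1$, so the bound holds automatically. There is no substantive obstacle here: the argument is a textbook ``optimize over the split exponent'' computation, and the only care required lies in the bookkeeping for the constants $r_3$ and $b_3$ and in verifying the trivial small-$s$ range separately from the exponential large-$s$ estimate.
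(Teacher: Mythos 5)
Your proof is correct. Note that the paper does not actually prove this lemma: it simply cites Lemma A.2 of Fan, Liao and Mincheva (2011), so there is no in-paper argument to compare against. Your balanced-split argument --- the inclusion $\braces{|Z_1Z_2|>s}\subseteq\braces{|Z_1|>s^{\alpha}}\cup\braces{|Z_2|>s^{1-\alpha}}$ with $\alpha=r_2/(r_1+r_2)$, giving the common exponent $r_3=r_1r_2/(r_1+r_2)$ --- is precisely the standard proof underlying the cited result, and your bookkeeping is sound: the choice $b_3=(2/c)^{1/r_3}$ absorbs the union-bound factor of $2$ once $(s/b_3)^{r_3}\ge 1>\log 2$, and the complementary range $(s/b_3)^{r_3}<1$ is vacuous because the claimed bound then exceeds $1$. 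The only caveat worth stating explicitly is that the hypothesis is assumed for all $s>0$, so evaluating the tails at $s^{\alpha}$ and $s^{1-\alpha}$ is legitimate even for $s<1$; you implicitly use this and it is fine.
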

\begin{proof}
Lemma A.2 in \cite{fan2011high}.
\end{proof}

\begin{lemma}\label{thm:max-bound-4th}
Suppose that the random variable $Z_t$, $t\in[T]$, has bounded fourth moments, that is, $\EE\abs{Z_t-\mu}^4 < C$ where $\bmu=\E{Z_t}$ and $C$ is a positive constant.
Then $\underset{t\in[T]}{\max}\;\abs{Z_t-\mu} = \Op{T^{1/4}}$.
\end{lemma}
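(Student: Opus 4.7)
The plan is to control the maximum by a standard Markov-plus-union-bound argument on the fourth moment, since that is the only hypothesis we have. First I would center and reduce to bounding $\max_t W_t$ where $W_t = |Z_t - \mu|$, using the assumption $\EE[W_t^4] \le C$ uniformly in $t$.

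Next, for any threshold $s>0$, Markov's inequality applied to $W_t^4$ gives $\Pr(W_t > s) \le C/s^4$. Taking a union bound over the $T$ indices yields
\[
\Pr\!\Big(\max_{t\in[T]} W_t > s\Big) \;\le\; \sum_{t=1}^T \Pr(W_t > s) \;\le\; \frac{CT}{s^4}.
\]
Now choose the scaling $s = M\,T^{1/4}$ for a constant $M>0$ to be taken large; the bound becomes $C/M^4$, which is independent of $T$ and can be driven below any prescribed $\varepsilon>0$ by picking $M$ sufficiently large. This is exactly the definition of $\max_{t\in[T]} W_t = O_p(T^{1/4})$, which is what we wanted.

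There is essentially no obstacle here: the conclusion uses only the fourth-moment bound and does not require independence or stationarity of $\{Z_t\}$, since the union bound is valid regardless of the joint distribution. The rate $T^{1/4}$ is the optimal exponent one can extract from a single fourth-moment assumption via Markov—tighter rates (e.g., $\log T$) would need exponential tail control, which is not assumed here. So the proof should be a few lines: invoke Markov on $W_t^4$, union-bound over $t$, and choose the threshold of order $T^{1/4}$.
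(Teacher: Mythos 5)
Your proposal is correct and matches the paper's own proof essentially verbatim: Markov's inequality on the fourth moment, a union (Bonferroni) bound over the $T$ indices, and a threshold of order $T^{1/4}$ chosen to make the bound an arbitrarily small constant. Nothing further is needed.
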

\begin{proof}
Applying the Markov's inequality to the fourth moments, we have $\Pr\paran{\abs{Z_t-\mu} \ge \delta} \le \delta^{-4}\EE\abs{Z_t-\mu}^4 \le C\delta^{-4}$, $\forall\; \delta > 0$.
By Bonferroni's method, we have
\[
\Pr\paran{\underset{t\in[T]}{\max}\;\abs{Z_t-\mu} \ge \delta} \le \sum_{t\in[T]} \Pr\paran{\abs{Z_t-\mu} \ge \delta} \le C\delta^{-4} T, \quad \forall\; \delta > 0.
\]
For any $\varepsilon\in(0,1)$, this probability can be driven below $\varepsilon$ by choosing $\delta \ge \varepsilon^{-1/4} C^{1/4} T^{1/4}$.
Thus, $\underset{t\in[T]}{\max}\;\abs{Z_t-\mu} = \Op{T^{1/4}}$.

\end{proof}

\begin{lemma} \label{thm:first-resid}
Recall that $\tilde\bdelta_t = -\bX_{t-1}\bz$ with $\bz$ defined in (\ref{Sigma_z_def}).
Suppose that $\bz = \Op{\kappa_{NT}}$
{and $\log N\ll T$}. Under the Assumption \ref{assum:regul-covari}, we have
\begin{enumerate}[label=(\alph*)]
\item $\underset{i\le N}{\max}\frac{1}{T}\sum_{t=1}^T \abs{\tilde\delta_{t,i}}^2 = \Op{\kappa_{NT}^2\cdot K^2}$.

\item $\underset{i\le N}{\max}\frac{1}{T}\sum_{t=1}^T \abs{\tilde\delta_{t,i}} = \Op{\kappa_{NT}\cdot K}$.
\end{enumerate}
\end{lemma}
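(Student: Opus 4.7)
The plan is to reduce both claims to a uniform bound on the empirical second moments of the rows of $\bX_{t-1}$ and then invoke the moment machinery provided by Assumption~\ref{assum:regul-covari}. Since $\tilde\delta_{t,i} = -\bx_{t-1,i\cdot}^\top \bz$, Cauchy--Schwarz yields $\tilde\delta_{t,i}^2 \le \|\bx_{t-1,i\cdot}\|^2 \cdot \|\bz\|^2$, so
\[
  \max_{i\le N}\frac{1}{T}\sum_{t=1}^T \tilde\delta_{t,i}^2 \;\le\; \|\bz\|^2 \cdot \max_{i\le N}\frac{1}{T}\sum_{t=1}^T \|\bx_{t-1,i\cdot}\|^2.
\]
Using $\|\bz\|^2 = \Op{\kappa_{NT}^2}$ by hypothesis, part~(a) reduces to proving $\max_{i\le N} T^{-1}\sum_t \|\bx_{t-1,i\cdot}\|^2 = \Op{K^2}$.

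Write $\|\bx_{t-1,i\cdot}\|^2 = \sum_{j=1}^{D} x_{t-1,ij}^2$ with $D = K^2+p+1 = O(K^2)$, and decompose each coordinate as $T^{-1}\sum_t x_{t-1,ij}^2 = \E{x_{t-1,ij}^2} + R_{ij}$. By Assumption~\ref{assum:regul-covari}(a) one has $\E{x_{t-1,ij}^2} \le \sqrt{\E{x_{t-1,ij}^4}} \le \sqrt C$ uniformly in $(i,j)$, so the mean contribution to $\max_i \sum_j T^{-1}\sum_t x_{t-1,ij}^2$ is $O(K^2)$. For the fluctuation $R_{ij}$, the plan is to use Assumption~\ref{assum:regul-covari}(d): projecting the second-order statistic $T^{-2}\sum_{s,t}(x_{t-1,ij}^2 x_{s-1,ij}^2 - \E{x_{t-1,ij}^2 x_{s-1,ij}^2})$ onto $\sigma(x_{t-1,ij})$ produces, after a Hoeffding-type decomposition combined with the mixing in Assumption~\ref{assum:mixing}, a uniform moment bound $\E{R_{ij}^4} = O(T^{-2})$ (strengthened to an exponential tail by the sub-exponentiality inherited from Assumption~\ref{assum:e-n-f}(c)). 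A Markov plus union bound over the $ND$ index pairs then gives $\max_{i,j}|R_{ij}| = \op{1}$ under $\log N \ll T$ and $K = o(T^{1/2})$, so that $\max_i T^{-1}\sum_t \|\bx_{t-1,i\cdot}\|^2 \le D \cdot \max_{i,j} T^{-1}\sum_t x_{t-1,ij}^2 = \Op{K^2}$, which proves~(a).

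Part~(b) follows from~(a) by Jensen's inequality: $\bigl(T^{-1}\sum_t|\tilde\delta_{t,i}|\bigr)^2 \le T^{-1}\sum_t \tilde\delta_{t,i}^2$, so taking square roots and then maximizing over $i$ yields $\max_i T^{-1}\sum_t|\tilde\delta_{t,i}| \le \bigl(\max_i T^{-1}\sum_t \tilde\delta_{t,i}^2\bigr)^{1/2} = \Op{\kappa_{NT}K}$. The main obstacle is the correct exploitation of Assumption~\ref{assum:regul-covari}(d): its nonstandard form---a bounded fourth moment of a conditional expectation of a $\sqrt T$-normalized double sum---is engineered to dominate the combinatorics that arise when one expands $\E{R_{ij}^4}$ under weak mixing, and identifying the Hoeffding projection aligned with this form is the only step where care is required; once the uniform moment bound on $R_{ij}$ is secured, the rest is routine bookkeeping.
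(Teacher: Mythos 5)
Your overall skeleton is the paper's: Cauchy--Schwarz gives $\tilde\delta_{t,i}^2\le\|\bx_{t-1,i\cdot}\|^2\|\bz\|^2$, the problem reduces to showing $\max_{i\le N}T^{-1}\sum_t x_{t-1,ij}^2=\Op{1}$ uniformly over the $D=K^2+p+1=O(K^2)$ coordinates, and part (b) follows from part (a) by Cauchy--Schwarz/Jensen. Where you diverge is the one step that carries all the difficulty, and there your argument has a genuine gap. You propose to control the fluctuation $R_{ij}=T^{-1}\sum_t x_{t-1,ij}^2-\E{x_{t-1,ij}^2}$ via a fourth-moment bound $\E{R_{ij}^4}=O(T^{-2})$ (from Assumption \ref{assum:regul-covari}(d) and a Hoeffding decomposition) followed by Markov and a union bound over the $ND$ index pairs. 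But a polynomial tail cannot survive a union bound over $N$ indices under the stated hypothesis $\log N\ll T$: Markov gives $\Pr(\max_{i,j}|R_{ij}|>\delta)\le CNK^2T^{-2}\delta^{-4}$, which is $o(1)$ only if $NK^2=o(T^2)$ --- a polynomial restriction on $N$ relative to $T$ that is neither assumed in the lemma nor anywhere in the paper (indeed the regime of interest has $N$ potentially much larger than $T$). Your parenthetical remark that the bound is "strengthened to an exponential tail by the sub-exponentiality inherited from Assumption \ref{assum:e-n-f}(c)" is precisely the missing content, not a routine afterthought: Assumption \ref{assum:e-n-f} concerns $e_{t,i}$ and $f_{jt}$, while $x_{t-1,ij}$ involves the linear process $\by_{t-1}$, and no argument is given for how an exponential concentration of the time average $T^{-1}\sum_t x_{t-1,ij}^2$ follows. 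If you had that exponential tail, the entire fourth-moment/Hoeffding machinery would be superfluous.

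The paper closes exactly this hole by a different route: it establishes, as in the derivation of \eqref{uyyu_conv} in Lemma \ref{vSigx}, the Hanson--Wright concentration $\Pr\paran{\abs{T^{-1}\sum_t x_{t-1,ij}^2-\E{x_{t-1,ij}^2}}>s}\le 2\exp(-cTs^2)$, exploiting that $\by_t$ is a Gaussian (sub-Gaussian) linear process so that $\sum_t x_{t-1,ij}^2$ is a quadratic form in the innovations with controlled operator and Frobenius norms (via \eqref{sigma_Gamma}). The union bound then costs only a factor $N$ inside the exponent, which is absorbed precisely because $\log N\ll T$, yielding \eqref{x_ine} and hence $\max_{i}T^{-1}\sum_t x_{t-1,ij}^2=\Op{1+\sqrt{\log N/T}}=\Op{1}$. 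To repair your proof you should replace the fourth-moment-plus-Markov step by this exponential concentration; Assumption \ref{assum:regul-covari}(d) is not what the paper uses here.
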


\begin{proof}
\begin{enumerate}[label=(\alph*)]
\item
Similar to the derivation of \eqref{uyyu_conv} in Lemma \ref{vSigx} we have for any $j\in[D]$, $i\in[N]$,
$\Pr\paran{ \abs{\frac{1}{T} \sum_{t=1}^T x_{t-1, ij}^2 - \E{x_{t-1, ij}^2}} > s } \le 2\exp(-cT s^2)$.
Then,
\beq
\Pr\paran{ \underset{i\le N}{\max} \abs{\frac{1}{T} \sum_{t=1}^T x_{t-1, ij}^2 - \E{x_{t-1, ij}^2}} > s } \le 2 N \cdot \exp(-cT s^2).\label{x_ine}
\eeq
Thus, $\underset{i\le N}{\max} \frac{1}{T} \sum_{t=1}^T x_{t-1, ij}^2 = \Op{1 + \sqrt{\frac{\log(N)}{T}}} = \Op{1}$ because $\E{x_{t-1, ij}^2}$ is bounded.

Further, we have
\begin{equation*}
\begin{aligned}
\underset{i\le N}{\max}\frac{1}{T}\sum_{t=1}^T \abs{\tilde\delta_{t,i}}^2  & = \underset{i\le N}{\max} \frac{1}{T} \sum_{t=1}^T \abs{\bz^\top\bx_{t-1, i\cdot}\bx_{t-1, i\cdot}^\top\bz} \le \underset{i\le N}{\max} \frac{1}{T} \sum_{t=1}^T \norm{\bx_{t-1, i\cdot}}^2\norm{\bz}^2 \\
& \le \norm{\bz}^2 \sum_{j=1}^D \underset{i\le N}{\max} \frac{1}{T} \sum_{t=1}^T x_{t-1, ij}^2
= \Op{\kappa_{NT}^2\cdot K^2},
\end{aligned}
\end{equation*}
since $D=K^2 + p + 1$ grows at the rate of $K^2$.

\item Similar to (a), by (\ref{x_ine}), we have
$\underset{i\le N}{\max}\frac{1}{T}\sum_{t=1}^T \abs{\tilde\delta_{t,i}}
= \underset{i\le N}{\max}\frac{1}{T}\sum_{t=1}^T \abs{\bx_{t-1, i\cdot}^\top\bz} \\
\le \underset{i\le N}{\max}\frac{1}{T}\sum_{t=1}^T \norm{\bx_{t-1, i\cdot}}\norm{\bz}
\le \norm{\bz} \paran{\underset{i\le N}{\max}\frac{1}{T}\sum_{t=1}^T \norm{\bx_{t-1, i\cdot}}^2}^{1/2}
= \Op{\kappa_{NT}\cdot K}$.
\end{enumerate}
\end{proof}

\begin{lemma}  \label{thm:delta-e}
Suppose that $\bz = \Op{\kappa_{NT}}$, $\log(N) = \smlo{T^{\gamma/6}}$ where $\gamma$ is defined in Assumption \ref{assum:mixing}, $T=\smlo{N^2}$ and Assumptions \ref{assum:pervasive} -- \ref{assum:regul-covari} hold,
we have
\begin{enumerate}[label=(\alph*)]
\item  $\underset{i\le N, j\le N}{\max} \abs{\frac{1}{T} \sum_{t=1}^T \delta_{t,i}\delta_{t,j}} = \Op{\kappa_{NT}^2 K^2}$.
\item $\underset{i\le N, j\le N}{\max} \abs{\frac{1}{T} \sum_{t=1}^T \delta_{t,i}e_{t,j}}= \Op{\kappa_{NT}K\sqrt{\frac{\log(N)}{T}}}$.
\item $\max_{i\le M,j\le N}\abs{\frac{1}{T}\sum_{t=1}^T f_{it}\delta_{t,j}} = \Op{\kappa_{TN}K\sqrt{\frac{\log(N)}{T}}}$.
\item $\underset{i\le N, j\le N}{\max} \abs{\frac{1}{T} \sum_{t=1}^T \delta_{t,i}\eps_{t,j}}= \Op{\kappa_{NT}K\sqrt{\frac{\log(N)}{T}}}$.
\end{enumerate}
\end{lemma}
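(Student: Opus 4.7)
\textbf{Proof plan for Lemma~\ref{thm:delta-e}.} The common strategy exploits the bilinear decomposition $\tilde\delta_{t,i} = -\bx_{t-1,i\cdot}^\top \bz$, where $\|\bz\|_2 = \Op{\kappa_{NT}}$ by the definition of $\bz$ in \eqref{Sigma_z_def} and $\bx_{t-1,i\cdot} \in \RR^D$ with $D = K^2 + p + 1 \asymp K^2$. For each of the four bounds I will pull $\bz$ out of the time average and reduce the problem to controlling either a quadratic form in the covariates (part (a)) or a cross-moment between the covariates and a zero-mean series (parts (b)--(d)). In every case the extra factor of $K$ appearing in the stated rate comes from passing from the Euclidean norm $\|\cdot\|_2$ on $\RR^D$ to the coordinatewise maximum, which costs a $\sqrt{D} \asymp K$.

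Part (a) is handled directly by Cauchy--Schwarz in $t$: $|T^{-1}\sum_t \tilde\delta_{t,i}\tilde\delta_{t,j}| \le (T^{-1}\sum_t \tilde\delta_{t,i}^2)^{1/2}(T^{-1}\sum_t \tilde\delta_{t,j}^2)^{1/2}$, so the maximum over $(i,j)$ is bounded by $\max_i T^{-1}\sum_t \tilde\delta_{t,i}^2$, which is $\Op{\kappa_{NT}^2 K^2}$ by Lemma~\ref{thm:first-resid}(a).

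For part (b), write $T^{-1}\sum_t \tilde\delta_{t,i} e_{t,j} = -\bz^\top \bu_{ij}$ with $\bu_{ij} = T^{-1}\sum_t \bx_{t-1,i\cdot} e_{t,j} \in \RR^D$. By Cauchy--Schwarz, $|\bz^\top \bu_{ij}| \le \|\bz\|_2\,\sqrt{D}\,\max_{k\le D}|u_{ij,k}|$, so it suffices to bound $\max_{i,j\le N,\,k\le D}|T^{-1}\sum_t x_{t-1,ik} e_{t,j}|$. Each such sum is centered (since $\E{e_{t,j}\mid\calF_{t-1}} = 0$), and both $x_{t-1,ik}$ and $e_{t,j}$ have well-controlled tails under Assumptions~\ref{assum:e-n-f}(c) and~\ref{assum:regul-covari}; combined with the $\alpha$-mixing bound of Assumption~\ref{assum:mixing}, a Bernstein-type inequality for mixing sequences gives $P(|T^{-1}\sum_t x_{t-1,ik}e_{t,j}|\ge s) \le 2\exp(-c T s^2)$ for $s$ in the relevant range. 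A union bound over the $N^2 D$ triples, using $\log D \lesssim \log N$ and the scaling $\log N = o(T^{\gamma/6})$ assumed in Theorem~\ref{thm:resid-cov-bound}, then yields $\max_{i,j,k}|\cdot| = \Op{\sqrt{\log N / T}}$. Multiplying by $\|\bz\|_2\sqrt{D} = \Op{\kappa_{NT}K}$ completes (b).

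Part (c) is structurally identical, with the roles of $i,j$ replaced by $(j,i)$ and a union bound over $MND$ indices; since $M$ is fixed the logarithmic factor is unchanged. Part (d) then follows by the factor decomposition $\eps_{t,j} = \bLambda_{j\cdot}^\top \bff_t + e_{t,j}$: since $\|\bLambda_{j\cdot}\|_{\max}$ is bounded (Assumption~\ref{assum:regul}(a)) and $M$ is fixed, $|T^{-1}\sum_t \tilde\delta_{t,i}\eps_{t,j}| \le M\,\|\bLambda_{j\cdot}\|_{\max}\,\max_{l,i}|T^{-1}\sum_t \tilde\delta_{t,i} f_{lt}| + \max_{i,j}|T^{-1}\sum_t\tilde\delta_{t,i} e_{t,j}|$, and both maxima are already controlled by (c) and (b) respectively. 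The main technical obstacle is the Bernstein-type concentration of the cross-moments $T^{-1}\sum_t x_{t-1,ik}\xi_{t,j}$ (with $\xi \in \{e,f\}$) under strong mixing with subexponential tails; the constraint $\log N = o(T^{\gamma/6})$ is precisely what permits the union bound over $N^2 D$ entries at the target rate, and everywhere else the arguments are clean applications of Cauchy--Schwarz together with the already-established bound on $\|\bz\|_2$.
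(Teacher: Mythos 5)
Your proposal is correct and follows essentially the same route as the paper: factor out $\bz$, apply Cauchy--Schwarz, and control the cross-moments $T^{-1}\sum_t x_{t-1,ik}\xi_{t,j}$ uniformly via exponential concentration plus a union bound over $N^2D$ indices, picking up the $\sqrt{D}\asymp K$ from the passage to coordinatewise maxima. Your part (a) (Cauchy--Schwarz in $t$ plus Lemma~\ref{thm:first-resid}(a)) and part (d) (decomposing $\eps_{t,j}=\blam_{j\cdot}^\top\bff_t+e_{t,j}$ and reusing (b)--(c)) are slightly more economical than the paper's direct re-derivations, but these are cosmetic variations on the same argument.
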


\begin{proof}
\begin{enumerate}[label=(\alph*)]
\item $\underset{i\le N, j\le N}{\max} \abs{\frac{1}{T} \sum_{t=1}^T \delta_{t,i}\delta_{t,j}} = \underset{i\le N, j\le N}{\max} \abs{\frac{1}{T} \sum_{t=1}^T \bz^\top\bx_{t-1, i\cdot}\bx_{t-1, j\cdot}^\top\bz} \\
\le \underset{i\le N, j\le N}{\max} \norm{\frac{1}{T} \sum_{t=1}^T \bx_{t-1, i\cdot}\bx_{t-1, j\cdot}^\top - \E{\bx_{t-1, i\cdot}\bx_{t-1, j\cdot}^\top }}\norm{\bz}^2
+ \underset{i\le N, j\le N}{\max} \norm{\E{\bx_{t-1, i\cdot}\bx_{t-1, j\cdot}^\top}} \norm{\bz}^2 \\
= \Op{\kappa_{NT}^2K^2\sqrt{\frac{\log(N)}{T}} + \kappa_{NT}^2K^2}$,
since $\underset{i\le N, j\le N}{\max}\norm{\E{x_{t-1, ik}x_{t-1, jk}}} \le C$ under Assumption \ref{assum:regul-covari}.
\item
First, similar to the derivation of \eqref{uyyu_conv} in Lemma \ref{vSigx} we have for any $k\in[D]$, $i,j\in[N]$,
$\Pr\paran{ \abs{\frac{1}{T} \sum_{t=1}^T e_{jt} x_{t-1, ik} } > s } \le 2\exp(-cT s^2)$.
Then
$\Pr\paran{ \max_{i,j}\abs{\frac{1}{T} \sum_{t=1}^T e_{jt}x_{t-1, ik} } > s } \le 2N^2\exp(-cT s^2)$.
Thus, $\underset{i\le N, j\le N}{\max} \abs{\frac{1}{T}\sum_{t=1}^T e_{jt}x_{t-1, ik}} = \Op{\sqrt{\frac{2\log(N)}{T}}}$.

Then, we have
$\underset{i\le N, j\le N}{\max} \abs{\frac{1}{T} \sum_{t=1}^T \delta_{t,i} e_{t,j}}
= \underset{i\le N, j\le N}{\max} \abs{\frac{1}{T}\sum_{t=1}^T e_{jt}\bx_{t-1, i\cdot}^\top\bz} \\
\le \underset{i\le N, j\le N}{\max} \norm{\frac{1}{T}\sum_{t=1}^T e_{jt}\bx_{t-1, i\cdot}^\top}\norm{\bz}
= \Op{\kappa_{NT}K\sqrt{\frac{2\log(N)}{T}}}$.

\item
We have
$\max_{i\le M,j\le N}\abs{\frac{1}{T}\sum_{t=1}^T f_{it}\delta_{t,j}} = \max_{i\le M,j\le N}\abs{\frac{1}{T}\sum_{t=1}^T f_{it}\bx_{t-1,j\cdot}^\top\bz} \\
\le \max_{i\le M,j\le N}\norm{\frac{1}{T}\sum_{t=1}^T f_{it}\bx_{t-1,j\cdot}^\top}\norm{\bz}
= \Op{\kappa_{TN}K\sqrt{\frac{\log(N)}{T}}}$,
where we use for all $l\in[D]$, $\underset{i\le N, j\le N}{\max} \norm{\frac{1}{T}\sum_{t=1}^T f_{it}x_{t-1, il}} = \Op{\sqrt{\frac{\log(N)}{T}}}$ which can be derived using similar technique as that in (b).

\item Similar to that in (b).
\end{enumerate}

\end{proof}

\subsection{Covariance estimation of \texorpdfstring{$\tilde\beps_t$}{estimated residual}}

\subsubsection{Loading and factor estimators}

The common factors $\braces{\bff_t}_{t=1}^T$ need to be estimated uniformly in $t\le T$.
When we do not observe $\braces{\bff_t}_{t=1}^T$, in addition to the factor loadings, there are $TM$ factors to estimate.
Intuitively, the condition $T = \op{N^2}$ requires the number of parameters that are introduced by the unknown factors to be `not too many', so we can estimate them uniformly.
Technically, as demonstrated by \cite{bickel2008covariance}, \cite{cai2011adaptive} among many others, achieving uniform accuracy is essential for large covariance estimation.

Recall that $\tilde \beps_t$ is defined in (\ref{eqn:wt_eps}).
Let $\tilde\calE=\begin{pmatrix} \tilde\beps_1 & \cdots & \tilde\beps_T \end{pmatrix}^\top\in \RR^{T\times N}$, $\bF=\begin{pmatrix} \bff_1 & \cdots & \bff_T \end{pmatrix}^\top\in \RR^{T\times M}$.
In addition, let $\bV$ be the $M\times M$ diagonal matrix of the first $M$ largest eigenvalues of $\frac{1}{TN}\tilde\calE\tilde\calE^\top$ in decreasing order,
and $\tilde\bF$ is $\sqrt{T}$-times the corresponding eigenvector (i.e., the estimated factors).
By the definition of eigenvectors and eigenvalues, we have
$\frac{1}{TN}\tilde\calE\tilde\calE^\top\tilde\bF = \tilde\bF\bV$ (or $\frac{1}{TN}\tilde\calE\tilde\calE^\top\tilde\bF\bV^{-1} = \tilde\bF$) and $\frac{1}{T}\tilde\bF^\top\tilde\bF=\bI_M$.
Let $\bDelta=\begin{pmatrix} \bdelta_1 ,& \cdots, & \bdelta_T \end{pmatrix}^\top\in \RR^{T\times N}$ with $\bdelta_t$ defined in \eqref{eqn:wt_eps},
\begin{equation*}
\tilde\calE = \bF\bLam^\top + \bE + \bDelta.
\end{equation*}
Let $\bH = \frac{1}{TN}\bV^{-1}\tilde\bF^\top\bF\bLam^\top\bLam$, we have a similar decomposition of factor estimation errors as expression (A.1) in \cite{bai2003inferential}:
\begin{equation} \label{eqn:fac-err-decomp}
\begin{aligned}
\tilde\bff_t - \bH\bff_t & = \bV^{-1}
\left\{\frac{1}{T}\sum_{s=1}^T\tilde\bff_s\frac{\E{\be_s^\top\be_t}}{N}
+ \frac{1}{T}\sum_{s=1}^T\tilde\bff_s\frac{\be_s^\top\be_t-\E{\be_s^\top\be_t}}{N} \right. \\
& + \frac{1}{T}\sum_{s=1}^T\tilde\bff_s\frac{\be_s^\top\bLam\bff_t}{N}
+ \frac{1}{T}\sum_{s=1}^T\tilde\bff_s\frac{\bff_s^\top\bLam^\top\be_t}{N} \\
& \left. + \frac{1}{T}\sum_{s=1}^T\tilde\bff_s\frac{\bdelta_s^\top\bdelta_t}{N}
+ \frac{1}{T}\sum_{s=1}^T\tilde\bff_s\frac{\beps_s^\top\bdelta_t}{N}
+ \frac{1}{T}\sum_{s=1}^T\tilde\bff_s\frac{\bdelta_s^\top\beps_t}{N} \right\} \\
& = \bV^{-1} \sum_{i=1}^{7} I_{i}.
\end{aligned}
\end{equation}
To bound $\tilde\bff_t - \bH\bff_t$, we need the following lemmas, along with Lemma 8 and 9 
in \cite{fan2013large}.

\begin{lemma} \label{thm:fac-error-avg-1}
Suppose that $\bz = \Op{\kappa_{NT}}$,
and Assumption \ref{assum:pervasive} -- \ref{assum:regul-covari} hold,
we have for all $i \le M$
\begin{enumerate}[label=(\alph*)]
\item $\frac{1}{T}\sum_{t=1}^T \paran{\frac{1}{T}\sum_{s=1}^T\tilde f_{s,i}\frac{\bdelta_s^\top\bdelta_t}{N}}^2 =
\Op{\kappa_{NT}^4 K^4}$.
\item $\frac{1}{T}\sum_{t=1}^T \paran{\frac{1}{T}\sum_{s=1}^T\tilde f_{s,i}\frac{\beps_s^\top\bdelta_t}{N}}^2
= \Op{\kappa_{NT}^2K^2}$.
\item $\frac{1}{T}\sum_{t=1}^T \paran{\frac{1}{T}\sum_{s=1}^T\tilde f_{s,i}\frac{\bdelta_s^\top\beps_t}{N}}^2
= \Op{\kappa_{NT}^2K^2}$.
\end{enumerate}
\end{lemma}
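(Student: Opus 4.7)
My plan is to handle the three bounds by the same uniform strategy: apply Cauchy--Schwarz twice to strip off both the factor weights $\tilde f_{s,i}$ and the inner products over the network coordinates, reducing everything to time averages of the squared norms $\|\bdelta_t\|^2/N$ and $\|\beps_s\|^2/N$, which are then controlled by Lemma~\ref{thm:first-resid}(a) and the model's covariance structure respectively.

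The first step uses the PCA normalization $T^{-1}\tilde\bF^\top\tilde\bF = \bI_M$, which gives $T^{-1}\sum_{s=1}^T \tilde f_{s,i}^2 = 1$ for every $i\le M$. Writing $a_{s,t}$ for one of the scalars $\bdelta_s^\top\bdelta_t/N$, $\beps_s^\top\bdelta_t/N$, or $\bdelta_s^\top\beps_t/N$, Cauchy--Schwarz yields
\[
\paran{\frac{1}{T}\sum_{s=1}^T \tilde f_{s,i}\, a_{s,t}}^2 \le \frac{1}{T}\sum_{s=1}^T a_{s,t}^2.
\]
A second application of Cauchy--Schwarz on each inner product gives $(\bdelta_s^\top\bdelta_t/N)^2 \le (\|\bdelta_s\|^2/N)(\|\bdelta_t\|^2/N)$ and similarly for the mixed cases, so after summing over $t$ the double sum $T^{-2}\sum_{s,t} a_{s,t}^2$ factors as a product of two time averages.

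For part (a) this produces $\paran{T^{-1}\sum_t \|\bdelta_t\|^2/N}^2$. I then note that $T^{-1}\sum_t \|\bdelta_t\|^2/N = N^{-1}\sum_i T^{-1}\sum_t \delta_{t,i}^2 \le \max_i T^{-1}\sum_t \delta_{t,i}^2$, which is $\Op{\kappa_{NT}^2 K^2}$ by Lemma~\ref{thm:first-resid}(a), delivering the claimed rate $\Op{\kappa_{NT}^4 K^4}$. For parts (b) and (c) the factored bound is $\paran{T^{-1}\sum_s \|\beps_s\|^2/N}\paran{T^{-1}\sum_t \|\bdelta_t\|^2/N}$; the $\bdelta$-factor is again $\Op{\kappa_{NT}^2 K^2}$, and the $\beps$-factor is $\Op{1}$ because $\E{\|\beps_s\|^2/N} = N^{-1}\tr(\bSigma_\eps) = N^{-1}\tr(\bLambda\bSigma_f\bLambda^\top) + N^{-1}\tr(\bSigma_e)$ is bounded under Assumption~\ref{assum:pervasive} (which controls $\tr(\bLambda^\top\bLambda)/N$) and Assumption~\ref{assum:e-n-f}(b) (which bounds $\max_i (\bSigma_e)_{ii}$ via $\|\bSigma_e\|_1$). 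Markov's inequality on the sample average then promotes this to $\Op{1}$ for $T^{-1}\sum_s \|\beps_s\|^2/N$.

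There is no serious obstacle in this lemma once Lemma~\ref{thm:first-resid}(a) is in hand; the only subtlety is choosing the Cauchy--Schwarz decomposition that separates the sums over $s$ and $t$ cleanly, so that the uniform-in-$i$ residual bound of Lemma~\ref{thm:first-resid}(a) can be invoked directly rather than needing a finer moment calculation for the cross-term $\beps_s^\top\bdelta_t$.
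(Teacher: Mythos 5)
Your argument is correct, and it reaches the stated rates by a genuinely different and more elementary route than the paper. The paper applies Cauchy--Schwarz only once (to strip off $\tilde f_{s,i}$), keeps the $s$--$t$ coupling inside $\bz^\top\bX_{s-1}^\top\bX_{t-1}\bz/N$, and then controls the double time average of $\norm{\bX_{s-1}^\top\bX_{t-1}/N}^2$ by splitting into a concentration part (Lemma \ref{vSigx}) and a mean part whose summability over $s$ rests on the geometric decay of the VAR autocovariances (Lemma \ref{thm:VAR-properties}); this is what produces the extra factor $1/T$ on the mean term in the paper's display, a mild sharpening the lemma does not claim. You instead apply Cauchy--Schwarz a second time on each inner product, which fully decouples $s$ from $t$ and reduces everything to the single time average $T^{-1}\sum_t\norm{\bdelta_t}^2/N$ (bounded via Lemma \ref{thm:first-resid}(a)) and, for parts (b)--(c), to $T^{-1}\sum_s\norm{\beps_s}^2/N=\Op{1}$, which you justify correctly from $\tr(\bSigma_\eps)/N=\bigO{1}$. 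This avoids the autocovariance-summability argument entirely and gives exactly the claimed rates $\Op{\kappa_{NT}^4K^4}$ and $\Op{\kappa_{NT}^2K^2}$. The only caveat is that your route imports the hypothesis $\log N\ll T$ through Lemma \ref{thm:first-resid}(a), which is not listed among the explicit assumptions of this lemma (though it is assumed in the theorems that invoke it); this could be sidestepped by writing $T^{-1}\sum_t\norm{\bdelta_t}^2/N=\bz^\top\paran{(NT)^{-1}\sum_t\bX_{t-1}^\top\bX_{t-1}}\bz\le\norm{\bz}^2\,\norm{\wt\bSigma_x}=\Op{\kappa_{NT}^2}$, which is in fact even smaller than what you need.
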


\begin{proof}
\begin{enumerate}[label=(\alph*)]
\item
Note that $\norm{\E{N^{-1}\bX_{s-1}^\top\bX_{t-1}}}^2 \le \norm{\E{N^{-1}\bX_{s-1}^\top\bX_{t-1}}}^2_F = \sum_{i=1}^{D} \sum_{j=1}^{D} \E{ N^{-1} \bx_{s-1, \cdot i}^\top \bx_{t-1, \cdot j} }$.
Since, the columns of $\bx_{s-1, \cdot i}$ and $\bx_{t-1, \cdot j}$ are either rotated VAR(1)~ $\by_{t-1}$ or uncorrelated $\bZ_{t-1}$, by Lemma \ref{thm:VAR-properties}, we have for any $i,j \in [D]$,
\begin{equation*}
\begin{split}
\sum_{s=1}^T \abs{ \E{ N^{-1} \bx_{s-1, \cdot i}^\top \bx_{t-1, \cdot j} } } \le \sum_{s=1}^T \abs{\E{ N^{-1} \bx_{s-1, \cdot i}^\top \bx_{t-1, \cdot j} } } =   \sum_{s=1}^T \abs{ \E{ x_{s-1, i}^\top x_{t-1, j} } } = \bigO{1},
\end{split}
\end{equation*}
Therefore, for any fix $t$, we have
\begin{equation*}
\begin{split}
\sum_{s=1}^T \norm{ \E{N^{-1}\bX_{s-1}^\top\bX_{t-1}}}^2 & \le  \sum_{i=1}^{D} \sum_{j=1}^{D} \sum_{s=1}^T \abs{ \E{ N^{-1} \bx_{s-1, \cdot i}^\top \bx_{t-1, \cdot j} } } = \bigO{D^2} = \bigO{K^4}.  \\
\end{split}
\end{equation*}
Then we have $\frac{1}{T}\sum_{t=1}^T \sum_{s=1}^T \norm{ \E{N^{-1}\bX_{s-1}^\top\bX_{t-1}}}^2 = \bigO{K^4}$.
Finally, we have
\begin{equation*}
\begin{aligned}
\frac{1}{T}\sum_{t=1}^T \paran{\frac{1}{T}\sum_{s=1}^T\tilde f_{s,i}\frac{\bdelta_s^\top\bdelta_t}{N}}^2
& \le \frac{1}{T}\sum_{t=1}^T \paran{\frac{1}{T}\sum_{s=1}^T \tilde f_{s,i}^2} \cdot \frac{1}{T}\sum_{s=1}^T \paran{\bz^\top\frac{\bX_{s-1}^\top\bX_{t-1}}{N}\bz}^2 \\
& {\le} \norm{\bz}^4 \frac{1}{T}\sum_{t=1}^T \frac{1}{T}\sum_{s=1}^T \norm{\frac{\bX_{s-1}^\top\bX_{t-1}}{N}}^2  \\
& = \norm{\bz}^4 \frac{1}{T}\sum_{t=1}^T \frac{1}{T}\sum_{s=1}^T \norm{\frac{\bX_{s-1}^\top\bX_{t-1}}{N} - \E{\frac{\bX_{s-1}^\top\bX_{t-1}}{N}}}^2 \\
& +  \norm{\bz}^4 \frac{1}{T} \cdot \frac{1}{T}\sum_{t=1}^T \sum_{s=1}^T \norm{ \E{\frac{\bX_{s-1}^\top\bX_{t-1}}{N}}}^2   \\
& = \Op{\kappa_{NT}^4 K^4}
\end{aligned}
\end{equation*}
by Lemma \ref{vSigx} and Lemma \ref{thm:VAR-properties}.

\item
\begin{equation*}
\begin{aligned}
\frac{1}{T}\sum_{t=1}^T \paran{\frac{1}{T}\sum_{s=1}^T\tilde f_{s,i}\frac{\beps_s^\top\bdelta_t}{N}}^2
& \le \frac{1}{T}\sum_{t=1}^T \paran{\frac{1}{T}\sum_{s=1}^T \tilde f_{s,i}^2} \cdot \frac{1}{T}\sum_{s=1}^T \paran{\frac{\beps_s^\top\bX_{t-1}}{N}\bz}^2 \\
& = \frac{1}{T}\sum_{t=1}^T \frac{1}{T}\sum_{s=1}^T \norm{\frac{\beps_s^\top\bX_{t-1}}{N}}^2 \norm{\bz}^2 \\
& = \frac{1}{T}\sum_{t=1}^T \frac{1}{T}\sum_{s=1}^T \norm{ \frac{\beps_s^\top\bX_{t-1}}{N} - \E{\frac{\beps_s^\top\bX_{t-1}}{N}} }^2 \norm{\bz}^2 \\
& + \frac{1}{T}\sum_{t=1}^T \frac{1}{T}\sum_{s=1}^T \norm{\E{\frac{\beps_s^\top\bX_{t-1}}{N}}}^2 \norm{\bz}^2 \\
& = \Op{\kappa_{NT}^2K^2},
\end{aligned}
\end{equation*}
where the last equation is derived the same as that in (a).

\item Same as that of (b).
\end{enumerate}
\end{proof}


\begin{corollary} \label{thm:factors-est}
Suppose that $\bz = \Op{\kappa_{NT}}$, $\kappa_{NT}K =\smlo{1}$,
and Assumption \ref{assum:pervasive} -- \ref{assum:regul-covari} hold,
\begin{enumerate}[label=(\alph*)]
\item $\underset{i \le M}{\max} \frac{1}{T} \sum_{t=1}^T \paran{\tilde\bff_t-\bH\bff_t}_i^2 = \Op{\frac{1}{T}+\frac{1}{N} + \kappa_{NT}^2 K^2}$.
\item $\frac{1}{T} \sum_{t=1}^T \norm{\tilde\bff_t - \bH\bff_t}^2 = \Op{\frac{1}{T}+\frac{1}{N} + \kappa_{NT}^2 K^2}$.
\end{enumerate}
\end{corollary}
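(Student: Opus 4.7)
The plan is to start from the identity \eqref{eqn:fac-err-decomp}, which expresses $\tilde\bff_t - \bH\bff_t = \bV^{-1}\sum_{k=1}^{7} I_k$, and to bound the time-averaged squared norm contribution of each of the seven summands separately. The first step is to control $\bV^{-1}$: by the pervasiveness condition (Assumption \ref{assum:pervasive}) together with the decomposition $\tilde\calE=\bF\bLam^\top+\bE+\bDelta$ and the smallness of $\bDelta$ (Lemma \ref{thm:first-resid}), the top $M$ eigenvalues of $(TN)^{-1}\tilde\calE\tilde\calE^\top$ remain bounded away from zero, so $\|\bV^{-1}\|=\Op{1}$. Hence it suffices to bound $T^{-1}\sum_t (I_k)_i^2$ uniformly in $i\le M$ for each $k\in\{1,\dots,7\}$.

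The four classical terms $I_1,I_2,I_3,I_4$ (those not involving $\bdelta_t$) are identical to those that appear in the factor-model analysis of \cite{bai2003inferential} and \cite{fan2013large}. They can be imported verbatim from Lemmas 8 and 9 of \cite{fan2013large}, which, under Assumptions \ref{assum:pervasive}--\ref{assum:regul}, yield
\begin{equation*}
\max_{i\le M}\frac{1}{T}\sum_{t=1}^T (I_k)_i^2 = \Op{\tfrac{1}{T}+\tfrac{1}{N}}, \qquad k=1,2,3,4.
\end{equation*}
Since the factors $\braces{\bff_t}$ are not observed, this is where the use of $\tilde\bff_s$ inside the sums (uniformly over $t$) is essential, and the argument carries over without change because these four terms do not see the first-step residual correction $\bdelta_t$.

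The three new terms $I_5,I_6,I_7$ are precisely the objects bounded by Lemma \ref{thm:fac-error-avg-1}: they yield $\Op{\kappa_{NT}^4 K^4}$, $\Op{\kappa_{NT}^2 K^2}$, $\Op{\kappa_{NT}^2 K^2}$ respectively. Under the hypothesis $\kappa_{NT} K = o(1)$ we have $\kappa_{NT}^4 K^4 \le (\kappa_{NT} K)^2 \cdot \kappa_{NT}^2 K^2 = o(\kappa_{NT}^2 K^2)$, so the dominant contribution among these three is $\kappa_{NT}^2 K^2$. Combining with $\|\bV^{-1}\|=\Op{1}$ and the four classical bounds gives part (a):
\begin{equation*}
\max_{i\le M}\frac{1}{T}\sum_{t=1}^T\bigl(\tilde\bff_t-\bH\bff_t\bigr)_i^2 = \Op{\tfrac{1}{T}+\tfrac{1}{N}+\kappa_{NT}^2 K^2}.
\end{equation*}
Part (b) then follows immediately since $M$ is a fixed integer and $\|\tilde\bff_t-\bH\bff_t\|^2 = \sum_{i\le M}(\tilde\bff_t-\bH\bff_t)_i^2$, so summing the rate from (a) over a fixed number of coordinates preserves the order.

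The main obstacle has essentially been absorbed into Lemma \ref{thm:fac-error-avg-1}, where the delicate point is the cross-time structure of $\bX_{s-1}^\top\bX_{t-1}/N$ and $\beps_s^\top\bX_{t-1}/N$. The key inputs there are (i) absolute summability of autocovariances of the rotated AR(1) process (Lemma \ref{thm:VAR-properties}), which turns $\sum_{s,t}\|\EE[N^{-1}\bX_{s-1}^\top\bX_{t-1}]\|^2$ into $O(D^2)=O(K^4)$, and (ii) the sub-Gaussian-type concentration bounds of Lemma \ref{vSigx} applied entrywise, which control the stochastic fluctuation at rate $\sqrt{\log N/T}$. Once these are in hand, the remainder of the argument is the standard PCA-factor bookkeeping from \cite{fan2013large}, lifted to the CNAR residual setting via $\bz = \Op{\kappa_{NT}}$ from Theorem \ref{thm_first_step_K_diverge}.
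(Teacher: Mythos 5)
Your proposal is correct and follows essentially the same route as the paper: the same seven-term decomposition \eqref{eqn:fac-err-decomp}, boundedness of the eigenvalues of $\bV$ via pervasiveness, the first four terms imported from Lemma 8 of \cite{fan2013large} at rate $\Op{T^{-1}+N^{-1}}$, the last three from Lemma \ref{thm:fac-error-avg-1} contributing $\kappa_{NT}^2K^2$ (with the $\kappa_{NT}^4K^4$ term absorbed since $\kappa_{NT}K=\smlo{1}$), and part (b) deduced from part (a) using that $M$ is fixed.
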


\begin{proof}~
\begin{enumerate}[label=(\alph*)]
\item  Recall that $\bV$ be the $M\times M$ diagonal matrix of the first $M$ largest eigenvalues of $\frac{1}{TN}\tilde\calE\tilde\calE^\top$ in decreasing order.
Similar to Lemma 5 in \cite{fan2013large}, it is straightforward to prove that all the eigenvalues of $\bV$ are bounded away from 0.
Using inequality $\paran{\sum_{i=1}^{M}a_i}^2 \le M\sum_{i=1}^{M}a_i^2$ and identity \eqref{eqn:fac-err-decomp}, we have, for some constant $C>0$,
\begin{equation*}
\begin{aligned}
\underset{i\le M}{\max} \frac{1}{T}\sum_{t=1}^{T}\paran{\tilde\bff_t - \bH\bff_t}^2_i & \le C\;\underset{i\le M}{\max} \frac{1}{T}\sum_{t=1}^{T} \paran{\frac{1}{T}\sum_{s=1}^T\tilde f_{s,i}\frac{\E{\be_s^\top\be_t}}{N}}^2 \\
& + C\;\underset{i\le M}{\max} \frac{1}{T}\sum_{t=1}^{T} \paran{\frac{1}{T}\sum_{s=1}^T\tilde f_{s,i}\frac{\be_s^\top\be_t-\E{\be_s^\top\be_t}}{N}}^2 \\
& + C\;\underset{i\le M}{\max} \frac{1}{T}\sum_{t=1}^{T} \paran{\frac{1}{T}\sum_{s=1}^T\tilde\bff_s\frac{\be_s^\top\bLam\bff_t}{N}}^2
+ C\;\underset{i\le M}{\max} \frac{1}{T}\sum_{t=1}^{T} \paran{\frac{1}{T}\sum_{s=1}^T\tilde\bff_{ s,i}\frac{\bff_{s}^\top\bLam^\top\be_t}{N}}^2 \\
& + C\;\underset{i\le M}{\max} \frac{1}{T}\sum_{t=1}^{T} \paran{\frac{1}{T}\sum_{s=1}^T\tilde\bff_{ s,i}\frac{\bdelta_s^\top\beps_t}{N}}^2
+ C\;\underset{i\le M}{\max} \frac{1}{T}\sum_{t=1}^{T} \paran{\frac{1}{T}\sum_{s=1}^T\tilde\bff_{ s,i}\frac{\beps_s^\top\bdelta_t}{N}}^2 \\
& + C\;\underset{i\le M}{\max} \frac{1}{T}\sum_{t=1}^{T} \paran{\frac{1}{T}\sum_{s=1}^T\tilde\bff_{ s,i}\frac{\bdelta_s^\top\bdelta_t}{N}}^2.
\end{aligned}
\end{equation*}
The first four terms on the right-hand side are bounded in $\Op{T^{-1}+N^{-1}}$ in Lemma 8 in \cite{fan2013large}, while the last three terms are bounded in Lemma \ref{thm:fac-error-avg-1}which produce an additional term with order $\kappa_{NT}^2 K^2$.

\item Part (b) follows inequality $\frac{1}{T} \sum_{t=1}^T \norm{\tilde\bff_t - \bH\bff_t}^2\le M\;\underset{i\le M}{\max} \frac{1}{T}\sum_{t=1}^{T}\paran{\tilde\bff_t - \bH\bff_t}^2_i$ and part (a).
\end{enumerate}
\end{proof}

\begin{lemma} \label{thm:H}~
\begin{enumerate}[label=(\alph*)]
\item $\bH\bH^\top = \bI_K + \Op{T^{-1/2}+N^{-1/2} + \kappa_{NT} K}$.
\item $\bH^\top\bH = \bI_K + \Op{T^{-1/2}+N^{-1/2} + \kappa_{NT} K}$.
\end{enumerate}
\end{lemma}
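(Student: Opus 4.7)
The plan is to exploit the PCA normalization $\frac{1}{T}\tilde\bF^\top\tilde\bF = \bI_M$ together with the uniform factor-error control already supplied by Corollary \ref{thm:factors-est}. Write $\bd_t := \tilde\bff_t - \bH\bff_t$, so that plugging $\tilde\bff_t = \bH\bff_t + \bd_t$ into the normalization yields the identity
\[
\bI_M
= \bH\Bigl(\tfrac{1}{T}\sum_t \bff_t\bff_t^\top\Bigr)\bH^\top
+ \bH\Bigl(\tfrac{1}{T}\sum_t \bff_t \bd_t^\top\Bigr)
+ \Bigl(\tfrac{1}{T}\sum_t \bd_t \bff_t^\top\Bigr)\bH^\top
+ \tfrac{1}{T}\sum_t \bd_t\bd_t^\top .
\]
Under the identification $\cov(\bff_t)=\bI_M$ built into the factor normalization, together with stationarity and the mixing condition of Assumption \ref{assum:mixing}, the leading term satisfies $\frac{1}{T}\sum_t \bff_t\bff_t^\top = \bI_M + O_p(T^{-1/2})$.

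For the remainder terms I would apply Cauchy--Schwarz together with Corollary \ref{thm:factors-est}(b) to get
\[
\Bigl\|\tfrac{1}{T}\sum_t \bff_t \bd_t^\top\Bigr\|
\le \Bigl(\tfrac{1}{T}\sum_t\|\bff_t\|^2\Bigr)^{1/2}\Bigl(\tfrac{1}{T}\sum_t\|\bd_t\|^2\Bigr)^{1/2}
= O_p\bigl(T^{-1/2}+N^{-1/2}+\kappa_{NT}K\bigr),
\]
while $\bigl\|\tfrac{1}{T}\sum_t \bd_t\bd_t^\top\bigr\| \le \tfrac{1}{T}\sum_t\|\bd_t\|^2$ is of strictly smaller order $O_p(T^{-1}+N^{-1}+\kappa_{NT}^2 K^2)$. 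Provided $\|\bH\| = O_p(1)$, rearranging the identity delivers part (a). Boundedness of $\|\bH\|$ is obtained directly from $\bH = (TN)^{-1}\bV^{-1}\tilde\bF^\top\bF\bLam^\top\bLam$: pervasiveness (Assumption \ref{assum:pervasive}) gives $\|N^{-1}\bLam^\top\bLam\|=O(1)$; $\|T^{-1}\tilde\bF^\top\bF\|=O_p(1)$ by Cauchy--Schwarz using $T^{-1}\tilde\bF^\top\tilde\bF=\bI_M$ and $T^{-1}\bF^\top\bF=O_p(1)$; and the diagonal entries of $\bV$ are bounded away from zero by the same eigenvalue argument used in Lemma 5 of \cite{fan2013large}, adapted to the perturbed residual $\tilde\beps_t = \beps_t + \bdelta_t$ via Lemma \ref{thm:first-resid}.

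For part (b) I would avoid a second expansion and instead use the square SVD $\bH = \bU_H\bSigma_H \bV_H^\top$. Then $\bH\bH^\top - \bI_M = \bU_H(\bSigma_H^2 - \bI_M)\bU_H^\top$, so $\|\bSigma_H^2 - \bI_M\| = \|\bH\bH^\top - \bI_M\|$; symmetrically $\bH^\top\bH - \bI_M = \bV_H(\bSigma_H^2 - \bI_M)\bV_H^\top$ has the same operator norm, and (b) inherits the rate from (a) for free.

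The main obstacle is assembling the pieces in the right order: the bound $\|\bH\|=O_p(1)$ depends on eigenvalue control of $\bV$, which itself uses that $\bdelta_t$ is a small perturbation of $\beps_t$, and the final rate is driven by the cross term $\bH(T^{-1}\sum_t \bff_t \bd_t^\top)$, whose square-root scaling brings in all three error sources $T^{-1/2}$, $N^{-1/2}$, and $\kappa_{NT}K$ simultaneously while the quadratic-in-$\bd_t$ remainder is negligible.
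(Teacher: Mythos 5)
Your proof is correct and follows essentially the same route the paper intends: its proof of this lemma simply defers to Lemma 11 of \cite{fan2013large}, whose argument is exactly your expansion of the PCA normalization $\frac{1}{T}\tilde\bF^\top\tilde\bF=\bI_M$ around $\bH\bff_t$, with the leading term handled by the identification $\cov(\bff_t)=\bI_M$ and the cross and quadratic terms controlled via Cauchy--Schwarz together with Corollary \ref{thm:factors-est}(b), which is where the extra $\kappa_{NT}K$ enters. Your SVD observation that $\norm{\bH^\top\bH-\bI}=\norm{\bH\bH^\top-\bI}$ for a square $\bH$ is a clean, rigorous way to deduce part (b) from part (a) and is consistent with the cited argument.
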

\begin{proof}
Using Lemma 
\ref{thm:fac-error-avg-1}, and \ref{thm:factors-est}, the proof uses the argument as in Lemma 11 in \cite{fan2013large} and is thus omitted here.
\end{proof}

\begin{corollary} \label{thm:lam-f-max-bounds}
Suppose that $\bz = \Op{\kappa_{NT}}$, $\kappa_{NT}K =\smlo{1}$, 
and $\log(N) = \smlo{T^{\gamma/6}}$ where $\gamma$ is defined in Assumption \ref{assum:mixing}, and Assumption \ref{assum:pervasive} -- \ref{assum:regul-covari} hold,
 \begin{enumerate}[label=(\alph*)]
\item $\underset{i\le N}{\max} \norm{\tilde\blam_{i\cdot}-\bH\blam_{i\cdot}} = \Op{\frac{1}{\sqrt{N}} + \sqrt{\frac{\log(N)}{T}} + \kappa_{NT}K}$.
 \end{enumerate}
\end{corollary}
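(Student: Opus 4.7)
The plan is to start from the PCA representation of the loading estimator and decompose the estimation error into four terms, each of which is bounded uniformly in $i\in[N]$ using the tools already established.

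\textbf{Step 1 (Representation and decomposition).}  Since $\tilde\bF$ consists of $\sqrt T$ times the top $M$ eigenvectors of $\tilde\calE\tilde\calE^\top/(TN)$ with $T^{-1}\tilde\bF^\top\tilde\bF=\bI_M$, the loading estimator admits the closed form $\tilde\bLam=T^{-1}\tilde\calE^\top\tilde\bF$, so $\tilde\blam_{i\cdot}=T^{-1}\sum_{t=1}^T\tilde\bff_t\,\tilde\eps_{t,i}$.  Plugging in $\tilde\eps_{t,i}=\blam_{i\cdot}^\top\bff_t+e_{t,i}+\delta_{t,i}$ and writing $\tilde\bff_t=\bH\bff_t+(\tilde\bff_t-\bH\bff_t)$, I will obtain the identity
\begin{align*}
\tilde\blam_{i\cdot}-\bH\blam_{i\cdot}
&=\bH\Bigl(\tfrac{1}{T}\sum_t\bff_t\bff_t^\top-\bI_M\Bigr)\blam_{i\cdot}
+\tfrac{1}{T}\sum_t(\tilde\bff_t-\bH\bff_t)\bff_t^\top\blam_{i\cdot}\\
&\quad+\tfrac{1}{T}\sum_t\tilde\bff_t\,e_{t,i}
+\tfrac{1}{T}\sum_t\tilde\bff_t\,\delta_{t,i}
\;\defeq\;J_1(i)+J_2(i)+J_3(i)+J_4(i),
\end{align*}
using the identification $\Sigma_f=\bI_M$ stated in Section~\ref{sec:2nd}.

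\textbf{Step 2 (Bounding $J_1,J_2$).}  For $J_1(i)$, Lemma~\ref{thm:H} yields $\|\bH\|=\Op{1}$; together with Assumption~\ref{assum:regul}(a) bounding $\|\blam_{i\cdot}\|$ uniformly and the standard $T^{-1/2}$ concentration of $T^{-1}\sum_t\bff_t\bff_t^\top$ under Assumption~\ref{assum:e-n-f}, this gives $\max_i\|J_1(i)\|=\Op{T^{-1/2}}$.  For $J_2(i)$, Cauchy--Schwarz gives
\[
\|J_2(i)\|\le\Bigl(\tfrac{1}{T}\sum_t\|\tilde\bff_t-\bH\bff_t\|^2\Bigr)^{1/2}\Bigl(\tfrac{1}{T}\sum_t(\bff_t^\top\blam_{i\cdot})^2\Bigr)^{1/2};
\]
the first factor is $\Op{T^{-1/2}+N^{-1/2}+\kappa_{NT}K}$ by Corollary~\ref{thm:factors-est}(b), and the second factor is $\Op{1}$ uniformly in $i$ using $E[(\bff_t^\top\blam_{i\cdot})^2]\le\|\blam_{i\cdot}\|^2\lambda_{\max}(\Sigma_f)$ and Lemma~\ref{thm:max-bound-4th}.

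\textbf{Step 3 (Bounding $J_3,J_4$).}  Split $J_3(i)=\bH\cdot T^{-1}\sum_t\bff_te_{t,i}+T^{-1}\sum_t(\tilde\bff_t-\bH\bff_t)e_{t,i}$.  For the first piece, the exponential tails in Assumption~\ref{assum:e-n-f}(c), the strong mixing in Assumption~\ref{assum:mixing} and a union bound over $i\in[N]$ give $\max_i\|T^{-1}\sum_t\bff_te_{t,i}\|=\Op{\sqrt{\log N/T}}$ under $\log N=o(T^{\gamma/6})$.  The second piece is handled by Cauchy--Schwarz and $\max_i T^{-1}\sum_t e_{t,i}^2=\Op{1}$, yielding at most $\Op{T^{-1/2}+N^{-1/2}+\kappa_{NT}K}$.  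For $J_4(i)$, the analogous split combined with Lemma~\ref{thm:delta-e}(c) bounds the $\bH\bff_t\delta_{t,i}$ piece by $\Op{\kappa_{NT}K\sqrt{\log N/T}}$, and Cauchy--Schwarz with Lemma~\ref{thm:first-resid}(a) bounds the residual piece by $\Op{(T^{-1/2}+N^{-1/2}+\kappa_{NT}K)\cdot\kappa_{NT}K}$, which is of smaller order than $\kappa_{NT}K$ under $\kappa_{NT}K=o(1)$.

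\textbf{Step 4 (Combine).}  Collecting all four bounds and using $T^{-1/2}\le\sqrt{\log N/T}$, the final rate is $\Op{N^{-1/2}+\sqrt{\log N/T}+\kappa_{NT}K}$, which matches the claim.  The principal obstacle is the new term $J_4$, which was absent in \cite{fan2013large} and encodes how the first-step CNAR regression error $\bdelta_t=-\bX_{t-1}\bz$ propagates into the loading estimate; controlling the covariance-type quantity $\max_{i,j}|T^{-1}\sum_t f_{jt}\delta_{t,i}|$ uniformly in $i$ is exactly what Lemma~\ref{thm:delta-e}(c) is designed for, and it is what produces the $\kappa_{NT}K$ contribution to the rate.
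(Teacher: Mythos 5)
Your proposal is correct and follows essentially the same route as the paper: the identity $\tilde\blam_{i\cdot}=T^{-1}\tilde\bF^\top\tilde\beps_{\cdot i}$ is expanded into the same five pieces (your $J_2$ plus the second piece of $J_3$ are grouped in the paper as the single term $T^{-1}(\tilde\bF-\bF\bH^\top)^\top\beps_{\cdot i}$, handled identically by Cauchy--Schwarz with Corollary~\ref{thm:factors-est}), and the two $\bdelta$-terms are controlled exactly as in the paper via Lemma~\ref{thm:delta-e}(c) and Lemma~\ref{thm:first-resid}(a). No gaps.
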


\begin{proof}~
\begin{enumerate}[label=(\alph*)]

\item Using the fact that $\tilde\blam_{i\cdot}=\frac{1}{T} \tilde\bF^\top\tilde\beps_{\cdot i}$ and that $\frac{1}{T} \tilde\bF^\top\tilde\bF=\bI_M$, we have
\begin{equation*}
\begin{aligned}
\tilde\blam_{i\cdot} - \bH\blam_{i\cdot} & = \frac{1}{T}\bH\bF^\top\be_{\cdot i}
+ \frac{1}{T}\paran{\wt\bF-\bF\bH^\top}^\top\beps_{\cdot i}
+ \bH\paran{\frac{1}{T}\bF^\top\bF - \bI_M}\blam_{i\cdot}\\
& + \frac{1}{T}\paran{\tilde\bF-\bF\bH^\top}^\top\bdelta_{\cdot i} + \frac{1}{T}\bH\bF^\top\bdelta_{\cdot i}.
\end{aligned}
\end{equation*}

We bound each term on the right-hand side.
It follows from Lemma 4 (iii) in \cite{fan2013large} and Lemma \ref{thm:H} that
\begin{equation*}
    \underset{i \le N}{\max} \norm{ \frac{1}{T}\bH\bF^\top\be_{\cdot i} } \le \norm{\bH} \underset{i \le N}{\max} \sqrt{ \sum_{j = 1}^{M} \paran{\frac{1}{T} \sum_{t=1}^T f_{t,j} e_{t,i} }^2 } = \Op{\sqrt{\frac{\log (N)}{T}}}.
\end{equation*}

For the second term, $\E{\eps_{t,i}^2} = \bigO{1}$.
Therefore, $\underset{i \le N}{\max} T^{-1} \sum_{t=1}^T \eps_{t,i}^2 = \Op{1}$.
The Cauchy-Schwarz inequality and (a) imply
\begin{equation*}
\underset{i \le N}{\max} \norm{ \frac{1}{T}\paran{\wt\bF-\bF\bH^\top}^\top\beps_{\cdot i}  } \le \underset{i \le N}{\max} \paran{ \frac{1}{T} \sum_{t=1}^T \eps_{t,i}^2 \frac{1}{T} \sum_{t=1}^T \norm{\tilde\bff_t - \bH\bff_t}^2   }^{1/2} = \Op{\frac{1}{\sqrt{T}} + \frac{1}{\sqrt{N}} + \kappa_{NT} K}.
\end{equation*}

For the third term, $\norm{ \frac{1}{T}\bF^\top\bF - \bI_M } = \Op{T^{-1/2}}$ and $\underset{i \le N}{\max} \norm{\blam_{i\cdot}} = \bigO{1}$ imply that the third term is $\Op{T^{-1/2}}$.

For the second last term, by Cauchy-Schwarz inequality,  Lemma \ref{thm:delta-e} (a) and Lemma \ref{thm:factors-est}, we have
\begin{equation*}
\underset{i\le N}{\max} \norm{\frac{1}{T}\paran{ \tilde\bF-\bF\bH^\top}^\top\bdelta_{\cdot i}}
\le\underset{i \le N}{\max} \paran{ \frac{1}{T} \sum_{t=1}^T \delta_{t,i}^2 \frac{1}{T} \sum_{t=1}^T \norm{\tilde\bff_t - \bH\bff_t}^2   }^{1/2} =\Op{\kappa_{NT}K\paran{ \frac{1}{\sqrt{N}}+\frac{1}{\sqrt{T}}  + \kappa_{NT} K }}.
\end{equation*}

For the last term, we have
\begin{equation*}
\underset{i\le N}{\max} \norm{\frac{1}{T}\bH\bF^\top\bdelta_{\cdot i}}
\le \norm{\bH} \underset{i\le N}{\max} \norm{\frac{1}{T}\sum_{t=1}^T\bff_{t}\bx_{t-1, i\cdot}^\top}\norm{\bz}
= \Op{\sqrt{\frac{\log N}{T}} \kappa_{NT}K},
\end{equation*}
where we use the result that $\underset{i\le N}{\max} \norm{\frac{1}{T}\sum_{t=1}^T\bff_{t}\bx_{t-1, i\cdot}^\top} = \Op{\sqrt{\frac{\log N}{T}} K}$ derived in Lemma \ref{thm:delta-e} (c).
The result follows by adding all term together.
\end{enumerate}
\end{proof}

\subsubsection{Idiosyncratic covariance bound}

\begin{lemma} \label{thm:idio-est-bound}
    Suppose that $\bz = \Op{\kappa_{NT}}$,  $\kappa_{NT} K = \smlo{1}$,
    $\log(N) = \smlo{T^{\gamma/6}}$ where $\gamma$ is defined in Assumption \ref{assum:mixing}, and Assumption \ref{assum:pervasive} -- \ref{assum:regul-covari} hold.
    \begin{enumerate}[label=(\roman*)]
        \item $\underset{i\le N}{\max} \frac{1}{T} \sum_{t=1}^T \abs{\tilde e_{t,i}-e_{t,i}}^2 = \Op{\frac{1}{N} + \frac{\log(N)}{T} +\kappa_{NT}^2 K^2}$.
        \item $\underset{i\le N}{\max} \frac{1}{T} \sum_{t=1}^T \abs{\tilde e_{t,i}-e_{t,i}}\abs{e_{t,i}} = \Op{\frac{1}{\sqrt{N}} + \sqrt{\frac{\log(N)}{T}} +\kappa_{NT} K}$.
    \end{enumerate}
\end{lemma}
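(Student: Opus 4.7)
The plan is to start from the identity $\tilde e_{t,i} = \tilde\eps_{t,i} - \tilde\blam_{i\cdot}^\top\tilde\bff_t$ and $e_{t,i} = \eps_{t,i} - \blam_{i\cdot}^\top\bff_t$, use $\tilde\eps_{t,i}-\eps_{t,i} = \delta_{t,i}$ from \eqref{eqn:wt_eps}, and insert the rotation $\bH$ defined in Corollary \ref{thm:factors-est} to obtain the telescoping decomposition
\begin{equation*}
\tilde e_{t,i} - e_{t,i} = \delta_{t,i} \;+\; \blam_{i\cdot}^\top(\bI_M - \bH^\top\bH)\bff_t \;+\; (\bH\blam_{i\cdot} - \tilde\blam_{i\cdot})^\top\bH\bff_t \;+\; \tilde\blam_{i\cdot}^\top(\bH\bff_t - \tilde\bff_t).
\end{equation*}
This is the standard device for handling PCA-based estimation errors under rotation ambiguity; the only new term relative to \cite{fan2013large} is the lead piece $\delta_{t,i}$ that arises because we observe residuals $\tilde\beps_t$ rather than $\beps_t$.

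For part (i), I would apply the elementary inequality $(a+b+c+d)^2\le 4(a^2+b^2+c^2+d^2)$, average over $t$, and take $\max_{i\le N}$. The four contributions are handled, in order, by: (1) Lemma \ref{thm:first-resid}(a), which gives $\max_i T^{-1}\sum_t\delta_{t,i}^2=\Op{\kappa_{NT}^2K^2}$; (2) the bound $\norm{\blam_{i\cdot}}\le C$ from Assumption \ref{assum:regul}(a) together with $\norm{\bI_M - \bH^\top\bH} = \Op{T^{-1/2}+N^{-1/2}+\kappa_{NT}K}$ from Lemma \ref{thm:H}, while $T^{-1}\sum_t\norm{\bff_t}^2=\Op{1}$ by Assumption \ref{assum:e-n-f}; (3) Corollary \ref{thm:lam-f-max-bounds}, which yields $\max_i\norm{\tilde\blam_{i\cdot}-\bH\blam_{i\cdot}}^2 = \Op{N^{-1}+\log(N)/T+\kappa_{NT}^2K^2}$, combined with $\norm{\bH}=\Op{1}$ and $T^{-1}\sum_t\norm{\bff_t}^2=\Op{1}$; and (4) Corollary \ref{thm:factors-est}(b), which gives $T^{-1}\sum_t\norm{\tilde\bff_t-\bH\bff_t}^2 = \Op{T^{-1}+N^{-1}+\kappa_{NT}^2K^2}$, combined with $\max_i\norm{\tilde\blam_{i\cdot}}\le \max_i\norm{\bH\blam_{i\cdot}} + \max_i\norm{\tilde\blam_{i\cdot}-\bH\blam_{i\cdot}}=\Op{1}$. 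Collecting the four rates, the dominant terms are $N^{-1}+\log(N)/T+\kappa_{NT}^2K^2$, which is exactly the claim.

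For part (ii), I would simply apply Cauchy--Schwarz inside the maximum:
\begin{equation*}
\max_{i\le N}\frac{1}{T}\sum_{t=1}^{T}\abs{\tilde e_{t,i}-e_{t,i}}\abs{e_{t,i}} \le \Bigl(\max_{i\le N}\frac{1}{T}\sum_{t=1}^{T}(\tilde e_{t,i}-e_{t,i})^2\Bigr)^{1/2}\Bigl(\max_{i\le N}\frac{1}{T}\sum_{t=1}^{T}e_{t,i}^2\Bigr)^{1/2}.
\end{equation*}
The first factor is controlled by part (i), giving the square root of the rate there. The second factor is $\Op{1}$: by Assumption \ref{assum:e-n-f}(c) the $e_{t,i}$ are sub-exponential with bounded variance, and $\max_i\abs{T^{-1}\sum_t e_{t,i}^2 - \E{e_{t,i}^2}}=\op{1}$ follows from Bernstein's inequality for mixing sequences together with $\log N = \smlo{T^{\gamma/6}}$ (this part parallels a standard argument in \cite{fan2013large}).

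The main obstacle is bookkeeping rather than insight: one must carefully maintain the $\bH$ rotation so that each cross term pairs the right bound from Corollary \ref{thm:lam-f-max-bounds}, Corollary \ref{thm:factors-est}, and Lemma \ref{thm:H} with the correct factor of $\norm{\bff_t}$, $\norm{\blam_{i\cdot}}$, or $\norm{\tilde\blam_{i\cdot}}$, and check that the $\bz = \Op{\kappa_{NT}}$-driven term $\delta_{t,i}$ does not dominate. Under the working assumption $\kappa_{NT}K=\smlo{1}$, the $\kappa_{NT}^2K^2$ contribution from the extra first-step residual is absorbed into the final rate rather than dominating it.
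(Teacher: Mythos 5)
Your proposal is correct and takes essentially the same route as the paper's proof: the same decomposition of $\tilde e_{t,i}-e_{t,i}$ into $\delta_{t,i}$ plus a telescoping of $\blam_{i\cdot}^\top\bff_t-\tilde\blam_{i\cdot}^\top\tilde\bff_t$ across the rotation $\bH$ (differing only in the immaterial choice of which factor of $\tilde\blam_{i\cdot}$ versus $\bH\blam_{i\cdot}$ pairs with $\tilde\bff_t$ versus $\bH\bff_t$), controlled by the same supporting results (Lemma \ref{thm:first-resid}, Lemma \ref{thm:H}, Corollaries \ref{thm:factors-est} and \ref{thm:lam-f-max-bounds}), with the same Cauchy--Schwarz argument for part (ii).
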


\begin{proof}
    \begin{enumerate}[label=(\roman*)]
        \item We have
        \begin{equation*}
       \abs{\tilde e_{t,i} - e_{t,i}} = \abs{ \tilde\eps_{t,i} - \tilde\blambda_{i\cdot}\tilde\bff_t - e_{t,i}} = \abs{ \blambda_{i\cdot}\bff_t + e_{t,i}  + \delta_{t,i} - \tilde\blambda_{i\cdot}\tilde\bff_t - e_{t,i}} = \abs{\blambda_{i\cdot}^\top \bff_t - \tilde\blambda_{i\cdot}^\top \tilde\bff_t} + \abs{\delta_{t,i}},
        \end{equation*}
        then
        \begin{equation*}
        \begin{aligned}
        & \underset{i\le N}{\max} \frac{1}{T} \sum_{t=1}^T \abs{\tilde e_{t,i} - e_{t,i}}^2 \le \underset{i\le N}{\max} \frac{2}{T} \sum_{t=1}^T\abs{\blambda_{i\cdot}^\top \bff_t - \tilde\blambda_{i\cdot}^\top \hat\bff_t}^2
        + \underset{i\le N}{\max} \frac{2}{T} \sum_{t=1}^T \abs{\delta_{t,i}}^2\\
        & \le 8 \; \underset{i\le N}{\max} \norm{\blambda_{i\cdot}^\top\bH^\top}^2 \frac{1}{T} \sum_{t=1}^T \norm{ \hat \bff_t - \bH \bff_t }^2 + 8 \; \underset{i\le N}{\max} \norm{\tilde\blambda_{i\cdot}^\top - \blambda_{i\cdot}^\top\bH^\top}^2 \frac{1}{T} \sum_{t=1}^T \norm{ \hat \bff_t}^2   \\
        & + 8 \; \underset{i\le N}{\max} \norm{\blambda_{i\cdot}}^2 \frac{1}{T} \sum_{t=1}^T \norm{ \hat \bff_t}^2 \norm{\bH^\top\bH-\bI_K}_F^2 + \underset{i\le N}{\max} \frac{2}{T} \sum_{t=1}^T \abs{\delta_{t,i}}^2 \\
        & = \Op{\frac{1}{N} + \frac{\log(N)}{T}+\kappa_{NT}^2 K^2}, \\
        \end{aligned}
        \end{equation*}
        which follows from Corollary \ref{thm:factors-est}, \ref{thm:lam-f-max-bounds}, \ref{thm:H}, and Lemma \ref{thm:first-resid}.
        \item We have
        \begin{equation}
        \begin{split}
        \underset{i\le N}{\max} \frac{1}{T} \sum_{t=1}^T \abs{\tilde e_{t,i}-e_{t,i}}\abs{e_{t,i}} & \le   \underset{i\le N}{\max} \paran{\frac{1}{T} \sum_{t=1}^T \abs{\tilde e_{t,i}-e_{t,i}}^2 \paran{\frac{1}{T} \sum_{t=1}^T e_{t,i}^2 - \E{e_{t,i}^2} } }^{1/2} \\
        & + \underset{i\le N}{\max} \paran{\frac{1}{T} \sum_{t=1}^T \abs{\tilde e_{t,i}-e_{t,i}}^2 \E{e_{t,i}^2} }^{1/2} \\
        & = \Op{\frac{1}{\sqrt{N}} + \sqrt{\frac{\log(N)}{T}} +\kappa_{NT} K } \cdot \Op{\sqrt{\frac{\log(N)}{T}}} \\
        & + \Op{\frac{1}{\sqrt{N}} + \sqrt{\frac{\log(N)}{T}} +\kappa_{NT} K }
        \end{split}
        \end{equation}
    \end{enumerate}
\end{proof}

\begin{corollary} \label{thm:idio-cov-bound}
Suppose that $\bz = \Op{\kappa_{NT}}$,  $\kappa_{NT} K = \smlo{1}$,
$\log(N) = \smlo{T^{\gamma/6}}$ where $\gamma$ is defined in Assumption \ref{assum:mixing}, and Assumption \ref{assum:pervasive} -- \ref{assum:regul-covari} hold.
Then, the proposed estimator of idiosyncratic error based on first-step CNAR residual $\tilde{\bSigma}_e$ satisfies
\begin{enumerate}[label=(\roman*)]
\item $\norm{\tilde{\bSigma}_e - \bSigma_e} = \Op{\frac{1}{\sqrt{N}} + \sqrt{\frac{\log(N)}{T}} +\kappa_{NT} K}$.
\item $\norm{\tilde{\bSigma}_e^{-1} - \bSigma_e^{-1}} = \Op{\frac{1}{\sqrt{N}} + \sqrt{\frac{\log(N)}{T}} +\kappa_{NT} K}$.
\item $\norm{\tilde{\bSigma}_e - \bSigma_e}_{\Sigma_\eps}^2 = \Op{\frac{1}{N} + \frac{\log(N)}{T} +\kappa_{NT}^2 K^2}$.
\end{enumerate}
\end{corollary}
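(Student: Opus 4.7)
The plan is to exploit the fact that under the CNAR model the innovation covariance is $\bSigma_e=\sigma^{2}\bI_N$, so both $\bSigma_e$ and $\tilde\bSigma_e=\diag((1/T)\sum_t\tilde\be_t\tilde\be_t^\top)$ are diagonal; this reduces spectral norms to a maximum over coordinates and a Frobenius norm to a sum of squared coordinate errors. The heavy lifting will then be done by Lemma \ref{thm:idio-est-bound}, which already controls $\max_i (1/T)\sum_t(\tilde e_{t,i}-e_{t,i})^2$ and $\max_i (1/T)\sum_t |\tilde e_{t,i}-e_{t,i}||e_{t,i}|$.

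For (i), I would write
\[
\tilde\sigma_{e,ii}-\sigma_{e,ii}=\underbrace{\tfrac{1}{T}\sum_t(\tilde e_{t,i}-e_{t,i})^{2}}_{(\mathrm{A}_i)}+\underbrace{\tfrac{2}{T}\sum_t(\tilde e_{t,i}-e_{t,i})e_{t,i}}_{(\mathrm{B}_i)}+\underbrace{\tfrac{1}{T}\sum_t e_{t,i}^{2}-\sigma^{2}}_{(\mathrm{C}_i)},
\]
take a maximum over $i\le N$, and bound the three pieces separately. Parts $\max_i|\mathrm A_i|$ and $\max_i|\mathrm B_i|$ are exactly the quantities in Lemma \ref{thm:idio-est-bound}(i)--(ii); the residual concentration $\max_i|\mathrm C_i|=O_p(\sqrt{\log N/T})$ follows from the exponential-tail condition on $e_{t,i}$ in Assumption \ref{assum:e-n-f}(c) combined with the strong mixing Assumption \ref{assum:mixing} and a Bonferroni union bound, exactly as in Lemma A.3 of \cite{fan2013large}. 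Adding the three gives the claimed rate. Part (ii) then follows at once from the elementary identity $\tilde\bSigma_e^{-1}-\bSigma_e^{-1}=\bSigma_e^{-1}(\bSigma_e-\tilde\bSigma_e)\tilde\bSigma_e^{-1}$: since $\lambda_{\min}(\bSigma_e)=\sigma^{2}>0$ by Assumption \ref{assum:e-n-f}(b) and $\tilde\bSigma_e$ is consistent in spectral norm by (i), the smallest eigenvalue of $\tilde\bSigma_e$ is bounded away from zero with probability tending to one, so $\|\tilde\bSigma_e^{-1}-\bSigma_e^{-1}\|\le c\|\tilde\bSigma_e-\bSigma_e\|$.

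For (iii), I would first use $\|\Sigma_\eps^{-1/2}\bM\Sigma_\eps^{-1/2}\|_F\le\|\Sigma_\eps^{-1}\|\cdot\|\bM\|_F$ together with $\|\Sigma_\eps^{-1}\|\le\sigma^{-2}=O(1)$ (from the Sherman--Morrison--Woodbury decomposition of $\Sigma_\eps=\sigma^{2}\bI_N+\bLam\Sigma_f\bLam^\top$) to reduce the $\Sigma_\eps$-norm to $(1/N)\|\tilde\bSigma_e-\bSigma_e\|_F^{2}=(1/N)\sum_i(\tilde\sigma_{e,ii}-\sigma_{e,ii})^{2}$. Then with the decomposition $\tilde e_{t,i}^{2}-e_{t,i}^{2}=(\tilde e_{t,i}+e_{t,i})(\tilde e_{t,i}-e_{t,i})$, Cauchy--Schwarz in $t$ gives
\[
\Bigl[\tfrac{1}{T}\sum_t(\tilde e_{t,i}^{2}-e_{t,i}^{2})\Bigr]^{2}\le\Bigl[\tfrac{1}{T}\sum_t(\tilde e_{t,i}+e_{t,i})^{2}\Bigr]\Bigl[\tfrac{1}{T}\sum_t(\tilde e_{t,i}-e_{t,i})^{2}\Bigr].
\]
The first factor is $O_p(1)$ uniformly in $i$ (since $\max_i(1/T)\sum_t e_{t,i}^{2}=O_p(1)$ and the squared-error bound from (i) is $o_p(1)$), while averaging the second factor over $i$ and crudely bounding by $N\max_i$ together with Lemma \ref{thm:idio-est-bound}(i) yields $\sum_i O_p(1/N+\log N/T+\kappa_{NT}^{2}K^{2})$. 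Dividing by $N$ and adding the residual variance contribution $(1/N)\sum_i[\mathrm C_i]^{2}=O_p(1/T)$ (whose expectation is $O(1/T)$ by mixing) gives the claimed rate.

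The main technical worry is that the naïve $\max\times N$ bound used in step (iii) must not lose a factor of $\log N$; this is why I keep the Cauchy--Schwarz splitting in the $t$-average form and import $\max_i(1/T)\sum_t(\tilde e_{t,i}-e_{t,i})^{2}$ as a single factor rather than squaring first. The rest is mechanical plugging-in of Lemma \ref{thm:idio-est-bound}.
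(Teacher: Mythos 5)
Your proposal is correct and follows essentially the same route as the paper: part (i) uses the identical three-term decomposition (squared estimation error, cross term, and concentration of the sample variance) controlled by Lemma \ref{thm:idio-est-bound} plus the mixing-based uniform bound $\max_i|T^{-1}\sum_t e_{t,i}^2-\E{e_{t,i}^2}|=\Op{\sqrt{\log N/T}}$; part (ii) rests on the diagonal entries of $\tilde\bSigma_e$ being bounded away from zero; and part (iii) reduces the $\Sigma_\eps$-norm to $N^{-1}\|\tilde\bSigma_e-\bSigma_e\|_F^2\le\|\tilde\bSigma_e-\bSigma_e\|^2$ and squares the rate from (i). Your worry about losing a $\log N$ factor in (iii) is unfounded — the paper itself uses the crude $\max\times N$ bound, which is lossless here since the target rate is exactly the square of the rate in (i).
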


\begin{proof}
    \begin{enumerate}[label=(\roman*)]
        \item Note that $\tilde{\bSigma}_e = \diag\braces{ \frac{1}{T} \sum_{t=1}^T \tilde e_{t,i}^2 }$ and $\bSigma_e = \diag\braces{\E{e_{t,i}^2}}$
        \begin{equation*}
        \begin{split}
        \norm{\tilde{\bSigma}_e - \bSigma_e}  & = \underset{i\le N}{\max} \braces{\Big| \frac{1}{T} \sum_{t=1}^T\paran{ \tilde e_{t,i}^2 - e_{t,i}^2}\Big| - \Big| \frac{1}{T} \sum_{t=1}^T \paran{ e_{t,i}^2 - \E{e_{t,i}^2} } \Big|}\\
        & \le \underset{i\le N}{\max} \frac{1}{T} \sum_{t=1}^T \abs{\tilde e_{t,i}-e_{t,i}}^2
        + 2 \underset{i\le N}{\max} \frac{1}{T} \sum_{t=1}^T \abs{\tilde e_{t,i}-e_{t,i}}\abs{e_{t,i}}
        + \underset{i\le N}{\max} \Big|\frac{1}{T} \sum_{t=1}^T \paran{ e_{t,i}^2 - \E{e_{t,i}^2} }\Big| \\
        & = \Op{\frac{1}{\sqrt{N}} + \sqrt{\frac{\log(N)}{T}} +\kappa_{NT} K}.
        \end{split}
        \end{equation*}
        which follows from Lemma \ref{thm:idio-est-bound} and Assumption \ref{assum:mixing}.
        \item Note that $\max_i|\frac{1}{T} \sum_{t=1}^T e_{t,i}^2 - \E{e_{t,i}^2}| = \Op{\sqrt{\log N/T}}$.
        Further by (i) we have $\underset{i\le N}{\min}T^{-1}\sum_t e_{t,i}^2 > c$,
        $\underset{i\le N}{\min} T^{-1}\sum_t\tilde e_{t,i}^2$ is bounded away from 0 with probability approaching 1.
        This implies that
        \[\norm{\tilde{\bSigma}_e^{-1} - \bSigma_e^{-1}} = \Op{\frac{1}{\sqrt{N}} + \sqrt{\frac{\log(N)}{T}} +\kappa_{NT} K}.\]
        \item  By (i), we have
        \begin{equation*}
            \begin{split}
              \norm{\tilde{\bSigma}_e - \bSigma_e}_{\Sigma_\eps}^2 & = \Op{ N^{-1} \norm{\tilde{\bSigma}_e - \bSigma_e}_F^2 } = \Op{ \norm{\tilde{\bSigma}_e - \bSigma_e}^2 } = \Op{\frac{1}{N} + \frac{\log(N)}{T} + \kappa_{NT}^2 K^2}
            \end{split}
        \end{equation*}
    \end{enumerate}
\end{proof}

\subsubsection{Covariance bound of the CNAR residual $\tilde{\beps}_t$}

\begin{lemma} \label{thm:cov-signal-utils-1}
Denote $\omega_{NTK} = \frac{1}{\sqrt{N}} + \sqrt{\frac{\log(N)}{T}}+\kappa_{NT}K$
, we have the following results:
\begin{enumerate}[label=(\roman*)]
    \item $\norm{\tilde\bLam - \bLam\bH^\top}_F^2 = \Op{ N \omega_{NTK}^2 }$ and $\norm{\paran{\tilde\bLam - \bLam\bH^\top}^\top \paran{\tilde\bLam - \bLam\bH^\top}}_{\Sigma_\eps}^2 = \Op{ N \omega_{NTK}^4 }$.
    \item $\norm{\bLam\bH^\top \paran{\tilde\bLam - \bLam\bH^\top}^\top}_{\Sigma_\eps}^2 = \Op{\omega_{NTK}^2}$.
    \item $\norm{\bLam (\bH^\top\bH - \bI_K) \bLam^\top}_{\Sigma_\eps}^2 = \Op{ N^{-1} \paran{T^{-1}+N^{-1} + \kappa_{NT}^2 K^2} }$.
\end{enumerate}
\end{lemma}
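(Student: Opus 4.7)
The plan hinges on two ingredients. First, the uniform row-wise estimate $\max_{i\le N}\|\tilde\blam_{i\cdot} - \bH\blam_{i\cdot}\| = \Op{\omega_{NTK}}$ from Corollary~\ref{thm:lam-f-max-bounds}. Second, the observation that, although pervasiveness (Assumption~\ref{assum:pervasive}) forces $\|\bSigma_\eps\|_{\mathrm{op}}\asymp N$, both $\|\bSigma_\eps^{-1}\|_{\mathrm{op}}$ and the projected quantity $\|\bLam^\top\bSigma_\eps^{-1}\bLam\|_{\mathrm{op}}$ are $O(1)$. The first bound is immediate since $\lambda_{\min}(\bSigma_\eps)\ge\lambda_{\min}(\bSigma_e)>C_1$ by Assumption~\ref{assum:e-n-f}. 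The second will follow from applying the Sherman--Morrison--Woodbury identity to $\bSigma_\eps = \bLam\bSigma_f\bLam^\top + \bSigma_e$: setting $\bA := \bLam^\top\bSigma_e^{-1}\bLam$, one finds
\[
\bLam^\top\bSigma_\eps^{-1}\bLam = (\bI_M + \bA^{-1}\bSigma_f^{-1})^{-1}\bSigma_f^{-1},
\]
and Assumption~\ref{assum:pervasive} together with Assumption~\ref{assum:e-n-f} give $\|\bA^{-1}\| = O(N^{-1})$ and $\|\bSigma_f^{-1}\| = O(1)$, hence $\|\bA^{-1}\bSigma_f^{-1}\| = O(N^{-1})$ and the claim.

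For (i), the Frobenius-norm bound is immediate by summing the squared row estimate:
\[
\|\tilde\bLam - \bLam\bH^\top\|_F^2 = \sum_{i=1}^{N}\|\tilde\blam_{i\cdot} - \bH\blam_{i\cdot}\|^2 \le N\cdot\max_i\|\tilde\blam_{i\cdot} - \bH\blam_{i\cdot}\|^2 = \Op{N\omega_{NTK}^2}.
\]
For the second statement I would interpret the matrix as the $N\times N$ outer product $\bD\bD^\top$ with $\bD := \tilde\bLam - \bLam\bH^\top$ (the literal $\bD^\top\bD$ is only $M\times M$ and is incompatible with the $N\times N$ $\bSigma_\eps$-norm; this appears to be a typographical transposition). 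Combining $\|\bD\bD^\top\|_F^2 = \tr((\bD^\top\bD)^2) \le (\tr(\bD^\top\bD))^2 = \|\bD\|_F^4$ with $\|\bSigma_\eps^{-1}\|_{\mathrm{op}} = O(1)$ yields
\[
\|\bD\bD^\top\|_{\bSigma_\eps}^2 \le N^{-1}\|\bSigma_\eps^{-1}\|_{\mathrm{op}}^2\|\bD\|_F^4 = \Op{N\omega_{NTK}^4}.
\]

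For (ii), iterating the sub-multiplicative inequalities $\|XY\|_F \le \|X\|_{\mathrm{op}}\|Y\|_F$ and $\|YZ\|_F \le \|Y\|_F\|Z\|_{\mathrm{op}}$ gives
\[
\|\bSigma_\eps^{-1/2}\bLam\bH^\top\bD^\top\bSigma_\eps^{-1/2}\|_F \le \|\bSigma_\eps^{-1/2}\bLam\|_{\mathrm{op}}\,\|\bH\|_{\mathrm{op}}\,\|\bD\|_F\,\|\bSigma_\eps^{-1/2}\|_{\mathrm{op}},
\]
where $\|\bSigma_\eps^{-1/2}\bLam\|_{\mathrm{op}}^2 = \|\bLam^\top\bSigma_\eps^{-1}\bLam\|_{\mathrm{op}} = O(1)$ by the Woodbury step above, $\|\bH\|_{\mathrm{op}} = O_p(1)$ by Lemma~\ref{thm:H}, and $\|\bD\|_F = \Op{\sqrt{N}\omega_{NTK}}$ by (i); squaring and dividing by $N$ yields $\Op{\omega_{NTK}^2}$. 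For (iii), with $\bC := \bH^\top\bH - \bI_K$, the same sub-multiplicative chain reduces the problem to
\[
\|\bLam\bC\bLam^\top\|_{\bSigma_\eps}^2 \le N^{-1}\|\bSigma_\eps^{-1/2}\bLam\|_{\mathrm{op}}^4\,\|\bC\|_F^2 = O(N^{-1})\,\|\bC\|_F^2,
\]
so it suffices to show $\|\bC\|_F^2 = \Op{T^{-1}+N^{-1}+\kappa_{NT}^2 K^2}$, namely a Frobenius-norm strengthening of Lemma~\ref{thm:H}.

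The step I expect to be the main obstacle is precisely this Frobenius upgrade of Lemma~\ref{thm:H}: the naive bound $\|\bC\|_F \le \sqrt{K}\|\bC\|_{\mathrm{op}}$ costs an extra factor of $K$ and just misses the stated rate, so one must revisit the expansion of $\bH^\top\bH - \bI_K$ behind the proof of Lemma~\ref{thm:H} and control each residual term in Frobenius (rather than operator) norm. Ensuring the Woodbury argument survives when $K$ diverges is a secondary concern, but is automatic since that argument operates on the fixed-dimensional $M\times M$ matrix $\bA$.
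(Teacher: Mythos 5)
Your proposal is correct and follows essentially the same route as the paper's proof: part (i) sums the row-wise bound of Corollary \ref{thm:lam-f-max-bounds} and combines $\norm{\cdot}_{\Sigma_\eps}^2=\Op{N^{-1}}\norm{\cdot}_F^2$ with $\norm{\bC_T\bC_T^\top}_F\le\norm{\bC_T}_F^2$, while parts (ii) and (iii) rest on $\norm{\bLam^\top\bSigma_\eps^{-1}\bLam}=\bigO{1}$ (which the paper simply cites from \cite{fan2008high} rather than re-deriving via Sherman--Morrison--Woodbury as you do) together with Lemma \ref{thm:H}. The one ``main obstacle'' you flag in (iii) is not actually an obstacle: $\bH$ is $M\times M$, where $M$ is the \emph{fixed} number of latent factors (the subscript in $\bI_K$ is a notational slip in the paper, not the diverging number of communities $K$), so $\norm{\bH^\top\bH-\bI}_F\le\sqrt{M}\,\norm{\bH^\top\bH-\bI}=\bigO{1}\cdot\norm{\bH^\top\bH-\bI}$ and Lemma \ref{thm:H} already delivers the required Frobenius-norm rate with no further work.
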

\begin{proof}
    We denote $\bC_T = \tilde\bLam - \bLam\bH^\top$.
    \begin{enumerate}[label=(\roman*)]
        \item We have $\norm{\tilde\bLam - \bLam\bH^\top}_F^2 \le N \cdot \underset{i\le N}{\max} \norm{ \tilde\blam_{i\cdot} - \bH\blam_{i\cdot} } = \Op{N \omega_{NTK}^2}$ by Corollary \ref{thm:lam-f-max-bounds}.
        Moreover, since
               all the eigenvalues of $\bSigma_{\eps}$ are bounded away from $0$, for any matrix $\bA$, $\norm{\bA}_{\Sigma_\ve}^2 = \Op{N^{-1}} \norm{\bA}_F^2$.
        Hence, $\norm{\paran{\tilde\bLam - \bLam\bH^\top}^\top \paran{\tilde\bLam - \bLam\bH^\top}}_{\Sigma_\ve}^2 = \Op{N^{-1}}\cdot\Op{\norm{\tilde\bLam - \bLam\bH^\top}_F^4} =  \Op{N \omega_{NTK}^4}$.
        \item The same argument for the proof of Theorem 2 in \cite{fan2008high} implies that $\norm{\bLam^\top\bSigma_\eps^{-1}\bLam} = \bigO{1}$.
        Thus using the conclusion of (i)
        \begin{equation*}
            \begin{split}
            \norm{\bLam\bH^\top \paran{\tilde\bLam - \bLam\bH^\top}^\top}_{\Sigma_\ve}^2 & = \norm{\bLam\bH^\top \bC_T^\top}_{\Sigma_\ve}^2 = N^{-1} \Tr\paran{ \bH^\top \bC_T^\top \Sigma_{\eps}^{-1} \bC_T \bH \bLam^\top \Sigma_{\eps}^{-1} \bLam} \\
            & \le  N^{-1} \norm{ \bH }^2 \norm{\bLam^\top \Sigma_{\eps}^{-1} \bLam} \norm{\bC_T}_F^2 \\
            & = \Op{ N^{-1} \norm{\bC_T}_F^2} \\
            & = \Op{\omega_{NTK}^2}.
            \end{split}
        \end{equation*}

        \item Again, by $\norm{\bLam^\top\bSigma_\eps^{-1}\bLam} = \bigO{1}$ and Lemma \ref{thm:H},
        \begin{equation*}
        \begin{split}
        \norm{\bLam (\bH^\top\bH - \bI_K) \bLam^\top}_{\Sigma_\ve}^2 & = N^{-1} \Tr\paran{ (\bH^\top\bH - \bI_K) \bLam^\top \Sigma_{\eps}^{-1} \bLam (\bH^\top\bH - \bI_K) \bLam^\top \Sigma_{\eps}^{-1} \bLam} \\
        & \le  N^{-1} \norm{ \bH^\top\bH - \bI_K }_F^2 \norm{\bLam^\top \Sigma_{\eps}^{-1} \bLam}^2\\
        & = \Op{ N^{-1} \paran{T^{-1}+N^{-1} + \kappa_{NT}^2 K^2}} .
        \end{split}
        \end{equation*}

    \end{enumerate}
\end{proof}

\begin{lemma} \label{thm:cov-signal-utils-2}
    Denote $\omega_{NTK} = \frac{1}{\sqrt{N}} + \sqrt{\frac{\log(N)}{T}} + \kappa_{NT}K$,  we have
    \[
      \norm{ \tilde\bLam^\top \tilde\bSigma_{e}^{-1} \tilde\bLam - (\bLam\bH^\top)^\top \bSigma_{e}^{-1} (\bLam\bH^\top)} = \Op{N  \omega_{NTK} }.
    \]
\end{lemma}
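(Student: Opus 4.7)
\textbf{Proof Plan for Lemma \ref{thm:cov-signal-utils-2}.} The plan is to expand the difference around the error matrix $\bC_T := \tilde\bLam - \bLam\bH^\top$, so that each piece can be bounded using results already established for $\bC_T$ and for $\tilde\bSigma_e^{-1} - \bSigma_e^{-1}$. Writing $\tilde\bLam = \bLam\bH^\top + \bC_T$ and inserting and subtracting, I would use the decomposition
\begin{align*}
\tilde\bLam^\top \tilde\bSigma_e^{-1} \tilde\bLam - (\bLam\bH^\top)^\top\bSigma_e^{-1}(\bLam\bH^\top)
&= (\bLam\bH^\top)^\top(\tilde\bSigma_e^{-1} - \bSigma_e^{-1})(\bLam\bH^\top)\\
&\quad + (\bLam\bH^\top)^\top \tilde\bSigma_e^{-1}\bC_T + \bC_T^\top \tilde\bSigma_e^{-1}(\bLam\bH^\top) + \bC_T^\top \tilde\bSigma_e^{-1}\bC_T,
\end{align*}
and then bound the four terms separately in operator norm.

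The basic ingredients are: $\|\bLam\bH^\top\| = \Op{\sqrt N}$ (Assumption \ref{assum:pervasive} gives $\|\bLam\|=\Op{\sqrt N}$, while $\|\bH\|=\Op{1}$ by Lemma \ref{thm:H}), $\|\tilde\bSigma_e^{-1}\|=\Op{1}$ (Corollary \ref{thm:idio-cov-bound}(ii) plus $\lambda_{\min}(\bSigma_e)>C_1$), $\|\tilde\bSigma_e^{-1}-\bSigma_e^{-1}\|=\Op{\omega_{NTK}}$ (Corollary \ref{thm:idio-cov-bound}(ii)), and $\|\bC_T\|\le\|\bC_T\|_F=\Op{\sqrt N\,\omega_{NTK}}$ (Lemma \ref{thm:cov-signal-utils-1}(i)). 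Applying these to the decomposition term by term gives: the first term is $\Op{N\omega_{NTK}}$, the two cross terms are each bounded by $\|\bLam\bH^\top\|\cdot\|\tilde\bSigma_e^{-1}\|\cdot\|\bC_T\|=\Op{\sqrt N\cdot 1\cdot \sqrt N\,\omega_{NTK}}=\Op{N\omega_{NTK}}$, and the quadratic term is $\Op{\|\bC_T\|_F^2}=\Op{N\omega_{NTK}^2}$, which is of smaller order since $\omega_{NTK}=\smlo{1}$.

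The only subtlety is to be careful about passing from Frobenius to operator norm without losing a factor of $\sqrt K$: both cross terms are handled by the crude bound $\|\bC_T\|\le\|\bC_T\|_F$, which is adequate because the loading side already supplies an $\Op{\sqrt N}$ factor. I do not expect any genuine obstacle: the result follows by collecting the four bounds and taking the largest, namely $\Op{N\omega_{NTK}}$, as claimed.
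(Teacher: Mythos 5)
Your proof is correct, and it is essentially the argument the paper invokes: the paper's own "proof" simply defers to Lemma 14 of \cite{fan2013large}, whose proof is exactly this expansion around $\bC_T=\tilde\bLam-\bLam\bH^\top$ with term-by-term operator-norm bounds using $\|\bC_T\|_F=\Op{\sqrt{N}\,\omega_{NTK}}$, $\|\tilde\bSigma_e^{-1}-\bSigma_e^{-1}\|=\Op{\omega_{NTK}}$, and $\|\bLam\bH^\top\|=\Op{\sqrt N}$. All four of your bounds check out, so no gap.
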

\begin{proof}
    By the results of Corollary \ref{thm:lam-f-max-bounds} (ii) and Corollary \ref{thm:idio-cov-bound}, the proof is the same as Lemma 14 in \cite{fan2013large}.
    Our result corresponds to the special case where $q = 0$ and $m_p$ is a constant in Lemma 14 in \cite{fan2013large}.
\end{proof}

\begin{lemma} \label{thm:cov-signal-utils-3}
    Denote $\omega_{NTK} = \frac{1}{\sqrt{N}} + \sqrt{\frac{\log(N)}{T}} + \kappa_{NT}K$. If $\omega_{NTK} = \smlo{1}$, then with probability approaching 1, for some $c > 0$,
    \begin{enumerate}[label=(\roman*)]
        \item $\lambda_{\min} \braces{ \bI_K + (\bLam\bH^\top)^\top \bSigma_{e}^{-1} \bLam\bH^\top } \ge cN$.
        \item $\lambda_{\min} \braces{ \bI_K + \tilde\bLam^\top \tilde\bSigma_{e}^{-1} \tilde\bLam } \ge cN$
        \item $\lambda_{\min} \braces{ \bI_K + \bLam^\top \bSigma_{e}^{-1} \bLam } \ge cN$
        \item $\lambda_{\min} \braces{ (\bH\bH^\top)^{-1} + \bLam^\top \bSigma_{e}^{-1} \bLam } \ge cN$
    \end{enumerate}
\end{lemma}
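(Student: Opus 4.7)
The four claims all concern lower bounds on the minimum eigenvalue of $K\times K$ positive semidefinite matrices, and my plan is to prove (iii) first as the workhorse and then propagate the bound to the other three parts via the already established stability lemmas.

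The driver is (iii). By Assumption~\ref{assum:pervasive} (pervasiveness), $\lambda_{\min}(N^{-1}\bLambda^\top\bLambda)$ is bounded below by a positive constant, so $\lambda_{\min}(\bLambda^\top\bLambda)\ge c_\Lambda N$. By Assumption~\ref{assum:e-n-f}(b), $\lambda_{\max}(\bSigma_e)\le \|\bSigma_e\|_1\le C_2$, hence $\lambda_{\min}(\bSigma_e^{-1})\ge C_2^{-1}$. Using the variational characterization,
\begin{equation*}
\lambda_{\min}(\bLambda^\top\bSigma_e^{-1}\bLambda)
=\min_{\|\bv\|=1}(\bLambda\bv)^\top\bSigma_e^{-1}(\bLambda\bv)
\ge \lambda_{\min}(\bSigma_e^{-1})\,\lambda_{\min}(\bLambda^\top\bLambda)\ge c\,N
\end{equation*}
for some $c>0$. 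Since $\bI_K$ is PSD, adding it only increases the minimum eigenvalue, giving (iii).

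For (iv), Lemma~\ref{thm:H}(a) yields $\bH\bH^\top=\bI_K+o_p(1)$, so with probability approaching one $\bH\bH^\top$ is invertible and $(\bH\bH^\top)^{-1}$ is PSD (indeed $\bI_K+o_p(1)$). Combining with (iii) through Weyl's inequality closes (iv). For (i), rewrite $(\bLambda\bH^\top)^\top\bSigma_e^{-1}(\bLambda\bH^\top)=\bH\bigl(\bLambda^\top\bSigma_e^{-1}\bLambda\bigr)\bH^\top$. Lemma~\ref{thm:H}(a) gives $\sigma_{\min}(\bH)^2=\lambda_{\min}(\bH\bH^\top)\ge 1-o_p(1)$, and for PSD $\bA$ we have $\lambda_{\min}(\bH\bA\bH^\top)\ge \sigma_{\min}(\bH)^2\,\lambda_{\min}(\bA)$ by the variational characterization applied to $\bH^\top\bu$ for unit $\bu$. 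Invoking (iii) for $\bA=\bLambda^\top\bSigma_e^{-1}\bLambda$ delivers the $cN$ lower bound, and adding $\bI_K$ preserves it.

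Finally, (ii) follows from (i) by a perturbation argument. Lemma~\ref{thm:cov-signal-utils-2} gives
\begin{equation*}
\bigl\|\tilde\bLambda^\top\tilde\bSigma_e^{-1}\tilde\bLambda-(\bLambda\bH^\top)^\top\bSigma_e^{-1}(\bLambda\bH^\top)\bigr\|=\Op{N\omega_{NTK}},
\end{equation*}
and since $\omega_{NTK}=o(1)$ by hypothesis, Weyl's inequality yields $\lambda_{\min}(\bI_K+\tilde\bLambda^\top\tilde\bSigma_e^{-1}\tilde\bLambda)\ge cN - \Op{N\omega_{NTK}}\ge (c/2)N$ with probability approaching one. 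A union bound over the four ``w.p.a.\ 1'' events completes the proof. There is no genuine obstacle here — the proof is a short chain of applications of Lemmas~\ref{thm:H} and~\ref{thm:cov-signal-utils-2} together with the pervasiveness assumption; the only subtlety is making sure that the $o_p(1)$ perturbations in $\bH\bH^\top$ and the $\Op{N\omega_{NTK}}$ perturbation in (ii) are dominated by the $cN$ lower bound, which is immediate under the stated rate conditions.
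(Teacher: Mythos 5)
Your proof is correct and follows essentially the same route as the paper, which simply defers to Lemma 15 of Fan, Liao and Mincheva (2013): the pervasiveness assumption plus $\lambda_{\max}(\bSigma_e)\le\|\bSigma_e\|_1$ gives the $cN$ bound for part (iii), and Weyl-type perturbation via Lemma \ref{thm:H} and Lemma \ref{thm:cov-signal-utils-2} transfers it to (i), (ii) and (iv). Your writeup is a correct self-contained version of that omitted argument.
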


\begin{proof}
    Our result corresponds to the special case where $q = 0$ and $m_p$ is a constant in Lemma 15 in \cite{fan2013large}.
    The proof is the same and thus omitted here.
\end{proof}

\subsubsection{Proof of Theorem \ref{thm:resid-cov-bound}}\label{appen_cov}


\begin{proof}
    Denote $\omega_{NTK} = \frac{1}{\sqrt{N}} + \sqrt{\frac{\log(N)}{T}}+\kappa_{NT}K$,
    \begin{enumerate}[label=(\roman*)]
        \item By Lemma \ref{thm:cov-signal-utils-1},
        \begin{equation*}
        \begin{split}
         & \norm{\bLam (\bH^\top\bH - \bI_K) \bLam^\top}_{\Sigma_\eps}^2 + \norm{\bLam\bH^\top \paran{\tilde\bLam - \bLam\bH^\top}^\top}_{\Sigma_\eps}^2 + \norm{\paran{\tilde\bLam - \bLam\bH^\top}^\top \paran{\tilde\bLam - \bLam\bH^\top}}_{\Sigma_\eps}^2 \\
         & = \Op{N^{-1} \paran{T^{-1}+N^{-1} + \kappa_{NT}^2 K^2} + \omega_{NTK}^2 + N \omega_{NTK}^4} \\
         & = \Op{ N^{-1} + \log(N)/T + N \log(N)^2/T^2 + \kappa_{NT}^2K^2 + N\kappa_{NT}^4K^4}.
        \end{split}
        \end{equation*}
        Hence for a generic constant $C>0$, by Corollary \ref{thm:idio-cov-bound}, we have
        \begin{equation*}
        \begin{split}
        \norm{\tilde{\bSigma}_{\eps} - \bSigma_{\eps}}_{\bSigma_{\eps}}^2 & \le C \norm{ \tilde\bLam\tilde\bLam^\top - \bLam\bLam^\top }_{\Sigma_\eps}^2 + C \norm{\tilde{\bSigma}_e - \bSigma_e}_{\Sigma_\eps}^2 \\
         & \le C \norm{\bLam (\bH^\top\bH - \bI_K) \bLam^\top}_{\Sigma_\eps}^2 + C \norm{\bLam\bH^\top \paran{\tilde\bLam - \bLam\bH^\top}^\top}_{\Sigma_\eps}^2 \\
         & + C \norm{\paran{\tilde\bLam - \bLam\bH^\top}^\top \paran{\tilde\bLam - \bLam\bH^\top}}_{\Sigma_\eps}^2 + C \norm{\tilde{\bSigma}_e - \bSigma_e}_{\Sigma_\eps}^2 \\
        & = \Op{N^{-1} + \log(N)/T + N \log(N)^2/T^2 + \kappa_{NT}^2K^2 + N\kappa_{NT}^4K^4}.
        \end{split}.
        \end{equation*}
         {Using the fact that $2ab \le a^2 + b^2$, we obtain the final results by throwing away cross terms  $\log(N)/T=\paran{ N^{-1} \cdot N \log(N)^2/T^2 }^{1/2}$ and $\kappa_{NT}^2K^2 = \paran{N^{-1}\cdot N\kappa_{NT}^4K^4}^{1/2}$. }
         \item Define $\breve\bSigma_\eps = \bLam\bH^\top\bH\bLam^\top + \bSigma_e$.
         Note that $\tilde\bSigma_\eps = \tilde\bLam\tilde\bLam^\top + \tilde\bSigma_{e}$ and $\bSigma_{\eps} = \bLam\bLam^\top + \bSigma_e$.
         The triangular inequality gives
         \[
           \norm{\tilde{\bSigma}_{\eps}^{-1} - \bSigma_{\eps}^{-1}} \le \norm{\tilde{\bSigma}_{\eps}^{-1} - \breve\bSigma_{\eps}^{-1}} + \norm{\breve{\bSigma}_{\eps}^{-1} - \bSigma_{\eps}^{-1}}.
         \]

         {\sc Step 1.} Firstly, we bound $\norm{\tilde{\bSigma}_{\eps}^{-1} - \breve\bSigma_{\eps}^{-1}}$.

         Let $\bG = \braces{\bI_K + \tilde\bLam^\top \tilde{\bSigma}_e^{-1} \tilde\bLam}^{-1}$ and
         $\bG_1 = \braces{\bI_K + (\bLam\bH^\top)^\top \tilde{\bSigma}_e^{-1} (\bLam\bH^\top) }^{-1}$.

         By Sherman-Morrison-Woodbury formula, we have $\tilde\Sigma_\ve^{-1} = \tilde\Sigma_e^{-1} - \tilde\Sigma_e^{-1}\tilde\bLam(\bI_K +
         \tilde\bLam^\top \tilde\Sigma_e^{-1}\tilde\bLam)^{-1}\tilde\bLam^\top\tilde\Sigma_e^{-1}$
         and $\breve \Sigma_\ve^{-1} = \Sigma_e^{-1} - \Sigma_e^{-1}\blam\bH^\top\{\bI_K +
         (\tilde\bLam\bH^\top)^\top \tilde\Sigma_e^{-1}(\tilde\bLam\bH^\top)\}^{-1}\tilde((\tilde\bLam\bH^\top))^\top\Sigma_e^{-1}$.
         Hence it leads to $\norm{\tilde{\bSigma}_{\eps}^{-1} - \breve\bSigma_{\eps}^{-1}} \le \sum_{i = 1}^6 L_i$, where
         \begin{equation*}
             \begin{split}
             L_1 & = \norm{\tilde{\bSigma}_e^{-1} - \bSigma_e^{-1}} \\
             L_2 & = \norm{ ( \tilde\bSigma_{e}^{-1} - \bSigma_{e}^{-1} ) \tilde\bLam \bG \tilde\bLam^\top \tilde\bSigma_{e}^{-1}} \\
             L_3 & = \norm{ ( \tilde\bSigma_{e}^{-1} - \bSigma_{e}^{-1} ) \tilde\bLam \bG \tilde\bLam^\top \bSigma_{e}^{-1}} \\
             L_4 & = \norm{ \bSigma_{e}^{-1} (\tilde\bLam - \bLam\bH^\top) \bG \tilde\bLam^\top \bSigma_{e}^{-1}} \\
             L_5 & = \norm{ \bSigma_{e}^{-1} (\tilde\bLam - \bLam\bH^\top) \bG \bH\bLam^\top \bSigma_{e}^{-1}} \\
             L_6 & = \norm{ \bSigma_{e}^{-1} \bLam\bH^\top (\bG - \bG_1) \bH\bLam^\top \bSigma_{e}^{-1}}.
             \end{split}
         \end{equation*}
         We bound each of the six terms.
         First, $L_1$ is bounded in Corollary \ref{thm:idio-cov-bound} (ii): $\norm{\tilde{\bSigma}_e^{-1} - \bSigma_e^{-1}} = \Op{\omega_{NTK}}$.
         Then,
         \[
         L_2 \le \norm{ \tilde\bSigma_{e}^{-1} - \bSigma_{e}^{-1}} \norm{ \tilde\bLam \bG \tilde\bLam^\top } \norm{\bSigma_{e}^{-1}} = \Op{L_1},
         \]
          which used the results that $\norm{\bSigma_{e}^{-1}} = O(1)$ and  $\norm{ \bG  } = \Op{N^{-1}} $ by Lemma \ref{thm:cov-signal-utils-3}.
          Similarly, we have $L_3 = \Op{L_1}$.
          Since $\norm{\tilde\bLam - \bLam\bH^\top}_F^2 = \Op{N\omega_{NTK}}$
          Then,
          \[
          L_4 \le \norm{ \bSigma_{e}^{-1} (\tilde\bLam - \bLam\bH^\top)} \norm{\bG} \norm{\tilde\bLam^\top \bSigma_{e}^{-1}} = \Op{\omega_{NTK}}.
          \]
          Similarly, $L_5 = \Op{L_4}$.
          Finally, by Lemma \ref{thm:cov-signal-utils-3}, $\norm{\bG_1} = \Op{N^{-1}}$.
          Then, by Lemma \ref{thm:cov-signal-utils-2},
          \begin{equation*}
              \begin{split}
               \norm{\bG - \bG_1} & = \norm{\bG(\bG^{-1} - \bG_1^{-1})\bG_1}  \le \Op{ N^{-2} } \norm{ (\bLam\bH^\top)^\top\bSigma_{e}^{-1}\bLam\bH^\top - \tilde\bLam^\top \tilde\bSigma_{e}^{-1} \tilde\bLam  } \\
               & = \Op{N^{-1} \omega_{NTK} }.
              \end{split}
          \end{equation*}
          Consequently, $L_6 \le \norm{ \bSigma_{e}^{-1} \bLam\bH^\top}^2 \norm{ \bG - \bG_1 }^2 = \Op{\omega_{NTK}}$.
          Adding up $L_1$ to $L_6$ gives,
          \[
          \norm{\tilde{\bSigma}_{\eps}^{-1} - \breve\bSigma_{\eps}^{-1}} = \Op{\omega_{NTK} }.
          \]
          {\sc Step 2.} Now we turn to $\norm{\breve{\bSigma}_{\eps}^{-1} - \bSigma_{\eps}^{-1}}$.
          Again, using the Sherman-Morrison-Woodbury formula, we have
          \begin{equation*}
              \begin{split}
              \norm{\breve{\bSigma}_{\eps}^{-1} - \bSigma_{\eps}^{-1}} & \le \norm{ \bSigma_{e}^{-1}\bLam\paran{ \paran{(\bH^\top\bH)^{-1} + \bLam^\top\bSigma_{e}^{-1} \bLam }^{-1} - \paran{\bI_k + \bLam^\top\bSigma_{e}^{-1} \bLam}^{-1} } \bLam^\top \bSigma_{e}^{-1}} \\
              & \le \bigO{N} \norm{ \paran{(\bH^\top\bH)^{-1} + \bLam^\top\bSigma_{e}^{-1} \bLam }^{-1} - \paran{\bI_k + \bLam^\top\bSigma_{e}^{-1} \bLam}^{-1} } \\
              & = \Op{N^{-1}} \norm{ (\bH^\top\bH)^{-1} - \bI_k } \\
              & = \Op{N^{-1}\paran{T^{-1/2}+N^{-1/2} + \kappa_{NT} K}} \\
              & = \op{\omega_{NTK}  }
              \end{split}
          \end{equation*}
        where the first equation is due to that $\paran{(\bH^\top\bH)^{-1} + \bLam^\top\bSigma_{e}^{-1} \bLam}^{-1} - \paran{\bI_k + \bLam^\top\bSigma_{e}^{-1} \bLam}^{-1}  =
        \paran{(\bH^\top\bH)^{-1} + \bLam^\top\bSigma_{e}^{-1} \bLam}^{-1}\{(\bH^\top\bH - \bI_K)\}\paran{\bI_k + \bLam^\top\bSigma_{e}^{-1} \bLam}^{-1}$
        and $\|\paran{(\bH^\top\bH)^{-1} + \bLam^\top\bSigma_{e}^{-1} \bLam}^{-1}\| = \Op{N^{-1}}$,
        $\|\paran{\bI_k + \bLam^\top\bSigma_{e}^{-1} \bLam}^{-1}\| = \Op{N^{-1}}$ by Lemma \ref{thm:cov-signal-utils-3}.

    \end{enumerate}
\end{proof}

\section{Additional simulation results}

\subsection{Different network models}

\begin{enumerate}[label={\bf \sc Example \arabic*}, align=left, wide, labelwidth=!, labelindent=0pt]

    \setcounter{enumi}{3}

    \item ({\bf Clusters of Powerlaw graphs and CNAR})
    In this section, we consider a network generated by Holme and Kim algorithm \citep{holme2002growing} for growing graphs with powerlaw degree distribution.
    Specifically, in this type of network, $K$ communities are generated the degrees of the nodes follow a powerlaw distribution.
    As a result, it assumes that the majority of nodes
    have few connections but a small proportion have a large amount of connections \citep{barabasi1999emergence,Clauset:2009}.
    Numerically, the graph can be generated by the Holme and Kim algorithm implemented using the power law graph function \textit{powerlaw\_cluster\_graph} in the Python NetworkX.
    For each node of the $k$-th community, we set the expected degree as $m=0.8\cdot n_k$, where $n_k$ is the number of nodes in the $k$-th community.
    Subsequently, we randomly add $n_{ij}$ edges between the $i$-th and $j$-th communities for all $i,j\in [K]$ with $n_{ij} \sim {\rm Uniform}(0,10)$.
    The membership matrix $\bTheta$ is generated according to the original $K$ disjoint community assignment.
    Other model parameters, such as $\bB$, $\bbeta_2$ and $\gamma$, are set in the same way as in Example 1.
    Under this setting we focus on the sensitivity analysis when the network is not generated from the SBM model.
    We set $N=400$ and vary $T\in\brackets{50,100,\cdots,450}$.
    For CNAR, we set $K=2$.

    \begin{figure}[htpb!]
        \centering
        \begin{subfigure}[b]{0.35\textwidth}
            \centering
            \includegraphics[width=\linewidth]{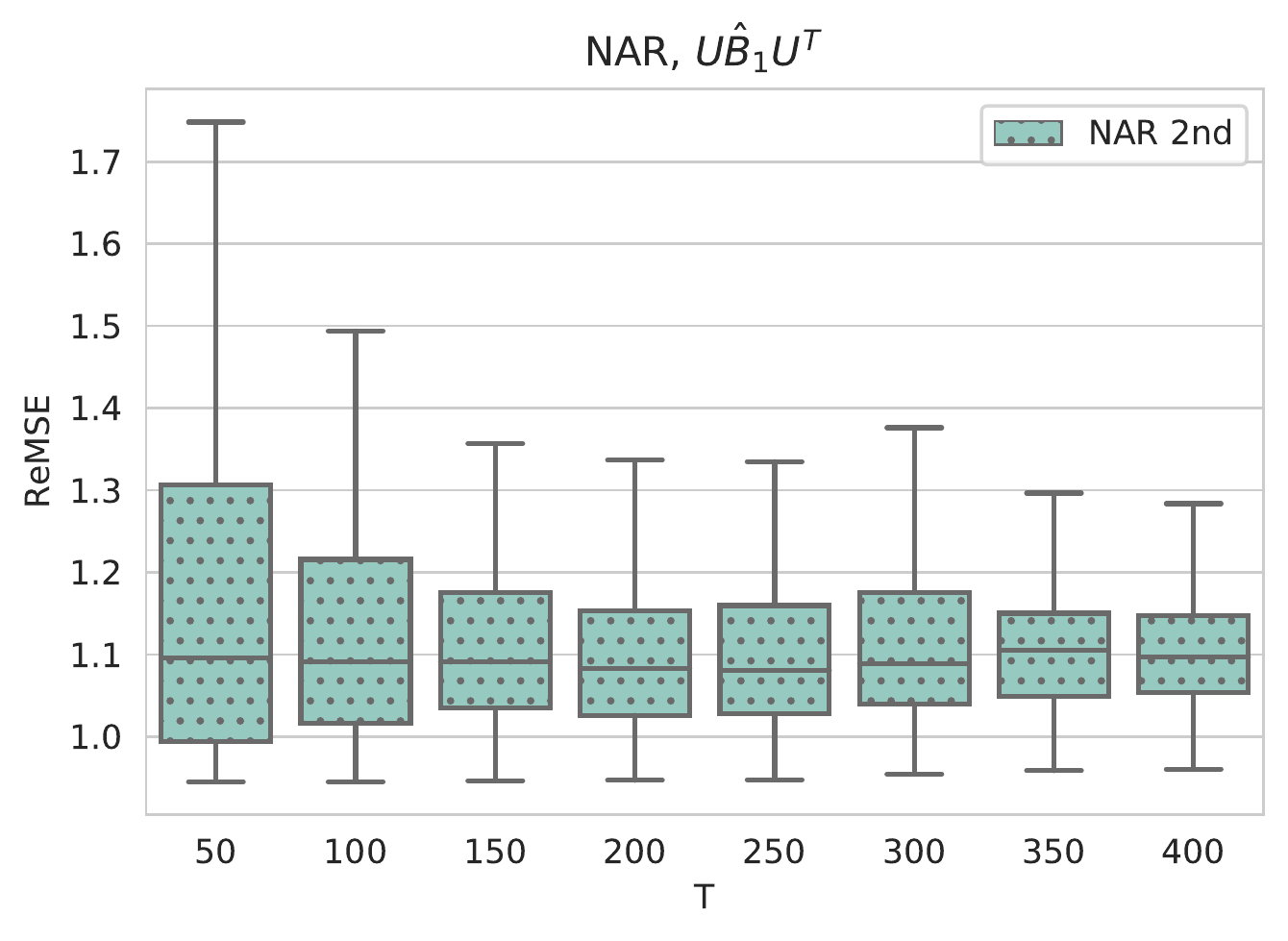}
        \end{subfigure}
        \hspace{3ex}
        \begin{subfigure}[b]{0.35\textwidth}
            \centering
            \includegraphics[width=\linewidth]{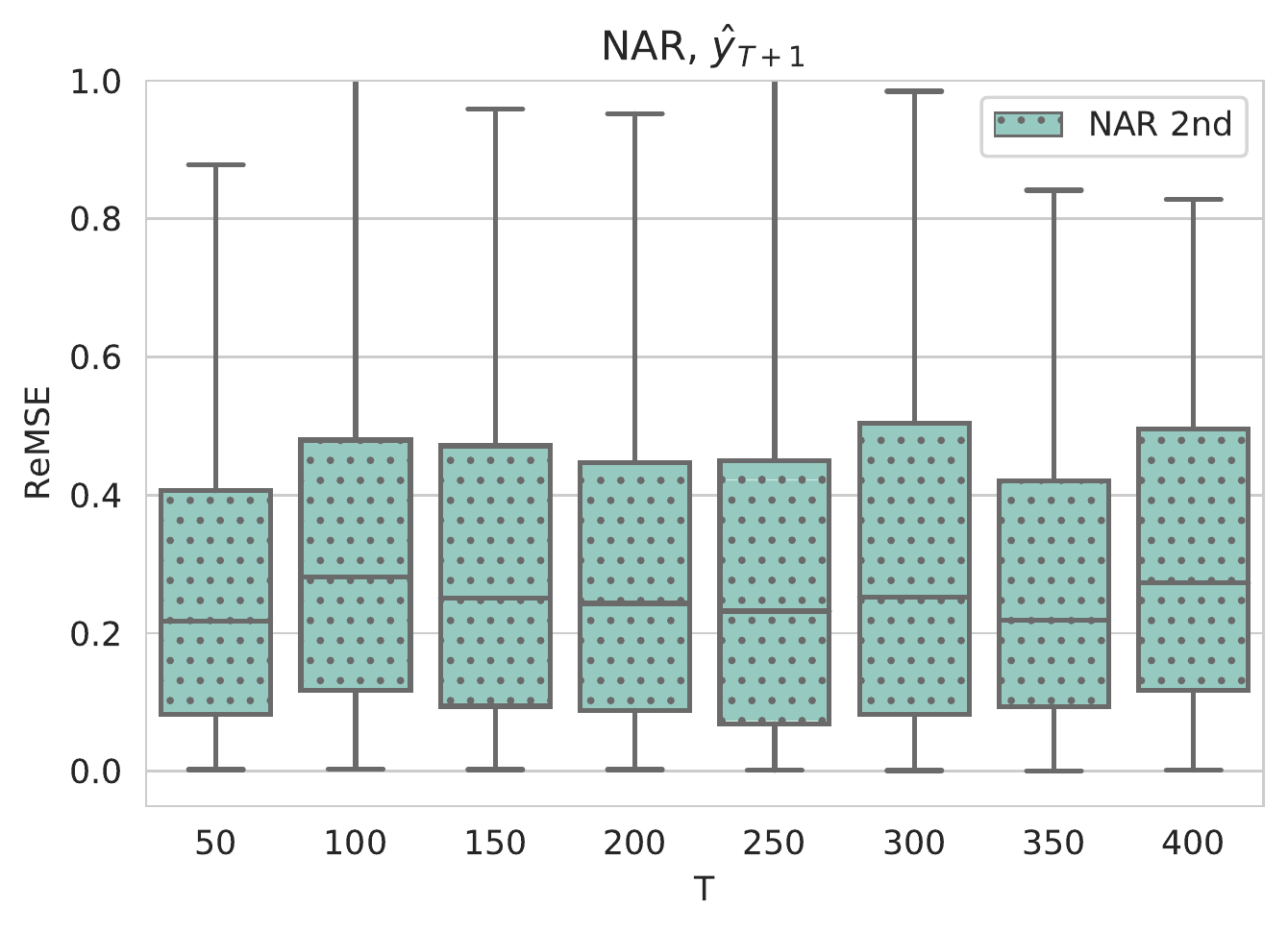}
        \end{subfigure}

        \begin{subfigure}[b]{0.35\textwidth}
            \centering
            \includegraphics[width=\linewidth]{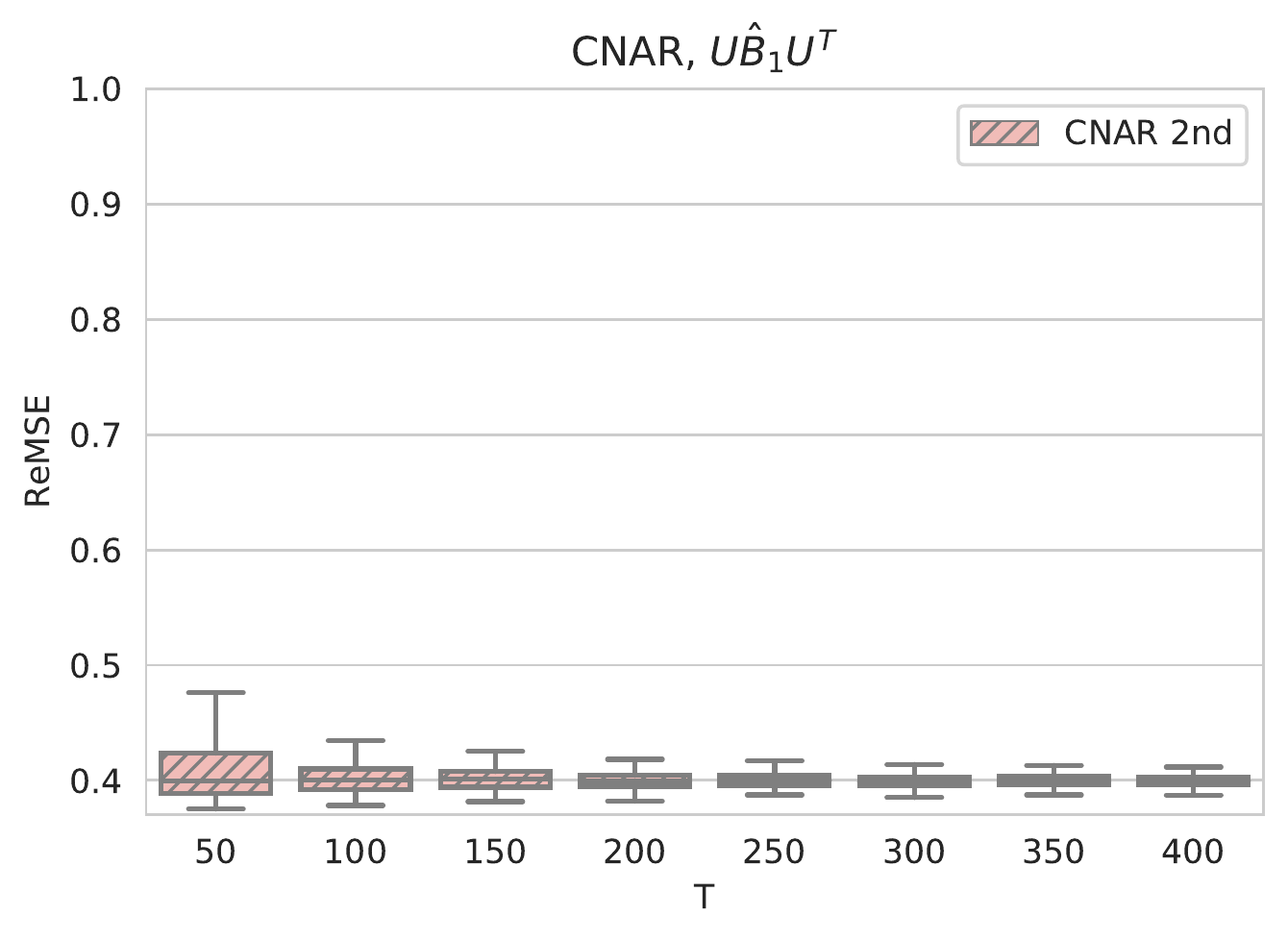}
        \end{subfigure}
        \hspace{3ex}
        \begin{subfigure}[b]{0.35\textwidth}
            \centering
            \includegraphics[width=\linewidth]{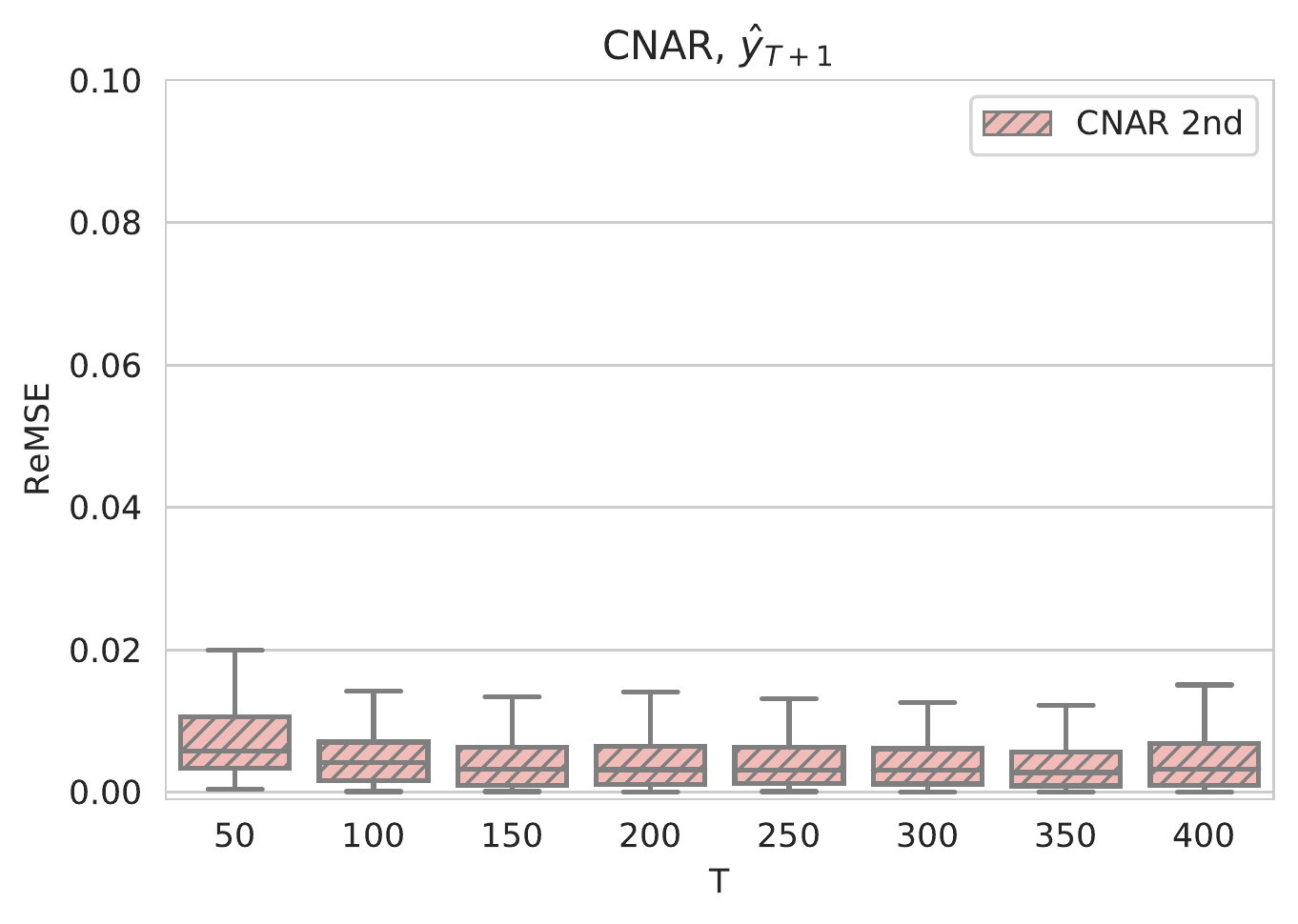}
        \end{subfigure}
        \caption{Example 4 (Power law and CNAR model with $K=2$ and $N=400$) box plots of estimated network AR coefficient $\bU\bB_1\bU^\top$ and 1-step prediction $y_{T+1}$ by NAR (first row) and CNAR (second row), respectively.
        }
        \label{fig:example-3-boxplot}
    \end{figure}

    Figure \ref{fig:example-3-boxplot} shows, for Example 4, the box plots of estimated network AR coefficient $\bU\bB_1\bU^\top$ and 1-step prediction $y_{T+1}$ by NAR (first row) and CNAR (second row), respectively.
    Under this setting with a different network generative model, CNAR still has much smaller  estimation and prediction error compared to the NAR model.

    \item ({\bf Random Partition Graph and CNAR})
    In this example, we consider a network generated by a random partition graph.
    A random partition graph is a graph of communities with different sizes.
    The nodes within the same group are connected with probability $\Pr_{in}$ and nodes of different groups are connected with probability $\Pr_{out}$.
    The network is a generalization of the planted-l-partition described in \cite{fortunato2010community}.
    Under this setting we focus on the sensitivity analysis when the network is not generated from the SBM model.
    We fix $\Pr_{in}=0.9$, $\Pr_{out}=0.1$, set $N=400$ and vary $T\in\brackets{50,100,\cdots,450}$.
    For CNAR, we set $K=2$.

    \begin{figure}[htpb!]
        \centering
        \begin{subfigure}[b]{0.35\textwidth}
            \centering
            \includegraphics[width=\linewidth]{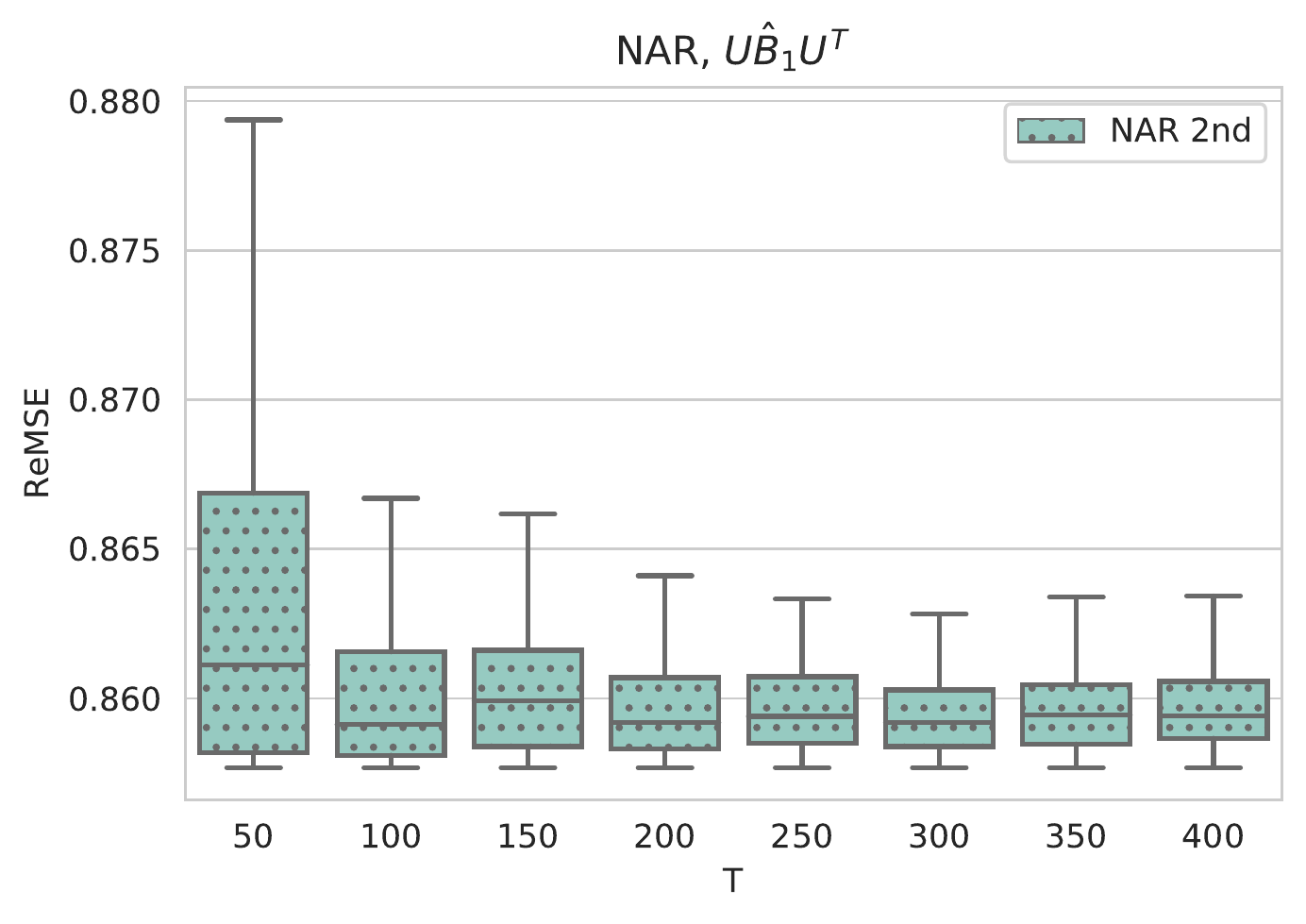}
        \end{subfigure}
        \hspace{3ex}
        \begin{subfigure}[b]{0.35\textwidth}
            \centering
            \includegraphics[width=\linewidth]{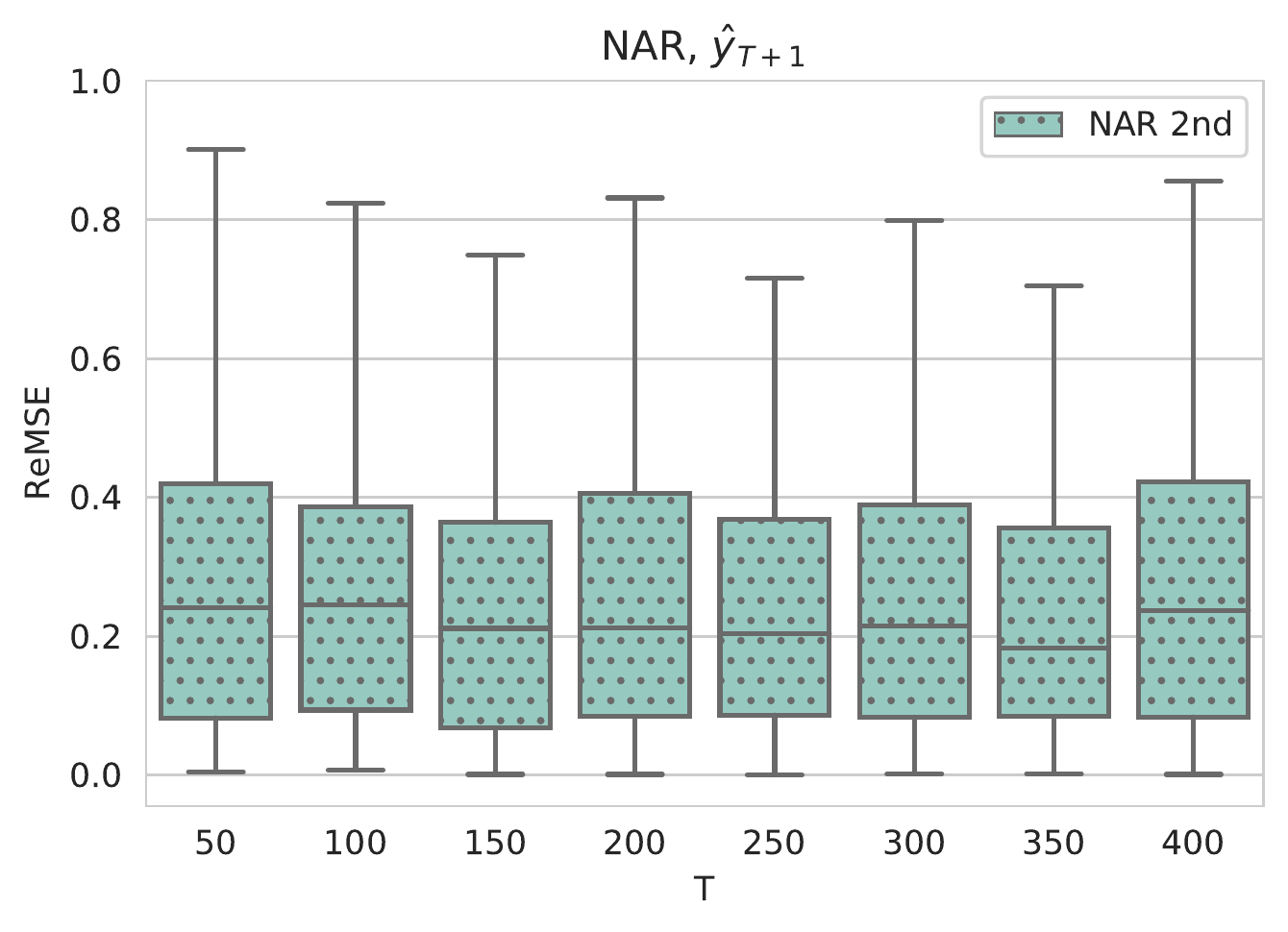}
        \end{subfigure}

        \begin{subfigure}[b]{0.35\textwidth}
            \centering
            \includegraphics[width=\linewidth]{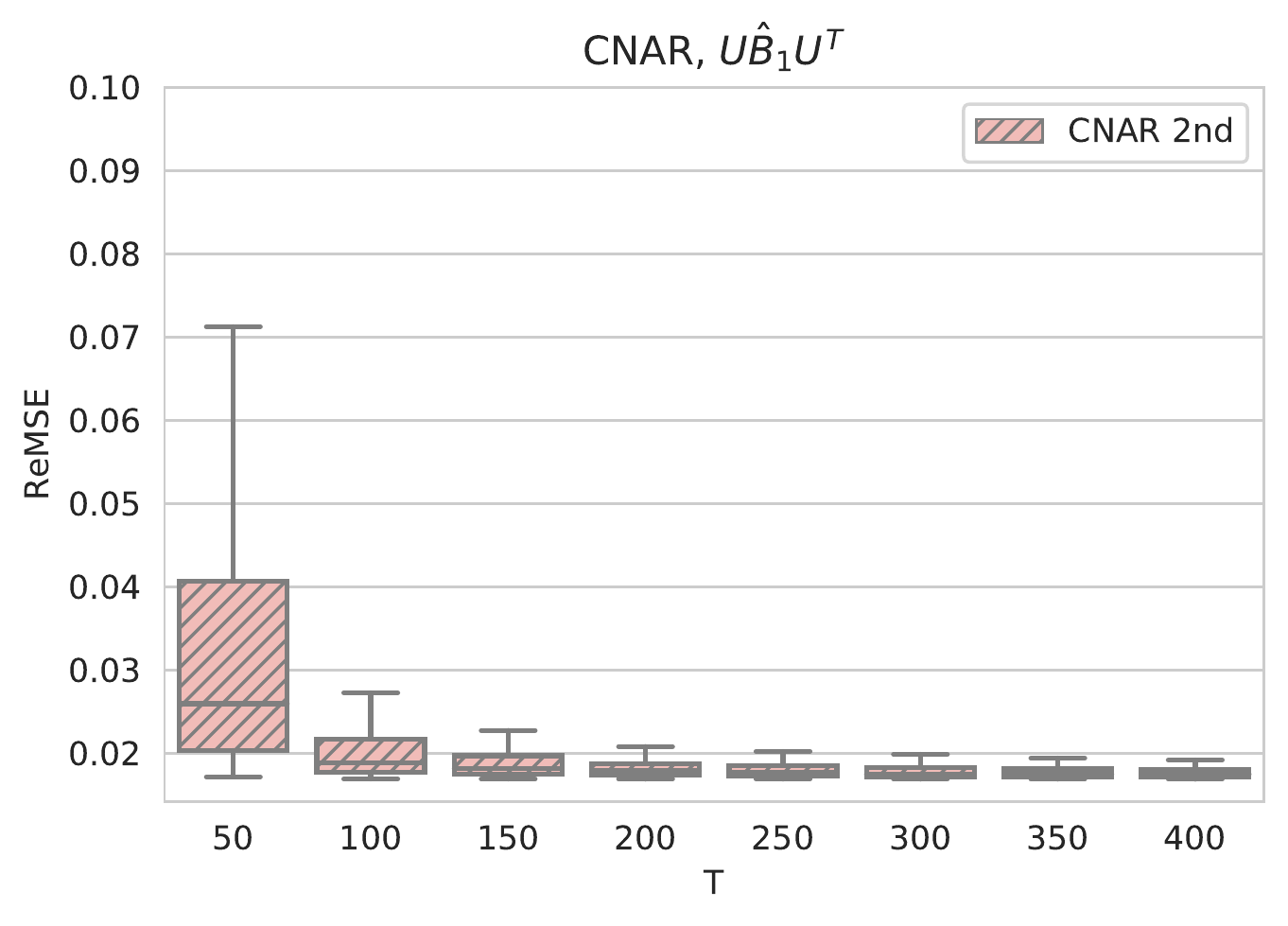}
        \end{subfigure}
        \hspace{3ex}
        \begin{subfigure}[b]{0.35\textwidth}
            \centering
            \includegraphics[width=\linewidth]{figs/cnar-other-cnar-N-400-K-2-pred-0-v2.pdf}
        \end{subfigure}
        \caption{Example 3 ($K=2$ and $N=400$) box plots of estimated network AR coefficient $\bU\bB_1\bU^\top$ and 1-step prediction $y_{T+1}$ by NAR (first row) and CNAR (second row), respectively.
        }
        \label{fig:example-3-boxplot}
    \end{figure}

    Figure \ref{fig:example-3-boxplot} shows, for Example 3, the box plots of estimated network AR coefficient $\bU\bB_1\bU^\top$ and 1-step prediction $y_{T+1}$ by NAR (first row) and CNAR (second row), respectively.
    Under this setting with a different network generative model, CNAR still has much smaller  estimation and prediction error compared to the NAR model.
\end{enumerate}

\subsection{More Tables}

Table \ref{table:sbm-cnar} compares the performance of CNAR and NAR in Example 1 with different numbers of network communities $K\in\braces{2,4,8}$, network sizes $N\in\braces{200,400,800}$ and time series length $T=100,200,400$.
The performance of CNAR is better than that of NAR under all settings.
For the CNAR model, it can be observed that the errors reduce when $T$ or $N$ with given number of
communities $K$, which corroborates with the theoretical findings.

\begin{table}[htpb!]
    \caption{Means and standard deviations in parantheses (all values $\times 100$) of the ReMSEs of coefficient estimators and time series predictors of CNAR and NAR for $K\in\braces{2,4,8}$, $N\in\braces{200,400,800}$ and $T=100,200,400$.} \label{table:sbm-cnar}
    \centering

    \begin{subtable}[t]{\linewidth}
        \caption{Network community number $K = 2$.}
        \label{table:sbm-cnar-K-2}
        \centering
        \resizebox{\textwidth}{!}{%
            \begin{tabular}{c|c|ccc|ccc|ccc}
                \hline
                \multicolumn{2}{c|}{N} & \multicolumn{3}{c|}{200} & \multicolumn{3}{c|}{400} & \multicolumn{3}{c}{800} \\ \hline
                \multicolumn{2}{c|}{T} & 100 & 200 & 400 & 100 & 200 & 400 & 100 & 200 & 400 \\ \hline
                \multirow{5}{*}{CNAR} & $\bU\bB_1\bU$ & 1.028(0.826) & 0.692(0.235) & 0.573(0.109) & 0.671(0.64) & 0.378(0.175) & 0.282(0.064) & 0.423(0.574) & 0.188(0.111) & 0.132(0.038) \\
                & $\beta_2$ & 0.087(0.115) & 0.047(0.066) & 0.028(0.037) & 0.062(0.086) & 0.022(0.028) & 0.011(0.016) & 0.028(0.039) & 0.012(0.019) & 0.005(0.008) \\
                & $\gamma$ & 0.075(0.049) & 0.033(0.019) & 0.017(0.011) & 0.039(0.027) & 0.02(0.013) & 0.008(0.005) & 0.023(0.019) & 0.009(0.006) & 0.004(0.003) \\
                & $\bLam$ & 0.393(0.186) & 0.16(0.051) & 0.067(0.013) & 0.436(0.183) & 0.168(0.042) & 0.073(0.012) & 0.436(0.233) & 0.168(0.052) & 0.075(0.013) \\ \cline{2-11}
                & $\hat \by_{T+1}$ & 0.222(0.199) & 0.158(0.125) & 0.123(0.093) & 0.133(0.143) & 0.068(0.054) & 0.055(0.042) & 0.068(0.065) & 0.036(0.033) & 0.024(0.018) \\ \hline
                \multirow{5}{*}{NAR} & $\bU\bB_1\bU$ & 79.304(0.95) & 78.978(0.442) & 78.84(0.267) & 79.44(1.129) & 78.988(0.391) & 78.858(0.206) & 79.263(0.945) & 78.901(0.358) & 78.768(0.17) \\
                & $\beta_2$ & 1.18(1.617) & 0.609(0.87) & 0.323(0.417) & 1.012(1.341) & 0.408(0.534) & 0.197(0.238) & 0.84(1.137) & 0.427(0.562) & 0.29(0.392) \\
                & $\gamma$ & 0.128(0.09) & 0.063(0.039) & 0.034(0.024) & 0.062(0.045) & 0.029(0.019) & 0.013(0.008) & 0.049(0.043) & 0.026(0.019) & 0.016(0.01) \\
                & $\bLam$ & 131.969(33.141) & 135.776(32.599) & 132.931(29.923) & 147.165(47.26) & 162.518(46.868) & 149.743(47.375) & 148.61(43.089) & 152.632(43.682) & 148.676(42.989) \\ \cline{2-11}
                & $\hat \by_{T+1}$ & 29.393(22.326) & 29.081(22.5) & 26.765(20.776) & 25.45(18.84) & 24.625(19.555) & 24.977(19.662) & 23.822(19.763) & 26.169(19.523) & 22.901(18.158) \\ \hline
            \end{tabular}%
        }
    \end{subtable}

    \rule{0pt}{1ex}

    \begin{subtable}[t]{\linewidth}
        \caption{Network community number $K = 4$.}
        \label{table:sbm-cnar-K-4}
        \centering
        \resizebox{\textwidth}{!}{%
            \begin{tabular}{c|c|ccc|ccc|ccc}
                \hline
                \multicolumn{2}{c|}{N} & \multicolumn{3}{c|}{200} & \multicolumn{3}{c|}{400} & \multicolumn{3}{c}{800} \\ \hline
                \multicolumn{2}{c|}{T} & 100 & 200 & 400 & 100 & 200 & 400 & 100 & 200 & 400 \\ \hline
                \multirow{5}{*}{CNAR} & $\bU\bB_1\bU$ & 6.438(4.782) & 3.699(1.463) & 2.781(0.67) & 12.215(13.866) & 3.751(3.269) & 2.287(1.191) & 6.715(7.052) & 2.808(3.018) & 1.44(0.751) \\
                & $\beta_2$ & 0.501(0.527) & 0.459(0.348) & 0.458(0.276) & 0.242(0.286) & 0.166(0.153) & 0.148(0.099) & 0.058(0.066) & 0.036(0.042) & 0.033(0.029) \\
                & $\gamma$ & 0.083(0.053) & 0.034(0.021) & 0.018(0.012) & 0.047(0.037) & 0.021(0.013) & 0.009(0.006) & 0.027(0.024) & 0.01(0.006) & 0.004(0.003) \\
                & $\bLam$ & 0.579(0.299) & 0.221(0.093) & 0.087(0.022) & 0.584(0.274) & 0.203(0.073) & 0.081(0.019) & 0.547(0.296) & 0.204(0.087) & 0.086(0.022) \\ \cline{2-11}
                & $\hat y_{T+1}$ & 0.55(0.321) & 0.466(0.205) & 0.435(0.194) & 0.328(0.363) & 0.24(0.088) & 0.225(0.089) & 0.146(0.118) & 0.11(0.037) & 0.098(0.038)) \\ \hline
                NAR & $\bU\bB_1\bU$ & 118.235(11.852) & 120.618(7.064) & 120.764(5.263) & 117.957(10.267) & 118.787(6.804) & 118.434(3.579) & 117.439(11.364) & 118.106(7.028) & 119.381(4.248) \\
                & $\beta_2$ & 125.762(62.419) & 163.337(48.24) & 177.707(36.655) & 146.252(69.925) & 167.32(48.436) & 182.017(35.401) & 133.572(62.174) & 152.431(44.752) & 168.377(33.508) \\
                & $\gamma$ & 0.626(0.529) & 0.672(0.366) & 0.7(0.342) & 0.743(0.687) & 0.75(0.462) & 0.77(0.32) & 0.476(0.493) & 0.498(0.413) & 0.549(0.288) \\
                & $\bLam$ & 154.012(44.686) & 165.392(40.281) & 173.069(37.539) & 140.849(44.641) & 141.83(42.436) & 147.844(43.091) & 149.635(45.276) & 148.763(41.65) & 146.915(42.255) \\ \cline{2-11}
                & $\hat \by_{T+1}$ & 70.517(79.946) & 78.527(77.728) & 81.192(93.785) & 74.014(82.83) & 76.315(80.286) & 96.753(105.348) & 61.656(61.865) & 64.156(62.587) & 80.163(84.228) \\ \hline
            \end{tabular}%
        }
    \end{subtable}%

    \rule{0pt}{1ex}

    \begin{subtable}[t]{\linewidth}
        \caption{Network community number $K = 8$.}
        \label{table:sbm-cnar-K-4}
        \centering
        \resizebox{\textwidth}{!}{%
            \begin{tabular}{cc|ccc|ccc|ccc}
                \hline
                \multicolumn{2}{c|}{N} & \multicolumn{3}{c|}{200} & \multicolumn{3}{c|}{400} & \multicolumn{3}{c}{800} \\ \hline
                \multicolumn{2}{c|}{T} & 100 & 200 & 400 & 100 & 200 & 400 & 100 & 200 & 400 \\ \hline
                \multicolumn{1}{c|}{\multirow{5}{*}{CNAR}} & $\bU\bB_1\bU$ & 34.058(12.94) & 20.223(4.597) & 14.838(1.848) & 47.738(52.773) & 16.928(5.854) & 10.103(1.622) & 67.223(72.426) & 16.054(8.291) & 7.117(1.736) \\
                \multicolumn{1}{c|}{} & $\beta_2$ & 7.882(5.952) & 8.632(4.437) & 8.414(3.188) & 1.624(1.528) & 1.314(0.876) & 1.27(0.564) & 0.569(0.526) & 0.522(0.368) & 0.549(0.243) \\
                \multicolumn{1}{c|}{} & $\gamma$ & 0.096(0.068) & 0.042(0.025) & 0.021(0.013) & 0.052(0.042) & 0.023(0.014) & 0.009(0.006) & 0.033(0.029) & 0.011(0.007) & 0.005(0.003) \\
                \multicolumn{1}{c|}{} & $\bLam$ & 0.97(0.55) & 0.338(0.16) & 0.119(0.039) & 1.032(0.587) & 0.315(0.14) & 0.108(0.039) & 0.998(0.543) & 0.307(0.152) & 0.113(0.039) \\ \cline{2-11}
                \multicolumn{1}{c|}{} & $\hat y_{T+1}$ & 3.302(2.062) & 3.285(1.791) & 3.086(1.99) & 1.432(1.539) & 1.102(0.525) & 1.131(0.625) & 0.679(0.355) & 0.531(0.198) & 0.513(0.206) \\ \hline
                \multicolumn{1}{c|}{\multirow{5}{*}{NAR}} & $\bU\bB_1\bU$ & 112.158(6.815) & 115.272(5.37) & 115.143(3.672) & 111.769(7.856) & 113.367(5.904) & 113.878(3.362) & 112.965(9.235) & 113.947(5.687) & 114.929(3.282) \\
                \multicolumn{1}{c|}{} & $\beta_2$ & 162.078(78.495) & 194.58(60.394) & 213.942(47.352) & 128.507(84.004) & 170.411(73.947) & 194.52(49.671) & 148.284(89.314) & 185.514(68.186) & 228.376(49.852) \\
                \multicolumn{1}{c|}{} & $\gamma$ & 0.571(0.471) & 0.576(0.335) & 0.614(0.314) & 0.386(0.376) & 0.329(0.264) & 0.328(0.178) & 0.443(0.447) & 0.49(0.361) & 0.637(0.301) \\
                \multicolumn{1}{c|}{} & $\bLam$ & 173.495(39.55) & 183.232(29.815) & 186.471(25.941) & 159.843(41.978) & 163.451(41.729) & 165.72(40.3) & 158.264(41.08) & 157.368(44.235) & 153.193(44.345) \\ \cline{2-11}
                \multicolumn{1}{c|}{} & $\hat \by_{T+1}$ & 78.506(80.632) & 89.195(88.442) & 85.424(90.699) & 79.499(81.137) & 86.888(89.941) & 113.873(114.174) & 63.753(64.996) & 73.894(74.438) & 95.918(101.764) \\ \hline
            \end{tabular}%
        }
    \end{subtable}%
\end{table}
\end{appendices}

\end{document}